\newcommand{\vValue}{\psi}
\newcommand{\indicator}[1]{\textbf{1}_{\left[{#1}\right]}}
\newcommand{\Unif}{\mathcal{U}}
\DeclareMathOperator{\argmax}{argmax}
\newcommand{\shadingFunc}{\beta}
\newcommand{\Expb}[1]{\mathbb{E}\left(#1\right)}
\newcommand{\eps}{\epsilon}
\newcommand{\trsp}{'}
\newtheorem{lemma}{Lemma}
\newtheorem{corollary}{Corollary}
\newtheorem{theorem}{Theorem}
\icmltitlerunning{Learning to bid in revenue-maximizing auctions}
\begin{document}

\twocolumn[
\icmltitle{Learning to bid in revenue-maximizing auctions}



\icmlsetsymbol{equal}{*}

\begin{icmlauthorlist}
\icmlauthor{Thomas Nedelec}{criteo,ens}
\icmlauthor{Noureddine El Karoui}{berkeley,criteo}
\icmlauthor{Vianney Perchet}{ens,criteo}
\end{icmlauthorlist}

\icmlaffiliation{criteo}{Criteo AI Lab}
\icmlaffiliation{ens}{CMLA, ENS Paris Saclay}
\icmlaffiliation{berkeley}{UC, Berkeley}

\icmlcorrespondingauthor{Thomas Nedelec}{nedelec@cmla.ens-cachan.fr}

\icmlkeywords{Machine Learning, Market design, Auctions, Optimization}

\vskip 0.3in
]



\printAffiliationsAndNotice{}  

\begin{abstract}
We consider the problem of the optimization of bidding strategies in prior-dependent revenue-maximizing auctions, when the seller fixes the reserve prices based on the bid distributions. Our study is done in the setting where one bidder is strategic. Using a variational approach, we study the complexity of the original objective and we introduce a relaxation of the objective functional in  order to use gradient descent methods. Our approach is simple, general and can be applied to various value distributions and revenue-maximizing mechanisms. The new strategies we derive yield  massive uplifts compared to the traditional truthfully bidding strategy. 
\end{abstract}

\section{Introduction}
Modern marketplaces like Uber, Amazon or Ebay enable sellers to fine-tune their selling mechanism by reusing their large number of past interactions with consumers. In the online advertising or the electricity markets, billions of auctions are occurring everyday between the same bidders and  sellers. Based on the data gathered, different approaches  learn complex mechanisms maximizing the seller revenue  \cite{conitzer2002complexity,OstSch11,paes2016field,Golrezaei2017}.

Most of the literature has focused on the auctioneer side \cite{milgrom2018artificial}. Algorithms focused on the bidder's standpoint to enable them to be strategic against any smart data-driven selling mechanisms are lacking. These algorithms should ideally strengthen the balance of power driving the relationship between buyers and sellers. Our main objective is to exhibit simple robust algorithmic procedures that take advantage of various data-dependent revenue-maximizing mechanisms. This represents a big step forward in understanding possible strategic behaviors in revenue maximizing auctions. This is a new argument supporting the Wilson doctrine \cite{Wil87} claiming that data-dependent revenue maximizing algorithms are not robust to strategic bidders.
\subsection{Framework}
In the early stage of the market design literature (see, e.g., \citet{Myerson81}),  a typical underlying assumption is that the bidders' value distributions were commonly known to the seller and other bidders. This can be justified if different group of bidders with the same value distribution are interacting successively with one seller. In the  aforementioned modern applications, the same bidders have billions of interactions everyday with the seller. Even if the latter does not know the value distribution beforehand, it might use in many cases the past bid distributions as proxies of  value distribution.  

Several mechanisms  based on the value distribution of  bidders have already been introduced. We will focus on the lazy second price auction with personalized reserve price \citep{paes2016field}, the Myerson auction \citep{Myerson81}, the eager version of the second price auction and the boosted second price auction \citep{Golrezaei2017}.  When repeating these  auctions (every day, or every milli-second, depending on the context) and if the bidder is \textit{myopic}, i.e optimizing per stage and not long-term revenue, it is optimal to bid truthfully at each auction. So with myopic bidders, bids and values  have  the same  distribution and the seller can  design optimally the mechanism based on the former. 

Non-myopic bidders  optimize their \textit{long-term expected utility} taking into account that their current strategy will imply a certain mechanism (for instance a specific reserve price) in the future. More precisely, we will consider the following steady state analysis. Assume the valuations of a bidder $v_i \in \mathbb{R}$ are drawn from a specific distribution $F_i$;  a bidding strategy is a mapping $\beta_i$ from $\mathbb{R}$ into  $\mathbb{R}$ that indicates  the actual bid $B_i=\beta_i(v_i)$ when  the value is $v_i$. As a consequence, the distribution of bids $F_{B_i}$ is  the push-forward of $F_i$ by $\beta_i$. In the steady state, the seller uses the distributions of bids $F_{B_i}$ to choose a specific auction mechanism $\mathcal{M}(F_{B_i})$  among a given class of mechanisms $\mathcal{M}$. The objective of \textit{a long-term strategic bidder} is to find her strategy $\beta_i$ that maximizes her expected utility when $v_i \sim F_i$, she bids $\beta_i(v_i)$ and the induced mechanism is $\mathcal{M}(F_{B_i})$. This steady-state  objective is particularly relevant in modern applications as most of the data-driven selling mechanisms are using large batches of bids as examples to update their mechanism.

In terms of game theory, these interactions are a game between the seller - whose strategy is to pick a mechanism design that  maps bid distributions to reserve prices - and the bidders - who chose bidding strategies. Our overarching objective is to derive the best-response, for a given bidder $i$,  to  the strategy of the seller (i.e., a given mechanism) and the strategies of the other bidders (i.e., their bid distributions).

\subsection{Contributions}

Our main contributions are the following. We first introduce the optimization problem that strategic bidders are facing when the seller is optimizing personalized reserve prices based on their bid distributions. A straightforward optimization can fail because the objective is  discontinuous as a function of the bidding strategy. 

To circumvent this issue, we introduce a new relaxation of the problem which is stable to local perturbations of the objective function and computationally tractable and efficient. We numerically optimize this new objective through a simple neural network and get very significant improvements in bidder utility compared to truthful bidding. We also provide a theoretical analysis of \textsl{thresholded strategies} (introduced in \citet{nedelec2018thresholding}) and show their (local) optimality as improvements of bidding strategies with non-zero reserve value. 

For the Myerson auction, the strategies  learned by the model can be independently proved to be optimal. We apply the approach to other auction settings such as boosted second price or eager second price with monopoly price. We report massive uplifts compared to the traditional truthful strategy advocated in all these settings. Our simple approach can be plugged in any modern bidding algorithms learning distribution of the highest bid of the competition and we test it on other classes of mechanism without any known closed form optimal bidding strategies. We finally provide the code in PyTorch that has been used to run the different experiments. This approach opens avenues of research for designing good bidding strategies in many data-driven revenue-maximizing auctions.

\subsection{Related work}

Starting with the seminal work of \citet{Myerson81}, a rich line of work indicates the type of auctions that is revenue-maximizing for the seller. In the case of symmetric bidders \cite{Myerson81}, one revenue maximizing auction is a second price auction with a reserve price equal to the monopoly price, i.e, the price $r$ that maximizes $r(1-F(r))$. However, in most applications, the symmetric assumption is not satisfied \cite{Golrezaei2017}. In the asymmetric case, the Myerson auction is optimal \cite{Myerson81} but is difficult to implement in practice \cite{morgenstern2015pseudo}. In this case, a second price auction with a well-chosen vector of reserve prices guarantees at least one-half of the optimal revenue \cite{Hartline2009}. 
 
In modern markets, some bidders are myopic  simply because truthful bidding is a simple strategy to implement. Receiving truthful bid enables sellers to design various revenue maximizing auctions. \cite{conitzer2002complexity} has therefore been interested in the automatic mechanism design that fine tunes mechanism based on some examples of bids. This work was extended recently in \cite{dutting2017optimal} with the use of deep learning. In \cite{OstSch11,medina2014learning,paes2016field}, it is shown specifically how to learn the optimal reserve prices in the lazy second price auction. This practice was theoretically addressed by \cite{cole2014sample,huang2018making,devanur2016sample} looking at the sample complexity of a large class of auctions assuming an oracle offering iid examples of the value distribution.

However, it is quite intuitive that non-myopic bidders should not bid truthfully. Robustness to strategic bidders has been studied in \cite{balseiro2017dynamic,kanoria2017dynamic,epasto2018incentive}. A potential limitation of this type of approach is that it is either assumed that all bidders have the same value distribution (or up to $\varepsilon$ for some specific metric on distributions) or that there is a very large number of bidders and a global mechanism designed so that any of them has no incentive to bid untruthfully. In \cite{ashlagi2016sequential}, an involved mechanism was designed that keeps the incentive compatibility property even if the seller is learning on former bids of the bidders. 

None of these papers have exhibited  optimal strategies that can be used when the seller is optimizing her mechanism based on past bids. This strategic behavior has been studied for posted price with one bidder and one seller \cite{mohri2015revenue}. An independent line of work has focused on learning to bid when the value is not known to the bidders \cite{weed2016online,feng2018learning}. Some Bayes-Nash equilibria corresponding to games where bidders can choose their bid distribution were designed  \cite{tang2016manipulate,abeille2018explicit} with some derivations of seller revenue and bidders utility at these equilibria. However, no strategies corresponding to these equilibria were provided in the general case. Our work is finally strongly related to \cite{nedelec2018thresholding} where a new class of shading strategies for second price auctions with personalized reserve price is proposed. Our new optimization pipeline is very general and enables bidders to learn good bidding strategies in multiple settings and for any value distribution. 
 
\section{The bidder's optimization problem }
We introduce in this section the optimization problem, starting with the lazy second price auction with personalized reserve prices (formalized below).
\subsection{Notations and setting}
To describe precisely our approach, we use the traditional setting of auction theory (see e.g. \citet{krishna2009auction}). Recall that $F_{i}$ is the value distribution of bidder $i$ and $\beta_i : \mathbb{R} \to \mathbb{R}$ her strategy that maps values to bids. The corresponding distribution of bids is then $F_{B_i} = \beta_i \sharp F_i$, the push-forward of $F_i$ w.r.t.\ $\beta_i$. In the steady-state, we assume that the seller has the perfect knowledge of each bid distribution $F_{B_i}$.  Notice that we have implicitly identified the distribution $F_i$ (resp.\ $F_{B_i}$) with its cumulative distribution function (cdf) and use both terms exchangeably. We use $f_i$ (resp.\ $f_{B_i}$) for the corresponding probability density function (pdf).

 For the sake of simplicity,  let us first consider a lazy second price auction \citet{krishna2009auction}.  We recall that in this auction each bidder has a personalized reserve price.  The item is attributed to the highest bidder, if she clears her reserve price, and not attributed otherwise; the winner then pays the maximum between the second highest bid and her reserve price.  It is known that the optimal reserve price of bidder $i$ is  her monopoly price equal to  $\argmax_r r(1-F_{B_i}(r))$, or equivalently\footnote{at least for regular distributions, i.e., when $\psi$ is non-decreasing} to $\psi_{B_i}^{-1}(0)$, where $\psi_{B_i}$ is the usual  virtual value function defined as
\begin{equation*}
\psi_{B_i}(b) = b - \frac{1-F_{B_i}(b)}{f_{B_i}(b)}\;.
\end{equation*}
As a consequence, it is natural to assume that the strategy of bidder $i$ does not impact the strategy of other bidders (that can be either myopic or not) and from now on, we assume that bids are independent.

\subsection{A variational approach}
A fundamental result in auction theory is the Myerson lemma  \cite{Myerson81}. It expresses the expected payment of a bidder depending on her virtual value and the value distribution of the competition.  An important notation is $G_i$, the cdf of the maximum bid of  players other than~$i$; obviously, if the other bidders are  truthful, $G_i$ is the distribution of the maximum value of the other bidders.

\begin{lemma}[Integrated version of the Myerson lemma]\label{Myerson_lemma}
In a lazy second price auction with personalized reserve price $r_i$,  the payment of bidder $i$ with continuous strategy $\beta_i$ is
\begin{equation*}
\Pi(\beta_i) = \mathbb{E}_{B_i \sim F_{B_i}}\bigg(\psi_{B_i}(B_i)G_i(B_i)\textbf{1}(B_i \geq r_i)\bigg)\;.
\end{equation*}
\end{lemma}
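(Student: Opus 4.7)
The plan is to compute $\Pi(\beta_i)$ directly from its definition as an expected payment and then reshape it into the virtual-value form through two successive integrations by parts. In the lazy second price auction, bidder $i$ pays only when she clears both her reserve $r_i$ and the highest competing bid $M_i := \max_{j\neq i} B_j$, in which case she pays $\max(M_i, r_i)$. So I would start from
\begin{equation*}
\Pi(\beta_i) = \mathbb{E}\bigl[\max(M_i, r_i)\,\mathbf{1}(B_i \ge r_i)\,\mathbf{1}(B_i \ge M_i)\bigr],
\end{equation*}
and use the independence of $B_i$ and $M_i$ (justified by the assumption at the end of Section 2.1 that bids are independent), together with $\mathbf{1}(B_i\ge r_i)\mathbf{1}(B_i\ge M_i)=\mathbf{1}(B_i\ge \max(M_i,r_i))$.

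Next I would condition on $B_i = b$. For $b \ge r_i$, the inner expectation over $M_i$ becomes $\int_0^b \max(m, r_i)\,dG_i(m)$, which splits as $r_i G_i(r_i) + \int_{r_i}^b m\,dG_i(m)$. A first integration by parts yields $b G_i(b) - \int_{r_i}^b G_i(m)\,dm$. Thus
\begin{equation*}
\Pi(\beta_i) = \int_{r_i}^{\infty}\!\Bigl(b\,G_i(b) - \int_{r_i}^b G_i(m)\,dm\Bigr) f_{B_i}(b)\,db.
\end{equation*}

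For the second step, I would integrate the double-integral term by parts in $b$, choosing the antiderivative $-(1-F_{B_i}(b))$ of $f_{B_i}(b)$ so that the boundary contribution at $b = r_i$ vanishes (the inner integral is then zero). The outer boundary term at $b\to\infty$ vanishes under the standard integrability assumption on $F_{B_i}$ (this is the step that requires a tail condition, e.g.\ a finite first moment). After this rearrangement the $-G_i(b)(1-F_{B_i}(b))$ term combines with $b\,G_i(b)f_{B_i}(b)$ to produce exactly the integrand $\psi_{B_i}(b)\,G_i(b)\,f_{B_i}(b)$, giving the claimed expectation representation.

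The main obstacle is mostly bookkeeping rather than conceptual: one has to be careful with the $\max(M_i, r_i)$ term when decomposing the region of integration, and one needs the tail condition at infinity for the integration-by-parts boundary to vanish. Continuity of $\beta_i$ (so that $F_{B_i}$ has a well-defined density $f_{B_i}$ on the relevant support) is the regularity hypothesis under which $\psi_{B_i}$ is well-defined, and it suffices here; no monotonicity of $\psi_{B_i}$ is needed for this integrated form, since we are not yet invoking the virtual-welfare optimization identity.
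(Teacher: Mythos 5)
Your proof is correct and follows essentially the same route as the paper's: both start from the standard payment form $\mathbb{E}\bigl[\max(M_i,r_i)\mathbf{1}(B_i\geq\max(M_i,r_i))\bigr]$ and convert it via Fubini/integration by parts into the virtual-value representation; the paper merely sketches this while you supply the bookkeeping (including the tail condition for the boundary term), which is a welcome addition but not a different argument.
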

\begin{proof}
The proof is similar to the original one \cite{Myerson81}, see also \cite{krishna2009auction}, so we do not spell it out. 
It consists of using Fubini's theorem and integration by parts to transform the standard form of the seller revenue, i.e.
$$
\mathbb{E}_{B_i\sim F_{B_i},X_j\sim F_{B_j}}\bigg(\max_{j\neq i}(B_j,r)\indicator{B_i\geq \max_{j\neq i}(B_j,r)}\bigg)
$$
into the above equation.  It then suffices to work along the lines mentioned above with  $Y_i=\max_{j\neq i}B_j$ and realize that $i$'s expected payment can be written as
$\displaystyle 
\mathbb{E}_{B_i\sim F_{B_i},Y_i\sim G_i}\bigg(\max(Y_i,r)\indicator{B_i\geq \max(Y_i,r)}\bigg)\;.
$
\end{proof}

In lazy second price auction,  the seller chooses as reserve price the monopoly price corresponding to the bid distribution of bidder $i$. In this case, Lemma \ref{Myerson_lemma} implies that the expected payment of bidder $i$ is equal to 
\begin{equation*}
\Pi(\beta_i) = \mathbb{E}_{B\sim F_{B_i}}\bigg(\psi_{B_i}(B)G_i(B)\textbf{1}(B \geq \psi_{B_i}^{-1}(0))\bigg)\;.
\end{equation*}
In order to simplify the computation of the expectation and remove the dependence on $\beta_i$, this expected payment can be rewritten in the space of values, by introducing
\begin{equation*}
h_{\beta_i}(x) \triangleq \psi_{F_{B_i}}(\beta_i(x)) \; ,
\end{equation*}
and noting the equivalent following formulation
\begin{equation*}
\Pi(\beta_i) = \mathbb{E}_{X_i \sim F_{i}}\bigg(h_{\beta_i}(X_i)G_i(\beta_i(X_i))\textbf{1}(X_i \geq x_{\beta_i})\bigg)\;,
\end{equation*}
where $x_{\beta_i} =  h_{\beta_i}^{-1}(0)$ when $h_{\beta_i}$ is  increasing. We call it  \emph{the reserve value}, as it is the smallest value above which the seller accepts all bids from bidder $i$. 

The expected utility can be derived as a function of $\beta_i$ as 
\begin{equation}\label{equ:utility}
U(\beta_i) = \mathbb{E}_{X_i \sim F_{i}}\bigg((X_i-h_{\beta_i}(X_i))G_i(\beta(X_i))\textbf{1}(X_i \geq x_{\beta_i})\bigg)\;.
\end{equation}
Finally, we remark that if $\beta_i$ is increasing and differentiable, $h_{\beta_i}$ verifies a simple first order differential equation.
\begin{lemma}\label{definition_psi}
Suppose  $\beta_i$ is increasing and differentiable then
\begin{equation}\label{eq:ODEPhiG}
h_{\beta_i}(x_i) = \psi_{F_{B_i}}(\beta_i(x))  =\beta_i(x)-\beta_i'(x)\frac{1-F_{i}(x)}{f_{i}(x)}\;.
\end{equation}
\end{lemma}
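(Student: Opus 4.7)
The plan is to unfold the definition of $\psi_{F_{B_i}}$ at the point $\beta_i(x)$ and then express the survival-to-density ratio of the bid distribution in terms of the value distribution via the push-forward relation. By definition,
\begin{equation*}
\psi_{F_{B_i}}(\beta_i(x)) = \beta_i(x) - \frac{1-F_{B_i}(\beta_i(x))}{f_{B_i}(\beta_i(x))},
\end{equation*}
so the task reduces to showing
\begin{equation*}
\frac{1-F_{B_i}(\beta_i(x))}{f_{B_i}(\beta_i(x))} = \beta_i'(x)\,\frac{1-F_i(x)}{f_i(x)}.
\end{equation*}

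First, since $\beta_i$ is (strictly) increasing, the event $\{B_i \le \beta_i(x)\}$ is identical to $\{X_i \le x\}$, which immediately yields $F_{B_i}(\beta_i(x)) = F_i(x)$, and hence equality of the numerators $1 - F_{B_i}(\beta_i(x)) = 1 - F_i(x)$. Next, differentiating this identity with respect to $x$ (using the chain rule, which is justified by the differentiability of $\beta_i$ and the assumed smoothness of $F_i$), I obtain $f_{B_i}(\beta_i(x))\,\beta_i'(x) = f_i(x)$, i.e.\ the standard change-of-variables formula for the push-forward density. Solving for $f_{B_i}(\beta_i(x))$ and substituting both expressions into the ratio gives the claimed identity, and plugging back into the definition of $\psi_{F_{B_i}}$ completes the proof.

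The only subtle point is justifying the differentiation step: strictly speaking I need $\beta_i'(x) > 0$ on the support so that $f_{B_i}$ is well-defined at $\beta_i(x)$ and the division makes sense. Under the standing assumption that $\beta_i$ is increasing and differentiable (read as strictly increasing with positive derivative on the support of $F_i$) this is immediate; if one only has monotone-non-decreasing with possible flat pieces, the identity holds only almost everywhere with respect to $F_i$, which is enough for the expectation in \eqref{equ:utility}. This is really the only obstacle; everything else is a direct algebraic manipulation once the push-forward relations between $(F_i, f_i)$ and $(F_{B_i}, f_{B_i})$ are in hand.
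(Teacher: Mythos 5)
Your proof is correct and follows essentially the same route as the paper's: both unfold the definition of $\psi_{F_{B_i}}$ at $b=\beta_i(x)$ and use the push-forward relations $F_{B_i}(\beta_i(x))=F_i(x)$ and $f_{B_i}(\beta_i(x))\,\beta_i'(x)=f_i(x)$ (the paper states these directly via $\beta_i^{-1}$, you derive the second by differentiating the first, which is the same computation). Your remark on needing $\beta_i'>0$ on the support is a fair point of care that the paper glosses over.
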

\begin{proof}
$\psi_{F_{B_i}}(b) = b - \frac{1-F_{B_i}(b)}{f_{B_i}(b)}$ with $F_{B_i}(b) = F_{i}(\beta_i^{-1}(b))$ and $f_{B_i}(b) = f_{i}(\beta_i^{-1}(b)/\beta_i'(\beta_i^{-1}(b)$. Then, 
$h_{\beta_i}(x) = \psi_{B_i}(\beta_i(x)) = \beta_i(X) - \beta_i'(X)\frac{1-F_{i}(X)}{f_{i}(X)}\;.$
\end{proof}


If we consider only monotonically increasing differentiable strategies, and we denote by  $\mathcal{I}$ the class of  such functions, the problem of the strategic bidder is therefore to solve  $ \max_{\beta \in \mathcal{I}} U(\beta) $ with $U$ defined in Equation \eqref{equ:utility}. This equation is crucial, as it indicates that optimizing over bidding strategy can be reduced to finding a distribution with a well-specified virtual value $h(\cdot)$. A crucial difference between the long term vision and the classical, myopic (or one-shot)  auction theory is that bidders also maximize expected utility. They might therefore be willing to sometime over-bid (incurring a negative utility at some specific auctions) if this reduces their reserve price. Indeed, having a lower reserve price increases the revenue of many other auctions. Lose small to win big. This reasoning is possible as there exist multiple interactions between  bidders and seller, billions every day in the case of online advertising. 

\subsection{Discontinuity of the objective}

In the previous section, we assumed  the reserve value was defined as $h_{\beta_i}^{-1}(0)$, which is well defined only if $h_{\beta_i}$ is increasing. This condition is complicated to ensure as, for instance, restricting the  strategies to be increasing does not provide any guarantee on $h_{\beta_i}$. If the later is not increasing, then the function $r(1-F_{B_i}(r))$ that the seller maximizes might have several local optima, as illustrated with a specific  bid distribution  in Figure \ref{fig:utility_bidder}. We mention here that this distribution actually arises during our numerical optimization using first order splines as described in the next section. 

\begin{figure}[h!]
\center
 \begin{tabular}{c}
 \includegraphics[width=0.30\textwidth, height=0.19\textheight]{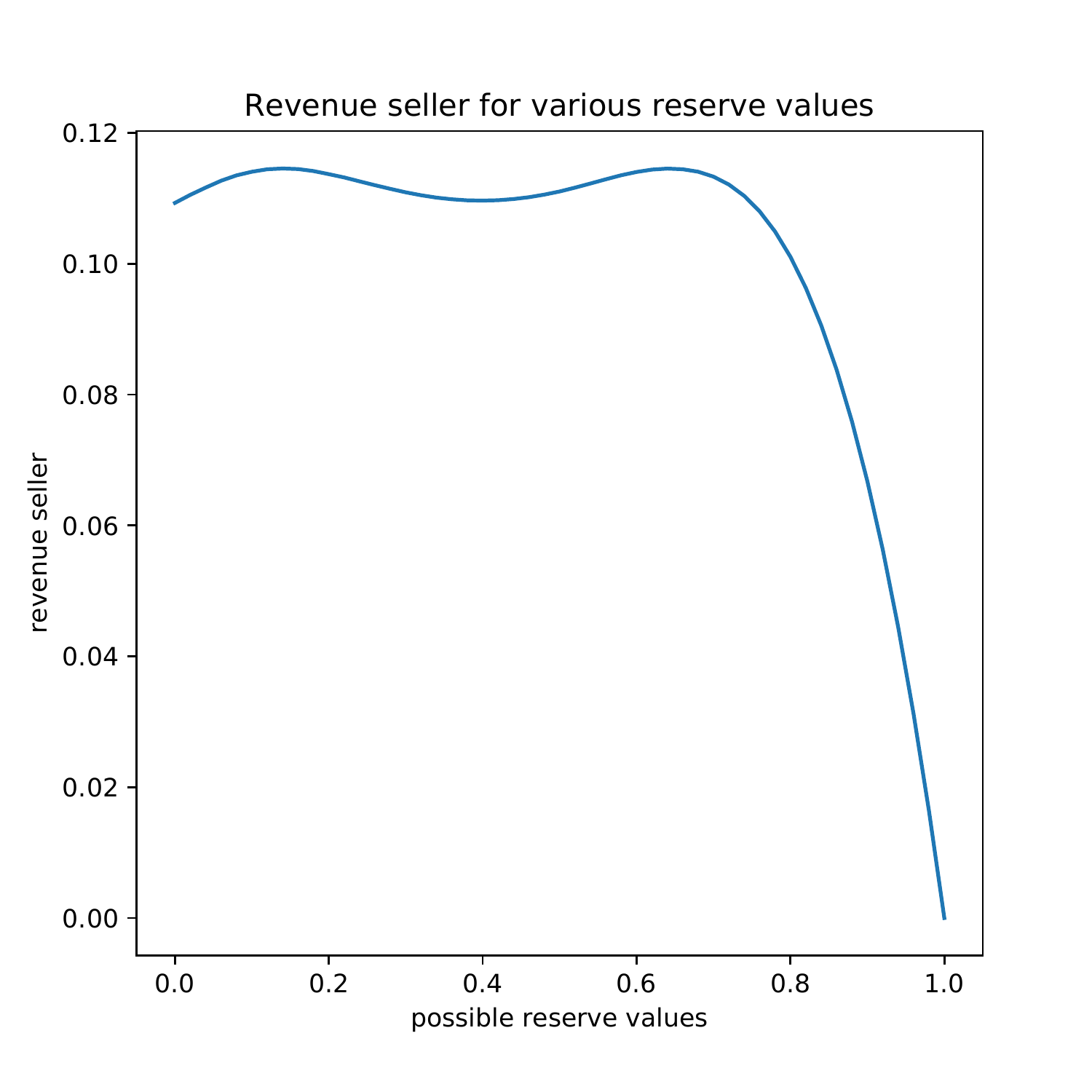}
  \end{tabular}
 \caption{\textbf{Revenue of the seller as a function of the reserve value.} This shape of revenue by running the first order spline method described in Section \ref{splines_section} . For this  distribution, there exists two local optima that are equivalent in terms of revenue for the seller but dramatically change the utility of the strategic bidder. }
 \label{fig:utility_bidder}
 \end{figure}

The fact that $r(1-F_{B_i}(r))$  is not always strictly concave implies that the set of maximizer is not continuous but only upper hemi-continuous; stated otherwise, the reserve value $x_{\beta_i}$ can ``jump'' from a small to high value with an arbitrarily small change in the bidding strategy. In the example of Figure \ref{fig:utility_bidder}, the reserve value switches from 0.18 to 0.58. As a consequence, the expected utility of the bidder, which is another function depending on $x_{\beta_i}$, might also jumps erratically. In the same example, the lower bound of integration increases from 0.18 to 0.58, so that the overall integral decreases  from 0.14 to 0.09. This discontinuity makes the optimization of the real objective difficult. 

\subsection{An attempt with first-order splines}
\label{splines_section}
A natural question is  whether the buyer can compute shading strategies  numerically. A first approach is to look back at the gradient of the bidder's utility in the direction of a certain function $\rho$, i.e.,  the directional derivative, that can be computed by elementary calculus.
and to look at shading function expressed in a specific basis as$$
\beta(x)=\sum_{i=1}^N c_i(\beta) f_i(x)\;,
$$
and try to optimize over $c_i$. It would be also  quite natural to do  an isotonic regression and optimize over non-decreasing functions directly; this approach is tackled later on. 

\paragraph{A natural basis} Splines (see e.g. \cite{htfElementsStatLearning} for a practical introduction) are a natural candidate for the function $f_i$'s. In particular, first order splines are piecewise continuous functions, hence evaluating derivatives is trivial and it is easy to account in the formula above for the finitely many discontinuities of the derivative that will arise. If $\xi_k$'s are given knots, first order splines are the functions $$
f_1=1, f_2(x)=x, f_{k+2}(x)=(x-\xi_k)_+=\max(x-\xi_k,0)\;.
$$
Higher order splines could of course also be used. 
\begin{lemma}
As described above, the optimal shading problem can be numerically approximated using steepest descent by a succession of linear programs, provided the non-decreasing constraint on $\beta$ can be written linearly in $c_i$. This is of course the case for 1st order spline.
\end{lemma}
\begin{proof}
After the function is expanded in a basis, the functional gradient  becomes a standard gradient, and the shading function can be improved with a steepest descent.  If the reserve value is not one of the knots, the gradient above is easy to compute:  each step of the optimization requires to solve a constrained LP to ensure that the solution is increasing. 

For 1st order splines,  the derivative is constant between knots, thus checking that $\beta'(\cdot)\geq 0$  amounts to check finitely many linear constraints and so is amenable to an LP. 
\end{proof}


The objective is not even continuous, though differentiable in a large part of the parameter space. The optimization problem is hard. In our experiments, we got significant improvement over bidding truthfully by using the above numerical method. However, we encountered the discontinuities of the optimization problem described above: our numerical optimizer got stuck at shading functions around which the reserve value was very unstable, which corresponds to revenue curves for the seller with several distant (approximate) local maxima: a small perturbation in function space does not induce much loss of the revenue on the seller side, but can have a huge impact on the reserve value and hence the buyer revenue. Note that in our numerical experiments we did not enforce the non-decreasing-constraint on $\beta$ but ended up with solutions that were non-decreasing. More details on this approach are provided in Appendix \ref{appendix_splines}.

This is precisely the reason why, in the next section, we introduce a relaxation of the problem that is easier to optimize and with the same solutions as our initial objective. We also change the class of shading functions we consider and use neural networks to fit them. Before we describe these experiments, we provide some theory for the problem of optimizing buyer revenue in lazy second price auctions. 

\section{Theory and a relaxation of the problem enabling the use of gradient descent}
\subsection{The family of optimal extensions of a strategy}
\label{theory}
In the context of lazy second price auctions, any increasing and continuous bidding strategy  $\beta$ whose reserve value $r$ is not 0 can be improved with $\beta^*(x)= \frac{\beta(r)(1-F(\underline{r}))}{1-F(x)}$ on $[0,r]$ where $\beta(\underline{r})=r$,  i.e.  by thresholding the virtual value below the current reserve value and keeping $\beta^*=\beta$ on $(r,+\infty)$.  Indeed, Lemma \ref{definition_psi} yields that $h_{\beta^*}(x)=0$ on $[0,\underline{r}]$ and $h_{\beta^*}=h_{\beta}$ elsewhere. So the seller is indifferent between setting the reserve price anywhere in $[0,\underline{r}]$ and we might assume she picks 0 (if she is \textsl{welfare benevolent}, or it is always possible to give an $\varepsilon$-incentive to pick 0, for $\varepsilon$ arbitrarily small).
According to Myerson's Lemma, the strategy $\beta^*$ generates the same payment as $\beta$, so the revenue of the seller coming from this bidder is unchanged. On the other hand, that bidder wins more auctions with this new strategy,  hence it improves her revenue and thus her expected utility.

In this subsection, we  address the question of whether the strategy $\beta^*$, which is simple and robust can be improved for the bidder. Our previous argument already shows that any  improvement would be a strategy with 0 reserve value. 

Differentiating Equation \eqref{eq:ODEPhiG} yields
$$
f(x)h_\beta(x)=(\beta(x)(F(x)-1))'\;.
$$
Let us denote by $r$ the current reserve price; we rewrite the family of bidding strategies $\beta$ with reserve value at 0 as elements of the following  constraint set:
\begin{gather*}
\Expb{\psi_B(\beta(X))G(\beta(X))\indicator{r_0\leq X\leq r}}\leq 0 \;, \forall 0\leq r_0 \leq r\;, \\
\Expb{\psi_B(\beta(X))G(\beta(X))\indicator{0\leq X\leq r}}=0\;.
\end{gather*}
For all those strategies, the seller revenue is maximal for the reserve value $r_{opt}=0$, and hence under the assumption of welfare benevolence, the seller will accept all  bids of the bidder. It is also clear that this set of constraints define all possible strategies with reserve value 0. 

The strategy $\beta$ (which is increasing and continuous, say) that maximizes the revenue of the bidder corresponds to
$$
\max_\beta \Expb{(X-\psi_B(\beta(X)))G(\beta(X))\indicator{X\geq 0}}
$$
under the constraints that 
\begin{gather*}
\mathfrak{g}_{r_0}(\beta)=\Expb{\psi_B(\beta(X))G(\beta(X))\indicator{r_0\leq X\leq r}}\leq 0 \\
\mathfrak{g}_{0}(\beta)=\Expb{\psi_B(\beta(X))G(\beta(X))\indicator{0\leq X\leq r}}=0\;.
\end{gather*}
Let us limit ourselves to not changing our strategy beyond $r$, e.g. by bidding truthfully beyond $r$. Then we effectively need to maximize 
$$
\max_\beta F(\beta)=\Expb{(X-h_\beta(X))G(\beta(X))\indicator{0\leq X\leq r}}\;.
$$
with the continuity constraints that $\beta(r)=r$. The constraints can be rewritten into 
\begin{align*}
\mathfrak{g}_{r_0}(\beta)&=-\Expb{\psi_B(\beta(X))G(\beta(X))\indicator{0\leq X\leq r_0}}
\\&=-\Expb{h_\beta(X)G(\beta(X))\indicator{0\leq X\leq r_0}}\;.
\end{align*}
along with $\mathfrak{g}_{r}(\beta)=0$. We call those strategies continuation strategies as they extend the bidding below the current reserve price/value.

\textbf{Remark~:} in this class of feasible strategies, the optimal reserve value for the seller is zero. So the discontinuities of the objective function in the broader class of strategies considered before, which stemmed from discontinuities of the reserve value as a function of the shading function, are not anymore problematic. 

The following theorem states one of our main results.
\begin{theorem}\label{thm:localAndGlobalOptimalityofThresholding}
Let  $F$, $1/(1-F)$ and $G$ be differentiable on $[0,r]$. Suppose  that the virtual value $\psi_F$ is such that  $\psi_F(x)\leq 0$ on $[0,r]$. We consider increasing shading functions $\beta$ on $[0,r]$ with $\beta(r)=r$. 

Thresholding, i.e. using $\beta^*(x)=r(1-F(r))/(1-F(x))$ for $0\leq x\leq r$ is locally optimal among continuation strategies for which $\beta$ is differentiable on $[0,r]$, provided $G(\beta^*(x))>0$ on $[0,r]$. It is also  locally optimal among $\beta$'s such that $\mathfrak{g}_{r}(\beta)$ is differentiable as function of $r$. 

Furthermore, if $r<1$ and $G(x)=\min(x,1)$, i.e. the competition's distribution is Uniform$[0,1]$, then thresholding is globally optimal among functions that are bounded by 1 and differentiable.
\end{theorem}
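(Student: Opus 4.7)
The plan is to exploit the fact that at $\beta^*$ one has $h_{\beta^*}\equiv 0$ on $[0,r]$, so every inequality $\mathfrak{g}_{r_0}(\beta^*)=0$ is active (not only the equality $\mathfrak{g}_r=0$), and to carry out a Karush--Kuhn--Tucker analysis for this infinite family of tight constraints. On the feasible set the objective simplifies to $F(\beta)=\int_0^r xG(\beta(x))f(x)\,dx$, so the whole difficulty shifts to constructing the multipliers. I would first write a smooth perturbation $\beta=\beta^*+\varepsilon\rho$ with $\rho(r)=0$ and, setting $T(r_0,\beta)=-\mathfrak{g}_{r_0}(\beta)$, compute directional derivatives at $\beta^*$. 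Using the identity $fh_\beta=(\beta(F-1))'$ together with the characteristic equation $(\beta^*)'(1-F)=\beta^*f$ coming from $h_{\beta^*}=0$, a single integration by parts yields
\begin{equation*}
T'[\rho](r_0)=\rho(0)G(\beta^*(0))-\rho(r_0)(1-F(r_0))G(\beta^*(r_0))+\int_0^{r_0}\rho(x)\beta^*(x)G'(\beta^*(x))f(x)\,dx,
\end{equation*}
and a first-order feasible direction is any $\rho$ with $T'[\rho](r_0)\geq 0$ on $[0,r]$ and $T'[\rho](r)=0$.

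I would then look for multipliers $\lambda(x)\geq 0$ on $[0,r)$ and $\lambda_r\in\mathbb{R}$ such that $F'[\rho]+\lambda_r T'[\rho](r)+\int_0^r\lambda(r_0)T'[\rho](r_0)\,dr_0=0$ for every admissible $\rho$. Using $F'[\rho]=\int_0^r xG'(\beta^*)f\rho\,dx-T'[\rho](r)$ and setting $\Lambda(x)=\int_x^r\lambda(r_0)\,dr_0$, $\tilde\Lambda=\Lambda+\lambda_r$, $\sigma(x)=\log G(\beta^*(x))$, matching the coefficient of $\rho(0)$ forces $\tilde\Lambda(0)=1$, and matching the coefficient of $\rho(x)$ inside the integral reduces to the linear first-order ODE $\tilde\Lambda'(x)+\sigma'(x)\tilde\Lambda(x)=\sigma'(x)(1-x/\beta^*(x))$. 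Integrating with factor $G(\beta^*)$ and changing variables via $\gamma=(\beta^*)^{-1}$ produces the closed form
\begin{equation*}
\tilde\Lambda(x)=1-\frac{1}{G(\beta^*(x))}\int_{\beta^*(0)}^{\beta^*(x)}\frac{G'(u)\gamma(u)}{u}\,du,
\end{equation*}
so that $\lambda(x)=-\tilde\Lambda'(x)=\sigma'(x)[\tilde\Lambda(x)-1+x/\beta^*(x)]$. The critical positivity check $\lambda\geq 0$ reduces to $\gamma(u)/u\leq x/\beta^*(x)$ for $u\leq\beta^*(x)$, i.e., to the monotonicity of $s\mapsto s/\beta^*(s)$; this is exactly the hypothesis $\psi_F(s)\leq 0$ (since $\frac{d}{ds}(s/\beta^*(s))\geq 0\iff sf(s)\leq 1-F(s)$). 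Once $\lambda\geq 0$ is in hand, any feasible $\rho$ yields $F'[\rho]=-\int\lambda T'[\rho]\,dr_0\leq 0$, which establishes both local-optimality claims; the second one is obtained by applying the same linearization to the slightly wider class of $\beta$'s for which $\mathfrak{g}_r(\beta)$ is differentiable in $r$.

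For the global claim under $G(u)=\min(u,1)$, $r<1$, and $\beta\leq 1$, I would upgrade from stationarity to global optimality through concavity. A further integration by parts expresses both $F(\beta)$ and each $T(r_0,\beta)$ as sums of pointwise quadratics in $\beta$, and the Lagrangian $\mathcal{L}(\beta)=F(\beta)+\lambda_rT(r,\beta)+\int_0^r\lambda(r_0)T(r_0,\beta)\,dr_0$ built with the multipliers from the previous step turns out to have quadratic coefficients $-f(x)(1-\tilde\Lambda(x))/2\leq 0$ on $\beta(x)^2$ (since the integral in $\tilde\Lambda$ is nonnegative), $-\lambda(r_0)(1-F(r_0))/2\leq 0$ on $\beta(r_0)^2$, and exactly $0$ on $\beta(0)^2$ (because $\tilde\Lambda(0)=1$). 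Hence $\mathcal{L}$ is concave, and first-order stationarity at $\beta^*$ upgrades to $\beta^*$ being the global maximizer of $\mathcal{L}$ over differentiable $\beta\leq 1$ with $\beta(r)=r$. For any such feasible $\beta$, $F(\beta)\leq\mathcal{L}(\beta)\leq\mathcal{L}(\beta^*)=F(\beta^*)$, closing the global statement.

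The main technical obstacle is the coordinated bookkeeping of the multiplier computation: solving the Lagrangian ODE, deriving the closed form for $\tilde\Lambda$, verifying $\lambda\geq 0$, and checking concavity of $\mathcal{L}$ in the uniform-$G$ case all pivot on the single structural fact that $s\mapsto s/\beta^*(s)$ is nondecreasing on $[0,r]$, which is where (essentially the only place where) the hypothesis $\psi_F\leq 0$ enters.
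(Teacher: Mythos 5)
Your proposal is correct, and I checked the key computations: the linearized constraint $T'[\rho](r_0)$ follows from $(\beta^*(F-1))'=0$ and one integration by parts; the multiplier ODE $\tilde\Lambda'+\sigma'\tilde\Lambda=\sigma'(1-x/\beta^*)$ and its closed-form solution are right; the positivity $\lambda\geq 0$ does reduce, via $\int_{\beta^*(0)}^{\beta^*(x)}G'(u)\,du\leq G(\beta^*(x))$, to the monotonicity of $s\mapsto s/\beta^*(s)$, which is equivalent to $sf(s)\leq 1-F(s)$, i.e.\ $\psi_F\leq 0$; and in the uniform case the quadratic coefficients of the Lagrangian are indeed $\tfrac{f}{2}(\tilde\Lambda-1)\leq 0$, $-\tfrac{\lambda(r_0)}{2}(1-F(r_0))\leq 0$ and $0$ at the origin, so weak duality closes the global claim. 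However, your route is genuinely different from the paper's. The paper works entirely in the primal: it treats the seller's revenue-to-go $h(x)=-\mathfrak{g}_x(\beta)$ as a slack \emph{function}, solves a first-order ODE for $m=\beta^2(1-F)$ (uniform case) or for the multiplicative perturbation $\rho$ (general case) to parametrize all feasible strategies/directions explicitly in terms of $h\geq 0$, and then shows by a Fubini computation that the first-order change in buyer utility is $\leq 0$ for every admissible $h$; global optimality in the uniform case comes from the elementary bound $\sqrt{1+t}\leq 1+t/2$ applied to the explicit formula for $\beta(\cdot;h)$. Your dual/KKT organization buys a cleaner logical structure — one set of multipliers serves both the local and the global statements, and the global claim follows from concavity of the Lagrangian plus weak duality rather than from an ad hoc concavity trick — and it isolates the hypothesis $\psi_F\leq 0$ as exactly the condition for multiplier positivity. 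What the paper's primal route buys is an explicit description of the entire feasible set $\beta(\cdot;h)$ in the uniform case, which is of independent interest, and a somewhat more elementary global argument. The two are close to dual transcriptions of the same integration-by-parts identity, and both share the same mild gap at the same spot: "local optimality" is established only in the first-order sense (no ascent direction among linearized-feasible perturbations), with the constraint-qualification issue for directions where $T'[\rho](r_0)=0$ left implicit, exactly as in the paper.
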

\textbf{Sketch of proof~:} the proof consists in keeping track of the slack function $h(r)=\mathfrak{g}_{r}(\beta)$, rewriting locally feasible $\beta$'s as functions of $h$ through differential equation manipulations and finally comparing their revenue and showing that the optimal $h$ is zero for our objective. This requires somewhat lengthy and delicate manipulations. In the case of $G(x)=\min(x,1)$, we are able to write all feasible $\beta$'s as a function of $h$ and carry out the program globally.

We note that we did not require in our analysis that our optimization be limited to non-decreasing functions; it turns out that our local optima are optimal in larger class of functions. 

\subsection{One relaxation of the objective} 
Instead of computing the exact reserve value in the definition of the expected utility of the bidder, we introduce a relaxation $U_r$ of the objective corresponding to : 
\begin{equation}\label{equ:relaxation}
U_r(\beta_i) = \mathbb{E}\big((X_i-h_{\beta_i}(X_i))G_i(\beta(X_i))\textbf{1}_{[h_{\beta_i}(X_i) \geq 0]}\big)\;.
\end{equation}
We replaced $\textbf{1}_{[X_i \geq x_{\beta_i}]}$ by $\textbf{1}_{[h_{\beta_i}(X_i) \geq 0]}$. 
This relaxation avoids to compute the reserve value at each step of the gradient descent and remove most of the discontinuities of the previous objective. We now prove that the function maximizing Equation. \ref{equ:relaxation} has non-negative virtual value. The value of the relaxation objective at its optimum is equal to the one in the strategic bidder problem.
\begin{theorem}
 If an increasing and differentiable function $\beta_i$ is maximizing 
 \begin{equation*}
U_r(\beta_i) = \mathbb{E}\big((X_i-h_{\beta_i}(X_i))G_i(\beta_i(X_i)))\textbf{1}_{[h_{\beta_i}(X_i) \geq 0]}\big)\;,
\end{equation*}
it has non-negative virtual value, a reserve value equal to zero and $U_r(\beta_i) = U(\beta_i)$ with  
\begin{equation*}
U(\beta_i) = \mathbb{E}\big((X_i-h_{\beta_i}(X_i))G_i(\beta_i(X_i)))\textbf{1}_{[X_i \geq x_{\beta_i}]}\big)\;.
\end{equation*}

\end{theorem}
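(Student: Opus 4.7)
The three conclusions nest, so the plan is to prove the main claim that $h_{\beta_i^*}\geq 0$ on $\text{supp}(F_i)$ and then read off the other two. For the reduction: once $h_{\beta_i^*}\geq 0$ on the support, the seller's revenue $r\mapsto r(1-F_{B_i^*}(r))$ has non-positive derivative $-f_{B_i^*}(r)\psi_{B_i^*}(r)$ throughout the range of $\beta_i^*$, so the monopoly price collapses to the left endpoint $\beta_i^*(0)$ of $\text{supp}(F_{B_i^*})$; hence $x_{\beta_i^*}=0$. The equality $U_r(\beta_i^*)=U(\beta_i^*)$ then follows because both indicators $\mathbf{1}_{[h_{\beta_i^*}(X_i)\geq 0]}$ and $\mathbf{1}_{[X_i\geq x_{\beta_i^*}]}$ are almost surely $1$ under $F_i$, so the two integrands agree.

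For the main claim, I would argue by contradiction in the spirit of Theorem~\ref{thm:localAndGlobalOptimalityofThresholding}. Suppose $h^*:=h_{\beta_i^*}$ is strictly negative on a maximal open interval $(a,b)\subset \text{supp}(F_i)$, so by continuity $h^*(a)=h^*(b)=0$. Build the \emph{thresholding extension}
\[
\tilde\beta(x)=\frac{\beta_i^*(a)\,(1-F_i(a))}{1-F_i(x)}\ \text{on }[a,c],\qquad \tilde\beta=\beta_i^*\ \text{off }[a,c],
\]
where $c\geq b$ is the first point at which this extension meets $\beta_i^*$. By Lemma~\ref{definition_psi}, $h_{\tilde\beta}\equiv 0$ on $[a,c]$. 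The function $\tilde\beta$ is continuous and increasing, and matches $\beta_i^*$ in both value and derivative at $a$ (the derivative match is automatic because $h^*(a)=0$); a small smoothing near $c$ restores full differentiability at an arbitrarily small cost in $U_r$. Existence of $c$ follows from the log-derivative comparison $\tilde\beta'/\tilde\beta=f_i/(1-F_i)$ versus $\beta_i^{*\prime}/\beta_i^*=(1-h^*/\beta_i^*)f_i/(1-F_i)$: on $(a,b)$ the $\beta_i^*$-rate is larger so $\beta_i^*(b)>\tilde\beta(b)$, while past $b$ the factor $1/(1-F_i)$ blows up near the right endpoint of $\text{supp}(F_i)$, forcing $\tilde\beta$ to catch a bounded $\beta_i^*$.

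The heart of the argument, and the step I expect to be the main obstacle, is showing $U_r(\tilde\beta)>U_r(\beta_i^*)$. On $(a,b)$ the gain is transparent: $\beta_i^*$ contributes $0$ (indicator off) while $\tilde\beta$ contributes the strictly positive $\int_a^b x\,G_i(\tilde\beta)\,f_i\,dx$. The delicate piece is $(b,c)$, where $\tilde\beta\leq\beta_i^*$ forces $G_i(\tilde\beta)\leq G_i(\beta_i^*)$ and a slice of welfare is seemingly lost. My plan is to offset this using the zero-mean identity
\[
\int_a^c h^*(s)\,f_i(s)\,ds=0,
\]
obtained by integrating $\tfrac{d}{dx}[(1-F_i)\beta_i^*]=-h^* f_i$ over $[a,c]$ and invoking the matching values $\tilde\beta(a)=\beta_i^*(a)$, $\tilde\beta(c)=\beta_i^*(c)$ together with $h_{\tilde\beta}\equiv 0$. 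With this identity and the sign structure $h^*<0$ on $(a,b)$, $h^*\geq 0$ on $(b,c)$, a Chebyshev/covariance-type pairing against the non-decreasing function $G_i\circ\beta_i^*$ gives
\[
\int_a^c h^*(x)\,G_i(\beta_i^*(x))\,f_i(x)\,dx\ \geq\ G_i(\beta_i^*(b))\int_a^c h^* f_i\,dx=0,
\]
and the $x$-weighted welfare loss on $(b,c)$ can be similarly controlled by the positive-part budget of $h^*$ coming from the same region. Combined with the strict positive contribution on $(a,b)$, this yields $U_r(\tilde\beta)>U_r(\beta_i^*)$, contradicting maximality. The routine items — derivative matching at $a$, existence of $c$, the seller's revenue derivative computation, and the smoothing cost at $c$ — are straightforward; the technical heart is the rearrangement pairing on $(b,c)$.
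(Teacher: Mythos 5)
Your reduction (non-negative virtual value $\Rightarrow$ reserve value $0$ $\Rightarrow$ $U_r=U$) is fine and matches the paper's conclusion, and the zero-mean identity $\int_a^c h^*f_i\,dx=0$ is correct. The gap is exactly where you flagged it. Because your thresholding extension is anchored at the \emph{left} endpoint $a$, you have $\tilde\beta(x)=\beta_i^*(a)(1-F_i(a))/(1-F_i(x))\le\beta_i^*(x)$ throughout $(a,c)$ (indeed $\beta_i^*(1-F_i)$ increases on $(a,b)$), so on $(b,c)$ you incur the loss $\int_b^c x\,[G_i(\beta_i^*(x))-G_i(\tilde\beta(x))]f_i(x)\,dx$. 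Your Chebyshev pairing only delivers $\int_b^c h^*G_i(\beta_i^*)f_i\,dx\ \ge\ \int_a^b|h^*|\,G_i(\beta_i^*)f_i\,dx$, i.e.\ a budget measured in units of $h^*$; the loss, by contrast, is governed by how much $G_i$ varies between the two bid levels $\tilde\beta(x)<\beta_i^*(x)$, and the two quantities are not comparable. Concretely, take $G_i$ to rise steeply near some level $\theta$ with $\tilde\beta<\theta<\beta_i^*$ on most of $(b,c)$ and $\tilde\beta<\theta$ on $(a,b)$: then the gain on $(a,b)$ and the offset $\int_b^c h^*G_i(\beta_i^*)f_i\,dx$ are both nearly $0$, while the original strategy collects $\int_b^c(x-h^*)G_i(\beta_i^*)f_i\,dx$ bounded away from $0$ on $(b,c)$ and $\tilde\beta$ collects almost nothing there, so $U_r(\tilde\beta)<U_r(\beta_i^*)$. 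Your modification is therefore not always an improvement and the contradiction does not go through. (A secondary issue: the sign structure ``$h^*\ge0$ on $(b,c)$'' is not guaranteed by maximality of $(a,b)$ alone, since $(b,c)$ may contain further negative intervals.)

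The paper's proof avoids the loss entirely by anchoring the thresholding at the \emph{right} endpoint of the \emph{rightmost} negative interval: on $[a,b]$ it sets $\beta^+(x)=\beta(b)(1-F(b))/(1-F(x))$, which has $h_{\beta^+}\equiv0$ there and satisfies $\beta^+\ge\beta$ on $[a,b]$ because $\beta(1-F)$ is non-decreasing on the negative interval; continuity below $a$ is then restored by adding the homogeneous term $(T-\beta(a))(1-F(a))/(1-F(x))$ with $T=\beta^+(a)\ge\beta(a)$, which leaves $h_\beta$ unchanged on $[0,a]$ and again only raises the bid. Since $\beta^+\ge\beta$ everywhere and nothing changes above $b$, monotonicity of $G_i$ gives $U_r(\beta^+)\ge U_r(\beta)$ term by term, with no loss to control, and one inducts over the negative intervals from right to left. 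To salvage your construction you would have to add such an upward homogeneous correction on $(b,c)$ (and below $a$) rather than lowering the bid there --- at which point you have reproduced the paper's argument.
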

\begin{proof} We use the fact that if $h_{\beta}(x) < 0$ on a certain interval [a,b], we can find a new strategy $\beta^+$ with higher $U_r$.  Let us consider the rightmost interval [a,b] where $h_{\beta}(x) < 0$. On $[b,+\infty]$, $\beta^{+} = \beta$. Then on [a,b], $\beta^{+}(x)=\beta(b)(1-F(b)/(1-F(x))$. $\beta^{+}$ verifies $h_{\beta^{+}}(x) = 0$ on $[a,b]$. Then if we denote $T = \beta^{+}(a)$, we define  $\beta^{+}$ on [0,a] as $\beta^{+}(x)=\beta(x)+(T-\beta(a))(1-F(a))/(1-F(x))$. We have $h_{\beta^{+}} = h_{\beta}$ on [0,a]. $\beta^{+}$ is continuous. With $f(x)h_\beta(x)=(\beta(x)(F(x)-1))'$, we see that $\beta(1-F)$ is non-decreasing on [a,b]. Hence $\beta^{+}(a)\geq \beta(a)$. Therefore, $\forall x, \beta^{+}(x) \geq \beta(x)$ and $G(\beta^{+}(x)) \geq G(\beta(x))$. Hence, $U_r(\beta^*) \geq U_r(\beta)$. Then, we tackle the next interval where $h_{\beta}(x) < 0$ by doing the same manipulation on $\beta^+$. We conclude by induction on the intervals where $h_\beta\leq 0$. 

Thus, a solution of the relaxation has a virtual value positive everywhere and a reserve value equal to zero.  In this case, $U_r(\beta_i) = U(\beta_i)$.
\end{proof}
This new objective enables to run simple gradient descent algorithms without the need to recompute the reserve value at each iteration. It is also more stable than the original one since a local change of the virtual value does not completely change the value of the objective, which could be the case when the reserve value were part of the objective.
\section{Experimental setup}
We present in this section the complete approach and report the uplift of the new bidding strategies in various revenue-maximizing auctions.
\subsection{Our architecture}
\label{architecture}
To fit the optimal strategies, we use a simple one-layer neural network with $200$ ReLus.
We replace the indicator function by a sigmoid function to have  a fully differentiable  objective and we optimize
\begin{equation*}
U_\eta(\beta_i) = \mathbb{E}_{X_i \sim F_i}\bigg((X_i-h_{\beta_i}(X_i))G_i(\beta(X_i))\sigma(\eta h_{\beta_i}(X_i))\bigg)\;.
\end{equation*}
with $\sigma(x) = \frac{1}{1+\exp(-x)}$ and $\eta=1000$.
We start with a batch size of $10000$ examples, sampled according to the value distribution of the bidder. We use a stochastic gradient algorithm (SGD) with a decreasing learning rate starting at $0.001$. The full code in PyTorch is provided with the paper. The learning of an affine shading strategy is also provided in the notebook and is reaching already very decent performance.

In our setting, we assume that $G_i$ is  known. However, we could replace its expression by an approximation $\hat{G_i}$  learned from  past examples of bids of the competition or on the winning distribution of bidder $i$ computed on past auctions (in practice one may have to use survival analysis techniques to account for censoring of the observations). The results for the lazy second price auction with personalized price are presented in Table \ref{table_exponential} and in Table \ref{table_uniform}.
\subsection{Extension to other types of auction}
Our approach can easily be extended to many other types of auctions. Only a few lines of code are needed to adapt the objective to other mechanisms. 
\paragraph{The Myerson auction.} The Myerson auction \cite{Myerson81}  consists in using the virtual value both for the allocation  and  payment rules. The item is allocated to the bidder with the highest non-negative virtual value that pays:
\begin{equation*}
\psi_{B_i}^{-1}(\max(\max_{j\neq i}\psi_{B_j}(X_j),0))
\end{equation*}
As for the lazy second price auction, we can use \textit{the Myerson lemma} and show that the expected utility of the strategic bidder using the  strategy $\beta$ in the Myerson auction is
$$
U_i(\shadingFunc_i)=\Expb{[X_i-h_{\beta_i}(X_i)] F_Z(h_{\beta_i}(X_i))}\;.
$$
with $F_Z$ the cumulative distribution function of $Z=\max_{2\leq j \leq K}(0,\vValue_j(X_j))$, $X_i$ is the value of bidder $i$, and $h_{\beta_i} = \psi_{B_i}(\beta_i(X_i))$ is the virtual value function associated with the bid distribution.
For some distribution, the optimal strategy can be analytically computed. For instance, for the uniform distribution, we can prove this lemma which defines the optimal strategies. 
\begin{lemma}[Shading against $(K-1)$ uniform bidders]\label{lemma:shadingAgainstUnifBidders}
Suppose that $x$ has a positive density on its support and assume  that $x$ is bounded by $(K+1)/(K-1)$. Let $\epsilon>0$ be chosen by bidder 1 arbitrarily close to 0. 
Let us call 
$$
h^{(\eps)}_K(x)=
\begin{cases}
\frac{K-1}{K} \frac{\eps}{1+\eps} x & \text{ if } x \in [0,(1+\eps)/(K-1)) \;,\\
\frac{K-1}{K}\left(x-\frac{1}{K-1}\right) & \text{ if } x\geq (1+\eps)/(K-1)\;.
\end{cases}
$$
A near-optimal shading strategy is for bidder 1 to shade her value through
$$
\shadingFunc^{(\eps)}_1(x)=\Expb{h^{(\eps)}_K(t)|t\geq x}\;.
$$
As $\eps$ goes to $0^+$, this strategy approaches the optimum. 

If the support of $x$ is within $(1/(K-1),(K+1)/(K-1))$, then $\eps$ can be taken equal to 0.
\end{lemma}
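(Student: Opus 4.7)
The plan is to reduce the functional optimization over $\beta$ to a pointwise maximization in the space of virtual values $h=h_\beta$, and then invert back from $h$ to $\beta$. First, since $\psi(x)=2x-1$ for the Uniform$[0,1]$ distribution, I would write $Z=\max(0,2Y-1)$ with $Y=\max_{2\le j\le K} X_j$. Using $F_Y(y)=y^{K-1}$ on $[0,1]$ one obtains $F_Z(z)=((1+z)/2)^{K-1}$ on $[0,1]$. Substituting into the Myerson utility displayed just above the lemma gives
\begin{equation*}
U_1(\beta) = \mathbb{E}\left[(X_1 - h_\beta(X_1)) \left(\frac{1 + h_\beta(X_1)}{2}\right)^{K-1} \mathbf{1}_{[h_\beta(X_1) \ge 0]}\right].
\end{equation*}

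Next I would maximize the integrand pointwise. For fixed $x\ge 0$, set $\phi_x(h)=(x-h)(1+h)^{K-1}$; a short calculation gives $\phi_x'(h)\propto (K-1)x-1-Kh$, so the unique interior critical point is $h^*(x)=\frac{K-1}{K}\left(x-\frac{1}{K-1}\right)$. This is non-negative exactly for $x\ge 1/(K-1)$, bounded by $1$ iff $x\le (K+1)/(K-1)$ (which motivates the hypothesis on the support of $X_1$), and non-decreasing in $x$. For $x<1/(K-1)$, $\phi_x$ is decreasing on $[0,x]$, so the constrained optimum there is $h=0$. The pointwise optimum is thus exactly $h^{(0)}_K$, the $\eps\to 0$ limit of $h^{(\eps)}_K$ announced in the lemma.

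To convert the optimal virtual value back into a shading function, I would invert the ODE of Lemma~\ref{definition_psi}. Rewriting it as $\left[\beta(x)(1-F_1(x))\right]' = -h(x)\,f_1(x)$ and integrating from $x$ to the top of the support with the natural boundary condition $\beta(x)(1-F_1(x))\to 0$ yields the closed form $\beta(x)=\mathbb{E}[h(t)\mid t\ge x]$, which is precisely the expression of the lemma when $h=h^{(\eps)}_K$. Direct differentiation of this formula reproduces the ODE, confirming $h_\beta=h$; and the identity $\beta'(x)=(\beta(x)-h(x))f_1(x)/(1-F_1(x))$, combined with the elementary inequality $\mathbb{E}[h(t)\mid t\ge x]\ge h(x)$ for non-decreasing $h$, shows that $\beta^{(\eps)}_1$ is genuinely increasing.

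The main obstacle, and the only reason for introducing $\eps>0$, is a mild degeneracy of the pointwise optimum on $[0,1/(K-1))$: with $h\equiv 0$ on a full interval, $\psi_{F_{B_1}}$ is identically zero on an interval of bids, so the seller's Myerson allocation ties the bidder with the acceptance threshold and the outcome becomes sensitive to tie-breaking conventions. Replacing the flat piece by the strictly increasing ramp $\frac{K-1}{K}\frac{\eps}{1+\eps}x$ removes this ambiguity while converging uniformly to $h^{(0)}_K$ as $\eps\to 0^+$. Dominated convergence in the utility integrand then yields $U_1(\beta^{(\eps)}_1)\to \sup_\beta U_1(\beta)$, proving near-optimality. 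Finally, when the support of $X_1$ lies inside $(1/(K-1),(K+1)/(K-1))$ the ambiguous zero-region is disjoint from the support, so $\eps=0$ is admissible and $\beta^{(0)}_1$ is exactly optimal.
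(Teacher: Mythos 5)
Your proposal is correct and follows essentially the same route as the paper's proof: reduce the problem to a pointwise maximization of $(x-h)(1+h)^{K-1}$ over $h\geq 0$, obtain the piecewise-linear optimizer, recover the shading function via the ODE of Lemma~\ref{definition_psi} as $\beta(x)=\Expb{h(t)\mid t\geq x}$, and regularize the flat zero-region with the $\eps$-ramp to keep the virtualized bid strictly increasing. Your explicit integration of $[\beta(1-F_1)]'=-hf_1$ and the monotonicity check via $\Expb{h(t)\mid t\geq x}\geq h(x)$ merely spell out steps the paper delegates to its cited lemmas.
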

The full proof is in Appendix \ref{MyersonOneStrategic}. Since in this specific setting optimal strategies have a known closed form, our optimization pipeline can be tested to see if it recovers these strategies. With the same pipeline used in Section \ref{architecture}, we optimize
$$
U_i(\shadingFunc_i)=\Expb{[X_i-h_{\beta_i}(X_i)] F_Z(h_{\beta_i}(X_i))}\;.
$$
Appendix \ref{comparisonMyerson} focuses on the uniform distribution where our algorithm recover exactly the strategies proposed in Lemma \ref{lemma:shadingAgainstUnifBidders} showing the robustness of our approach.

The interest of the optimization pipeline is the direct extension to all possible value distributions without the need to solve at each time a new system of differential equations. The performance with an exponential value distribution is provided in Table \ref{table_exponential}.

\begin{table*}[t]
\begin{center}
\begin{tabular}{|l|l|l|l|l|}
\hline
Auction Type                               &                                    & K=2 & K=3 & K=4 \\ \hline
\multirow{2}{*}{Baselines} & Utility of truthful strategy (in revenue maximizing)                                                                  & 0.30 &0.24  & 0.21
\\ \cline{2-5}
& Utility of truthful strategy (in welfare maximizing)                                                                 & 0.50 & 0.33  &  0.25
\\ \hlineB{4.0}
\multirow{2}{*}{Lazy second price auction} & Utility of strategic bidder        &   \tiny{$0.45 \pm 0.001$}  &  \tiny{$0.31 \pm 0.001$}   &   \tiny{$0.24 \pm 0.001$}  \\ \cline{2-5} 
                                           & Uplift vs truthful bidding       & +50\%    & +29\%    &   +14\%  \\ \hlineB{4.0}
\multirow{2}{*}{Eager second price auction}           & Utility of strategic bidder        &  \tiny{$0.52 \pm 0.02$}   &  \tiny{$0.33 \pm 0.02$}   & \tiny{$0.25 \pm 0.02$}      \\ \cline{2-5} 
                                           & Uplift vs truthful bidding      &   +73\%  & +37\%     &   +19\%  \\ \hlineB{4.0}
\multirow{2}{*}{Myerson auction}           & Utility of strategic bidder        & \tiny{$0.64\pm0.001$}    & \tiny{$0.45\pm0.001$}     & \tiny{$0.35\pm0.001$}     \\ \cline{2-5} 
                                           & Uplift vs truthful bidding        &  +113\%   &  +87\%   &   +67\%  \\ \hlineB{4.0}
\multirow{2}{*}{Boosted second price}      & Utility of strategic bidder        &  \tiny{$0.48\pm0.03$}   &  \tiny{$0.41\pm0.001$}   &  \tiny{$0.32\pm0.001$}   \\ \cline{2-5} 
                                           & Uplift vs truthful bidding        &  +60\%   &  +71\%   &   +52\%  \\  \hline
\end{tabular}
\caption{\textbf{All bidders have  an exponential value distribution with parameter $\lambda=1$. The strategic bidder has K-1 opponents bidding truthfully and having a reserve price equal to 1.0,  their monopoly price. The reserve price of the strategic bidder is computed on her bid distribution. For each run, the evaluation is based on $10^{6}$ samples, and we average the performances over 10 learnings.} The utility of the strategic bidder can be higher that in the welfare-maximizing auction because revenue maximizing auctions  remove competition below the reserve price, as illustrated by some examples in Appendix \ref{appendix_experiments}.}
\label{table_exponential}
\end{center}
\end{table*}

\paragraph{Eager second price auction with monopoly reserve prices.}
The eager second price auction consists of running a second price auction but only among bidders that clear their personalized reserve price. The objective function is very similar to the one of the lazy second price auction except that the winning distribution is different below the reserve price of the other bidders. Indeed, if all other bidders are below their reserve price, the strategic bidders that bids above his monopoly price is sure to win and only pays her monopoly price. We provide more details in Appendix \ref{eagerappendix}.

\paragraph{The boosted second price auction. (BSP)} 
Two small variants of the boosted second price auction (BSP) \cite{Golrezaei2017} can also be addressed.  We deal with the BSP auction as it seems to be one of the state of the art alternative to the second price auction with personalized reserve price to be used in practice and deals with heterogeneities between bidders. In the original paper, the seller computes first the reserve prices of each bidder based on their bid distributions. Then, the algorithm computes a boosting factor $\gamma_i>0$ for each bidder by counterfactually maximizing the revenue of the seller. More precisely, the auction is ran according to~:
\begin{algorithm}[h!]
\caption{Boosted second price $(r,\gamma)$}
\begin{algorithmic}
\STATE - First each bidder i submits his bid $b_i$
\STATE - Define S as a set of bidders whose bids exceed their reserve price, i.e, $S=\{i:b_i \geq r_i\}$
\STATE - If the set S is empty, the item is not allocated. Otherwise, the item is allocated to bidder $i^{*}$ with the highest boosted bid, i.e., $i*=\argmax_{i \in S}\{b_i\gamma_i\}$ and she pays $\max\{r_{i^{*}},\max_{i \in S, i \neq i^{*}}\{b_i\gamma_i/\gamma_{i^{*}}\}\}$. For other bidders, the payment is zero.
\end{algorithmic}
\end{algorithm}

To explain intuitively our two objectives corresponding to this auction, we consider first the example of the family of generalized Pareto distributions. As the virtual value of all distributions in this family is affine, the boosted second price auction is strictly equivalent to the Myerson auction in this family. It explains why this auction can perform well in practice for the seller since it avoids to compute exactly the virtual value by approximating it by a linear fit.  

In the first model, we assume that the seller first makes an affine-fit through an L2 regression on the virtual values she observes. Then, she runs a Myerson auction based on these L2 fits. In the case of Generalized Pareto distributions, this procedure results exactly in the BSP auction. If we note $\hat{\psi}_{B_i}$ the L2 fit of the virtual value corresponding to the bid distrution and $\hat{h}_{\beta_i} =  \hat{\psi}_{B_i}\circ \beta_i$, we optimize the Myerson objective with $\hat{h}_{\beta_i}$ corresponding to the fit of $h_{\beta_i}$. The main difference with the BSP auction for non-generalized pareto distributions is that the fit is used to compute the reserve price. In the second objective, we adress this limitation by computing first the reserve price $r_i$ based on the observed $\psi_{B_i}$. Then, the algorithm computes a linear fit of  $\psi_{B_i}$ on bids higher than $r_i$. This linear fit is used as the boosting parameter for bidder i. To make the objective differentiable, we consider the relaxation where $r_i$ is assumed to be $min_{r}(\psi_{B_i}(r) >0)$. We verify retrospectively that the final strategy verifies $min_{r}(\psi_{B_i}(r) > 0) = \argmax(r_i(1-F_{B_i}(r_i)))$

Our experiments show that our approach can also be empirically generalized to more advanced, intricate, practical and modern settings, on top of working well theoretically on the lazy second price auctions. 
\subsection{Evaluation and results}

Two different value distributions were used to run the experiments: the exponential distribution in Table \ref{table_exponential} and the uniform distribution in Table \ref{table_uniform} in Appendix. We focus on a small number of bidders since it is where the reserve price play an important role for the seller. \cite{celis2014buy} also noticed that the median of the number of participants in online advertising auctions is 6. 

To compute the real performance of the strategy, we are conservative in the computation of the reserve price since we use $r_i=\max(b | \psi_{B_i}(b)<0)$. We then compute the performance by computing the objective (expected utility) with Monte-Carlo simulations. For the lazy second price with personalized reserve price, we use for instance
\begin{equation*}
U(\beta_i) = \mathbb{E}\bigg((X_i-h_{\beta_i}(X_i))G_i(\beta(X_i))\textbf{1}(\beta_i(X_i) \geq r_i)\bigg)\;.
\end{equation*}
We compare the performance of our strategies with two baselines: the utility of one bidder bidding truthfully in a second price auction without reserve price (the welfare maximizing auction) and in a second price auction with monopoly price (with symmetric bidders, this auction is equivalent to the Myerson auction and is revenue-maximizing for the seller).

For BSP, we report results for the second objective which is the closest one to the corresponding procedure of \cite{Golrezaei2017}. The first one gives similar uplifts that the strategic behavior in the Myerson auction.  The order of magnitude of the uplift reported is  significant. We observe that BSP and the Myerson auction are less robust to strategic behavior than the lazy second price auction with personalized reserve price. Indeed, as in the eager version of the second price auction there is no competition when all other bidders are below their reserve price. It is not the case for the lazy second price auction explaining why the uplift are slightly lower for this specific auction. 

We focused  on the stationary case where the strategic bidder has to choose one strategy implying a bid distribution and the seller will immediately optimize their mechanism according to this bid distribution. However, our differential approach allows some generalizations. In future work, we could adapt the differential approach to more dynamic settings where the seller uses a particular dynamic to update the reserve price based on past bids of the bidders. 
\section{Conclusion} 
In this paper, we showed that machine learning can be efficiently used on the bidder side to learn how to shade in revenue-maximizing auctions that are optimized based on past bids (or a distribution announced by the bidder to which she commits). Our work, both theoretical and practical, complements the classical approach using statistical learning from the seller's standpoint showing that strategic bidding can be implemented in some of the main revenue-maximizing auctions. Our work also raises questions about many automatic mechanism procedures since many are based on the assumption of having observed past truthful bids in order to optimize mechanisms. From an industry point of view, our work provides a new argument to come back to simple and more transparent auction mechanisms that are less subject to optimization on both the bidders' and the seller's sides. 
\newpage
\section*{Acknowledgements}
V. Perchet has benefited from the support of the FMJH Program Gaspard Monge in optimization and operations research (supported in part by EDF) and from the CNRS through the PEPS program. N. El Karoui gratefully acknowledges support from grant NSF DMS 1510172.
\bibliographystyle{ACM-Reference-Format}
\bibliography{gradient_final}


\begin{thebibliography}{27}


\ifx \showCODEN    \undefined \def \showCODEN     #1{\unskip}     \fi
\ifx \showDOI      \undefined \def \showDOI       #1{#1}\fi
\ifx \showISBNx    \undefined \def \showISBNx     #1{\unskip}     \fi
\ifx \showISBNxiii \undefined \def \showISBNxiii  #1{\unskip}     \fi
\ifx \showISSN     \undefined \def \showISSN      #1{\unskip}     \fi
\ifx \showLCCN     \undefined \def \showLCCN      #1{\unskip}     \fi
\ifx \shownote     \undefined \def \shownote      #1{#1}          \fi
\ifx \showarticletitle \undefined \def \showarticletitle #1{#1}   \fi
\ifx \showURL      \undefined \def \showURL       {\relax}        \fi
\providecommand\bibfield[2]{#2}
\providecommand\bibinfo[2]{#2}
\providecommand\natexlab[1]{#1}
\providecommand\showeprint[2][]{arXiv:#2}

\bibitem[\protect\citeauthoryear{Abeille, Calauz{\`e}nes, Karoui, Nedelec, and
  Perchet}{Abeille et~al\mbox{.}}{2018}]%
        {abeille2018explicit}
\bibfield{author}{\bibinfo{person}{Marc Abeille}, \bibinfo{person}{Cl{\'e}ment
  Calauz{\`e}nes}, \bibinfo{person}{Noureddine~El Karoui},
  \bibinfo{person}{Thomas Nedelec}, {and} \bibinfo{person}{Vianney Perchet}.}
  \bibinfo{year}{2018}\natexlab{}.
\newblock \showarticletitle{Explicit shading strategies for repeated truthful
  auctions}.
\newblock \bibinfo{journal}{\emph{arXiv preprint arXiv:1805.00256}}.
\newblock


\bibitem[\protect\citeauthoryear{Ashlagi, Daskalakis, and Haghpanah}{Ashlagi
  et~al\mbox{.}}{2016}]%
        {ashlagi2016sequential}
\bibfield{author}{\bibinfo{person}{Itai Ashlagi}, \bibinfo{person}{Constantinos
  Daskalakis}, {and} \bibinfo{person}{Nima Haghpanah}.}
  \bibinfo{year}{2016}\natexlab{}.
\newblock \showarticletitle{Sequential mechanisms with ex-post participation
  guarantees}. In \bibinfo{booktitle}{\emph{Proceedings of EC 2016}}.
\newblock


\bibitem[\protect\citeauthoryear{Balseiro, Mirrokni, and Leme}{Balseiro
  et~al\mbox{.}}{2017}]%
        {balseiro2017dynamic}
\bibfield{author}{\bibinfo{person}{Santiago~R Balseiro},
  \bibinfo{person}{Vahab~S Mirrokni}, {and} \bibinfo{person}{Renato~Paes
  Leme}.} \bibinfo{year}{2017}\natexlab{}.
\newblock \showarticletitle{Dynamic mechanisms with martingale utilities}. In
  \bibinfo{booktitle}{\emph{Management Science}}.
\newblock


\bibitem[\protect\citeauthoryear{Celis, Lewis, Mobius, and Nazerzadeh}{Celis
  et~al\mbox{.}}{2014}]%
        {celis2014buy}
\bibfield{author}{\bibinfo{person}{L~Elisa Celis}, \bibinfo{person}{Gregory
  Lewis}, \bibinfo{person}{Markus Mobius}, {and} \bibinfo{person}{Hamid
  Nazerzadeh}.} \bibinfo{year}{2014}\natexlab{}.
\newblock \showarticletitle{Buy-it-now or take-a-chance: Price discrimination
  through randomized auctions}.
\newblock \bibinfo{journal}{\emph{Management Science}} \bibinfo{volume}{60},
  \bibinfo{number}{12} (\bibinfo{year}{2014}), \bibinfo{pages}{2927--2948}.
\newblock


\bibitem[\protect\citeauthoryear{Cole and Roughgarden}{Cole and
  Roughgarden}{2014}]%
        {cole2014sample}
\bibfield{author}{\bibinfo{person}{Richard Cole} {and} \bibinfo{person}{Tim
  Roughgarden}.} \bibinfo{year}{2014}\natexlab{}.
\newblock \showarticletitle{The sample complexity of revenue maximization}. In
  \bibinfo{booktitle}{\emph{Proceedings of Theory of computing}}.
\newblock


\bibitem[\protect\citeauthoryear{Conitzer and Sandholm}{Conitzer and
  Sandholm}{2002}]%
        {conitzer2002complexity}
\bibfield{author}{\bibinfo{person}{Vincent Conitzer} {and}
  \bibinfo{person}{Tuomas Sandholm}.} \bibinfo{year}{2002}\natexlab{}.
\newblock \showarticletitle{Complexity of mechanism design}. In
  \bibinfo{booktitle}{\emph{Proceedings of UAI 2002}}. Morgan Kaufmann
  Publishers Inc., \bibinfo{pages}{103--110}.
\newblock


\bibitem[\protect\citeauthoryear{Devanur, Huang, and Psomas}{Devanur
  et~al\mbox{.}}{2016}]%
        {devanur2016sample}
\bibfield{author}{\bibinfo{person}{Nikhil~R Devanur}, \bibinfo{person}{Zhiyi
  Huang}, {and} \bibinfo{person}{Christos-Alexandros Psomas}.}
  \bibinfo{year}{2016}\natexlab{}.
\newblock \showarticletitle{The sample complexity of auctions with side
  information}. In \bibinfo{booktitle}{\emph{Proceedings of Theory of
  Computing}}.
\newblock


\bibitem[\protect\citeauthoryear{D{\"u}tting, Feng, Narasimhan, and
  Parkes}{D{\"u}tting et~al\mbox{.}}{2017}]%
        {dutting2017optimal}
\bibfield{author}{\bibinfo{person}{Paul D{\"u}tting}, \bibinfo{person}{Zhe
  Feng}, \bibinfo{person}{Harikrishna Narasimhan}, {and}
  \bibinfo{person}{David~C Parkes}.} \bibinfo{year}{2017}\natexlab{}.
\newblock \showarticletitle{Optimal auctions through deep learning}.
\newblock \bibinfo{journal}{\emph{arXiv preprint arXiv:1706.03459}}
  (\bibinfo{year}{2017}).
\newblock


\bibitem[\protect\citeauthoryear{Epasto, Mahdian, Mirrokni, and Zuo}{Epasto
  et~al\mbox{.}}{2018}]%
        {epasto2018incentive}
\bibfield{author}{\bibinfo{person}{Alessandro Epasto},
  \bibinfo{person}{Mohammad Mahdian}, \bibinfo{person}{Vahab Mirrokni}, {and}
  \bibinfo{person}{Song Zuo}.} \bibinfo{year}{2018}\natexlab{}.
\newblock \showarticletitle{Incentive-aware learning for large markets}. In
  \bibinfo{booktitle}{\emph{Proceedings of WWW 2018}}.
\newblock


\bibitem[\protect\citeauthoryear{Feng, Podimata, and Syrgkanis}{Feng
  et~al\mbox{.}}{2018}]%
        {feng2018learning}
\bibfield{author}{\bibinfo{person}{Zhe Feng}, \bibinfo{person}{Chara Podimata},
  {and} \bibinfo{person}{Vasilis Syrgkanis}.} \bibinfo{year}{2018}\natexlab{}.
\newblock \showarticletitle{Learning to bid without knowing your value}. In
  \bibinfo{booktitle}{\emph{Proceedings of EC 2018}}.
\newblock


\bibitem[\protect\citeauthoryear{Golrezaei, Lin, Mirrokni, and
  Nazerzadeh}{Golrezaei et~al\mbox{.}}{2017}]%
        {Golrezaei2017}
\bibfield{author}{\bibinfo{person}{N. Golrezaei}, \bibinfo{person}{M. Lin},
  \bibinfo{person}{V. Mirrokni}, {and} \bibinfo{person}{H. Nazerzadeh}.}
  \bibinfo{year}{2017}\natexlab{}.
\newblock \showarticletitle{{Boosted Second-price Auctions for Heterogeneous
  Bidders}}. In \bibinfo{booktitle}{\emph{Management Science.}}
\newblock


\bibitem[\protect\citeauthoryear{Hartline and Roughgarden}{Hartline and
  Roughgarden}{2009}]%
        {Hartline2009}
\bibfield{author}{\bibinfo{person}{Jason~D Hartline} {and} \bibinfo{person}{Tim
  Roughgarden}.} \bibinfo{year}{2009}\natexlab{}.
\newblock \showarticletitle{Simple versus optimal mechanisms}. In
  \bibinfo{booktitle}{\emph{Proceedings of EC 2009}}.
\newblock


\bibitem[\protect\citeauthoryear{Hastie, Tibshirani, and Friedman}{Hastie
  et~al\mbox{.}}{2001}]%
        {htfElementsStatLearning}
\bibfield{author}{\bibinfo{person}{Trevor Hastie}, \bibinfo{person}{Robert
  Tibshirani}, {and} \bibinfo{person}{Jerome Friedman}.}
  \bibinfo{year}{2001}\natexlab{}.
\newblock \bibinfo{booktitle}{\emph{The Elements of Statistical Learning}}.
\newblock \bibinfo{publisher}{Springer New York Inc.}, \bibinfo{address}{New
  York, NY, USA}.
\newblock


\bibitem[\protect\citeauthoryear{Huang, Mansour, and Roughgarden}{Huang
  et~al\mbox{.}}{2018}]%
        {huang2018making}
\bibfield{author}{\bibinfo{person}{Zhiyi Huang}, \bibinfo{person}{Yishay
  Mansour}, {and} \bibinfo{person}{Tim Roughgarden}.}
  \bibinfo{year}{2018}\natexlab{}.
\newblock \showarticletitle{Making the most of your samples}.
\newblock \bibinfo{journal}{\emph{SIAM J. Comput.}}
\newblock


\bibitem[\protect\citeauthoryear{Kanoria and Nazerzadeh}{Kanoria and
  Nazerzadeh}{2014}]%
        {kanoria2017dynamic}
\bibfield{author}{\bibinfo{person}{Yash Kanoria} {and} \bibinfo{person}{Hamid
  Nazerzadeh}.} \bibinfo{year}{2014}\natexlab{}.
\newblock \showarticletitle{Dynamic Reserve Prices for Repeated Auctions:
  Learning from Bids}. In \bibinfo{booktitle}{\emph{Proceedings of WINE 2014}}.
\newblock


\bibitem[\protect\citeauthoryear{Krishna}{Krishna}{2009}]%
        {krishna2009auction}
\bibfield{author}{\bibinfo{person}{V. Krishna}.}
  \bibinfo{year}{2009}\natexlab{}.
\newblock \bibinfo{booktitle}{\emph{Auction Theory}}.
\newblock


\bibitem[\protect\citeauthoryear{Medina and Mohri}{Medina and Mohri}{2014}]%
        {medina2014learning}
\bibfield{author}{\bibinfo{person}{Andres~M Medina} {and}
  \bibinfo{person}{Mehryar Mohri}.} \bibinfo{year}{2014}\natexlab{}.
\newblock \showarticletitle{Learning theory and algorithms for revenue
  optimization in second price auctions with reserve}. In
  \bibinfo{booktitle}{\emph{Proceedings of ICML 2014}}.
\newblock


\bibitem[\protect\citeauthoryear{Milgrom and Tadelis}{Milgrom and
  Tadelis}{2018}]%
        {milgrom2018artificial}
\bibfield{author}{\bibinfo{person}{Paul~R Milgrom} {and}
  \bibinfo{person}{Steven Tadelis}.} \bibinfo{year}{2018}\natexlab{}.
\newblock \bibinfo{booktitle}{\emph{How Artificial Intelligence and Machine
  Learning Can Impact Market Design}}.
\newblock \bibinfo{type}{{T}echnical {R}eport}. \bibinfo{institution}{National
  Bureau of Economic Research}.
\newblock


\bibitem[\protect\citeauthoryear{Mohri and Munoz}{Mohri and Munoz}{2015}]%
        {mohri2015revenue}
\bibfield{author}{\bibinfo{person}{Mehryar Mohri} {and} \bibinfo{person}{Andres
  Munoz}.} \bibinfo{year}{2015}\natexlab{}.
\newblock \showarticletitle{Revenue optimization against strategic buyers}. In
  \bibinfo{booktitle}{\emph{Proceedings of NIPS 2015}}.
\newblock


\bibitem[\protect\citeauthoryear{Morgenstern and Roughgarden}{Morgenstern and
  Roughgarden}{2015}]%
        {morgenstern2015pseudo}
\bibfield{author}{\bibinfo{person}{Jamie~H Morgenstern} {and}
  \bibinfo{person}{Tim Roughgarden}.} \bibinfo{year}{2015}\natexlab{}.
\newblock \showarticletitle{On the pseudo-dimension of nearly optimal
  auctions}. In \bibinfo{booktitle}{\emph{Proceedings of NIPS 2015}}.
\newblock


\bibitem[\protect\citeauthoryear{Myerson}{Myerson}{1981}]%
        {Myerson81}
\bibfield{author}{\bibinfo{person}{R.~B. Myerson}.}
  \bibinfo{year}{1981}\natexlab{}.
\newblock \showarticletitle{Optimal Auction Design}.
\newblock \bibinfo{journal}{\emph{Math. Oper. Res.}} \bibinfo{volume}{6},
  \bibinfo{number}{1}.
\newblock


\bibitem[\protect\citeauthoryear{Nedelec, Abeille, Calauz{\`e}nes, Karoui,
  Heymann, and Perchet}{Nedelec et~al\mbox{.}}{2018}]%
        {nedelec2018thresholding}
\bibfield{author}{\bibinfo{person}{Thomas Nedelec}, \bibinfo{person}{Marc
  Abeille}, \bibinfo{person}{Cl{\'e}ment Calauz{\`e}nes},
  \bibinfo{person}{Noureddine~El Karoui}, \bibinfo{person}{Benjamin Heymann},
  {and} \bibinfo{person}{Vianney Perchet}.} \bibinfo{year}{2018}\natexlab{}.
\newblock \showarticletitle{Thresholding the virtual value: a simple method to
  increase welfare and lower reserve prices in online auction systems}.
\newblock \bibinfo{journal}{\emph{arXiv preprint arXiv:1808.06979}}
  (\bibinfo{year}{2018}).
\newblock


\bibitem[\protect\citeauthoryear{Ostrovsky and Schwarz}{Ostrovsky and
  Schwarz}{2011}]%
        {OstSch11}
\bibfield{author}{\bibinfo{person}{M. Ostrovsky} {and} \bibinfo{person}{M.
  Schwarz}.} \bibinfo{year}{2011}\natexlab{}.
\newblock \showarticletitle{Reserve prices in internet advertising auctions: A
  field experiment}. In \bibinfo{booktitle}{\emph{Proceedings of EC 2011}}.
\newblock


\bibitem[\protect\citeauthoryear{Paes~Leme, Pal, and Vassilvitskii}{Paes~Leme
  et~al\mbox{.}}{2016}]%
        {paes2016field}
\bibfield{author}{\bibinfo{person}{Renato Paes~Leme}, \bibinfo{person}{Martin
  Pal}, {and} \bibinfo{person}{Sergei Vassilvitskii}.}
  \bibinfo{year}{2016}\natexlab{}.
\newblock \showarticletitle{A field guide to personalized reserve prices}. In
  \bibinfo{booktitle}{\emph{Proceedings of WWW 2016}}.
\newblock


\bibitem[\protect\citeauthoryear{Tang and Zeng}{Tang and Zeng}{2018}]%
        {tang2016manipulate}
\bibfield{author}{\bibinfo{person}{Pingzhong Tang} {and}
  \bibinfo{person}{Yulong Zeng}.} \bibinfo{year}{2018}\natexlab{}.
\newblock \showarticletitle{The price of prior dependence in auctions}. In
  \bibinfo{booktitle}{\emph{Proceedings of EC 2018}}.
\newblock


\bibitem[\protect\citeauthoryear{Weed, Perchet, and Rigollet}{Weed
  et~al\mbox{.}}{2016}]%
        {weed2016online}
\bibfield{author}{\bibinfo{person}{Jonathan Weed}, \bibinfo{person}{Vianney
  Perchet}, {and} \bibinfo{person}{Philippe Rigollet}.}
  \bibinfo{year}{2016}\natexlab{}.
\newblock \showarticletitle{Online learning in repeated auctions}. In
  \bibinfo{booktitle}{\emph{Proceedings of COLT 2016}}.
\newblock


\bibitem[\protect\citeauthoryear{Wilson}{Wilson}{1987}]%
        {Wil87}
\bibfield{author}{\bibinfo{person}{R. Wilson}.}
  \bibinfo{year}{1987}\natexlab{}.
\newblock \showarticletitle{Game-theoretic analyses of trading processes}. In
  \bibinfo{booktitle}{\emph{Advances in Economic Theory}}.
\newblock


\end{thebibliography}

\appendix
\newpage
\onecolumn
\section{Results for the uniform distribution}
\begin{table*}[h!]
\small
\centering
\begin{tabular}{|l|l|l|l|l|}
\hline
Auction Type                               &                                    & K=2 & K=3 & K=4 \\ \hline
\multirow{2}{*}{Baselines} & Utility of truthful strategy (in revenue maximizing)                                                                  & 0.083 & 0.057 & 0.040 
\\ \cline{2-5}
& Utility of truthful strategy (in welfare maximizing)                                                                 & 0.166 & 0.083 & 0.050 
\\ \hlineB{4.0}
\multirow{2}{*}{Lazy second price auction} & Utility of strategic bidder        &  \tiny{$0.141\pm0.001$}    &  \tiny{$0.077\pm0.001$}   & \tiny{$0.048\pm0.001$}    \\ \cline{2-5} 
                                           & Uplift vs truthful bidding      &   +72\%  &   +36\%  &  +20\%   \\ \hlineB{4.0}
\multirow{2}{*}{Eager second price auction}           & Utility of strategic bidder        & \tiny{$0.126 \pm 0.01$}     &  \tiny{$0.083 \pm 0.01$}   &    \tiny{$0.050 \pm 0.002$}   \\ \cline{2-5} 
                                           & Uplift vs revenue-maximizing       & +51\%    & +46\%   & +25\%    \\ \hlineB{4.0}
\multirow{2}{*}{Myerson auction}           & Utility of strategic bidder        & \tiny{$0.246\pm0.001$}   &   \tiny{$0.131\pm0.01$}  &   \tiny{$0.079\pm0.001$}   \\ \cline{2-5} 
                                           & Uplift vs revenue-maximizing       &  +195\%   &    +130\% &   +97.5\%  \\ \hlineB{4.0}
\multirow{2}{*}{Boosted second price}      & Utility of strategic bidder        &    \tiny{$0.24 \pm 0.01$}&   \tiny{$0.08 \pm 0.01$}  &  \tiny{$0.055 \pm 0.002$}  \\ \cline{2-5} 
                                           & Uplift vs revenue maximizing       & +200\%    &  +40\%  &  +37.5\% \\ \hline
\end{tabular}
\caption{\textbf{All bidders have a uniform value distribution. The strategic bidder has $K-1$ opponents, all bidding truthfully. The reserve price of all other bidders is equal to 0.5. The reserve price of the strategic bidder is computed on her bid distribution. For each run, the evaluation is based on $10^{6}$ samples. We average on 10 learnings the performance of the strategies.} The utility of the strategic bidder can be higher that in the welfare-maximizing auction because revenue maximizing auctions are removing the competition below the reserve price. We provide some examples of strategies in Appendix \ref{appendix_experiments}.}
\label{table_uniform}
\end{table*}


\section{Proofs for Section 3 }
Recall that our setup is that we are given a strategy $\beta$ and a current reserve value $r$. We want to extend our strategy below $r$ in a way that is optimal, at least locally optimal. We assume throughout that the seller is welfare benevolent.

So we have to solve the infinite programming problem 
\begin{gather*}
\max_\beta F(\beta)=\Expb{[X-h_\beta(X)]G(\beta(X))\indicator{0\leq X\leq r}}\\
\text{ subject to }\forall t\;, 0\leq t \leq r \; \Expb{h_\beta(X)G(\beta(X))\indicator{t\leq X\leq r}}\leq 0 \\
\text{ and } \Expb{h_\beta(X)G\beta(X)\indicator{0\leq X\leq r}}= 0\;, \beta(r)=\beta(r^+)  
\end{gather*}
where $\beta(r^+)$ is given by the strategy that had reserve value at $r$. This is just a continuity requirement and it ensures that Myerson's formula applies. 

Of course the seller revenue for bids below $r$ if she sets the reserve value at $t$ is 
$$
\Expb{h_\beta(X)G(\beta(X))\indicator{t\leq X\leq r}}
$$
Note that the constraints mean that the max revenue of the seller is achieved for the reserve value 0 and it is 0. We know that the reserve value should be 0, because otherwise the buyer could use thresholding below the reserve value to increase her revenue and not change the revenue the seller derives from her in a lazy second price auction. So that guarantees that the reserve value is 0. 

\subsection{The case $G(x)=\min(x,1)$}\label{subsec:app:particularCaseOptimality}
Let $r$ be given. Call $-h(x)$ the revenue of the seller on $[x,r]$. The constraints on $h$ is that $h(r)=0$, $h(0)=0$ and $h\geq 0$ on $[0,r]$. That way the revenue of the seller is maximized at $x^*=0$. 

We assume that $\beta\leq \beta(r)=r<1$. So then $G(\beta)=\beta$. 
\subsubsection{Preliminaries}
Using results in the main text, i.e. $h_\beta(x)f(x)=[\beta(x)(F(x)-1)]'$, our constraints can be written in differential form as 
$$
\int_x^r [\beta(F-1)]' G(\beta)=\int_x^r [\beta(F-1)]'\beta= \int_x^r [\beta(F-1)]'\frac{\beta (F-1)}{F-1}=-h\;.
$$
So if $u=[\beta(1-F)]^2$, we have equivalently 
$$
\int_x^r u' \frac{1}{1-F}=2h(x)\;.
$$
If we integrate by parts, using the fact that $(1/(1-F))'=f/(1-F)^2$ and call $m(x)=u(x)/(1-F(x)=\beta^2(1-F)$, we get 
\begin{equation}\label{eq:definitionAuxiliaryFunctionm}
m(r)-m(x)-\int_x^r m(u)\frac{f(u)}{1-F(u)}=2h(x)\;.
\end{equation}
Assuming temporarily $h$ is differentiable, we differentiate Equation \eqref{eq:definitionAuxiliaryFunctionm} to get 
$$
-m'(x)+m(x)g(x)=2h'(x)\;, \text{ with } g(x)=\frac{f(x)}{1-F(x)}=-(\log(1-F))'\;.
$$
Because this is a first order ODE, we can integrate this equation fully to get 
$$
m(x)=\frac{1}{1-F(x)}\left[\beta^2(r)(1-F(r))^2+2\int_x^r h'(t)(1-F(t)) dt \right]\;.
$$
Note that this is the family of solutions among $m$ that are differentiable. 
\subsubsection{Key result}
So we have the following theorem. 
\begin{theorem}
Suppose $G(x)=min(x,1)$ and call $-h$ the revenue of the seller at $x$. The unique shading function $\beta$ such that $\beta^2 (1-F)$ is differentiable and $\beta(x)\leq 1$ is $\beta>0$ such 
$$
\beta^2(x;h)=\beta^2(x)=\frac{1}{(1-F(x))^2} \left[\beta^2(r)(1-F(r))^2+2\left(\int_x^r h(t)f(t)dt - h(x)(1-F(x))\right) \right]\;.
$$		
Recall that our constraint is that $\beta(r)=r<1$. Of course $h=0$ corresponds to the thresholded function. 
\end{theorem}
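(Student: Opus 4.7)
The plan is to pick up from the preliminary computation in the excerpt and complete the argument by (a) solving the first-order linear ODE that was derived, (b) rewriting the integral in terms of $h$ rather than $h'$ via integration by parts, and (c) arguing uniqueness. The starting point is the substitution $m(x) = \beta^2(x)(1 - F(x))$ together with the integral equation that ties $m$ to the seller's residual revenue function $h$. Since the preliminaries already differentiate this constraint to arrive at
\begin{equation*}
-m'(x) + m(x)\,g(x) = 2h'(x), \qquad g(x) = -\bigl(\log(1-F(x))\bigr)',
\end{equation*}
my first step is to solve this ODE explicitly.

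The integrating factor is $(1-F(x))$: multiplying through gives $\bigl(-m(x)(1-F(x))\bigr)' = 2h'(x)(1-F(x))$, so integrating from $x$ to $r$ and using the value $m(r) = \beta^2(r)(1-F(r))$ that is forced by the continuity requirement $\beta(r)=r$ produces the closed form
\begin{equation*}
m(x) = \frac{1}{1-F(x)}\Bigl[\beta^2(r)(1-F(r))^2 + 2\int_x^r h'(t)(1-F(t))\,dt\Bigr],
\end{equation*}
which is exactly the expression the preliminaries arrive at. The next step, which transforms this into the form asserted by the theorem, is integration by parts on $\int_x^r h'(t)(1-F(t))\,dt$: the boundary term at $r$ vanishes because $h(r)=0$ (the seller earns zero revenue on the empty interval $[r,r]$), the boundary term at $x$ contributes $-h(x)(1-F(x))$, and the bulk term becomes $\int_x^r h(t)f(t)\,dt$. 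Dividing through by $(1-F(x))$ to recover $\beta^2(x) = m(x)/(1-F(x))$ then yields the formula in the theorem.

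Uniqueness follows from the theory of first-order linear ODEs: any two candidate differentiable $m$'s satisfying the ODE with the common value $m(r) = \beta^2(r)(1-F(r))$ must coincide on $[0,r]$, so any two shading functions $\beta$ (positive, with $\beta^2(1-F)$ differentiable) must agree. I expect the main subtlety to be keeping track of which regularity assumptions are in force: the hypothesis that $m$ is differentiable is precisely what is needed to pass from the integral constraint to the ODE, so the derivation is self-consistent, and imposing only $\beta\leq 1$ ensures that $G(\beta)=\beta$ throughout, which is what made the substitution $m=\beta^2(1-F)$ effective in the first place. A secondary point worth flagging is that the formula implicitly requires the bracket on the right-hand side to be non-negative on $[0,r]$ so that $\beta(x)$ is real; this is a genuine feasibility condition on the seller-revenue profile $h$, and it corresponds exactly to the constraint that the seller's revenue be maximized at $x^{\star}=0$ (i.e. $h\geq 0$ with $h(0)=h(r)=0$) that motivated writing the problem in terms of $h$ to begin with.
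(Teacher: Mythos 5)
Your proposal is correct and follows essentially the same route as the paper: the paper's proof is precisely ``put the preliminaries together and integrate $\int_x^r h'(t)(1-F(t))\,dt$ by parts,'' and your integrating-factor solution of the ODE, the boundary evaluation using $h(r)=0$, and the uniqueness argument via first-order linear ODE theory are exactly the steps the paper relies on. Your added remark that the bracket must be non-negative for $\beta$ to be real is a sensible observation consistent with the feasibility constraint $h\geq 0$ already imposed in the setup.
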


\begin{proof}
Put all of the above together and then integrate by parts 	$\int_x^r h'(t)(1-F(t)) dt$ to get the formulation above. 
\end{proof}

\subsubsection{Back to the optimization problem}
We can naturally view $h$ as a slack variable. By contrast to classical finite dimensional optimization, our slack variable is a function. 
We wish to maximize 
$$
\Pi(h)=\int_0^r x G(\beta(x;h)) f(x) dx =\int_0^r x \beta(x;h) f(x) dx\;,
$$
as we assume that $\beta(x;h)$ remains below 1. Note that the part of the integral involving $h_\beta$ has now disappeared as we consider functions for which the average payment over $[0,r]$ is zero - that is the sense of our equality constraint. 

Consider $\Pi(\eps h)$, with $\eps$ very small. Since $\sqrt{1+t}=1+\frac{t}{2}$, we see that 
$$
\beta(x;\eps h)=\beta(x;0)+\frac{\eps}{\beta(r)(1-F(r))}\frac{1}{1-F(x)}\left[\int_x^r h(t)f(t)dt-h(x)(1-F(x))\right]\;,
$$
where we have used that $h(r)=0$. So we have 
$$
\Pi(\eps h)=\Pi(0)+\frac{\eps}{\beta(r)(1-F(r))} \int_0^r \frac{xf(x)}{1-F(x)} \left[\int_x^r h(t)f(t)dt-h(x)(1-F(x))\right]dx\;.
$$
The question becomes whether 
$$
\boxed{
\int_0^r \frac{xf(x)}{1-F(x)} \left[\int_x^r h(t)f(t)dt-h(x)(1-F(x))\right]dx\leq 0\;.
}
$$
\begin{lemma}
Suppose that $\psi_F(x)=x-\frac{1-F(x)}{f(x)}\leq 0$ on $[0,r]$. Then, if $h\geq 0$, 
$$
\int_0^r \frac{xf(x)}{1-F(x)} \left[\int_x^r h(t)f(t)dt-h(x)(1-F(x))\right]dx\leq 0
$$	
\end{lemma}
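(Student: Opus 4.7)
The key structural observation is that the hypothesis $\psi_F(x)\le 0$ on $[0,r]$ is equivalent to $xf(x)\le 1-F(x)$, i.e., the function
$$
w(x)\;\triangleq\;\frac{x f(x)}{1-F(x)}
$$
satisfies $0\le w(x)\le 1$ throughout $[0,r]$. So the plan is to introduce $w$, reorganize the integral by Fubini, and then exploit this bound.

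First I would split the target quantity $I$ into two pieces,
$$
I \;=\; \int_0^r w(x)\!\int_x^r h(t)f(t)\,dt\,dx \;-\; \int_0^r w(x)\,h(x)\,(1-F(x))\,dx,
$$
and apply Fubini's theorem to the first piece, swapping the order of integration on the triangle $\{0\le x\le t\le r\}$. This produces an outer integral in $t$ weighted by $h(t)f(t)$ and inner weight $\int_0^t w(x)\,dx$. In the second piece the weight $w(x)(1-F(x))$ simplifies immediately to $xf(x)$, so after pulling out $h(t)f(t)$ the expression collapses to
$$
I \;=\; \int_0^r h(t)\,f(t)\,\Bigl[\int_0^t w(x)\,dx \;-\; t\Bigr]\,dt.
$$

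At this point the conclusion is immediate from the bound $w(x)\le 1$: the bracketed factor $\int_0^t w(x)\,dx - t$ is non-positive, while $h(t)\ge 0$ (by hypothesis) and $f(t)\ge 0$, so the integrand is pointwise $\le 0$ and therefore $I\le 0$. I don't anticipate any real obstacle here — once the Fubini swap is carried out and the algebraic simplification $w(1-F)=xf$ is used, the monopoly-price/hazard-rate condition $\psi_F\le 0$ slots in exactly as needed. The only mild care is justifying Fubini, which is standard since $h$, $f$ and $w$ are non-negative and integrable on the compact interval $[0,r]$.
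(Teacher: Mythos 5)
Your proof is correct and follows essentially the same route as the paper's: a Fubini swap on the triangle $\{0\le x\le t\le r\}$, the simplification $w(x)(1-F(x))=xf(x)$, and the pointwise bound $\frac{xf(x)}{1-F(x)}\le 1$ coming from $\psi_F\le 0$, yielding a non-positive integrand. No gaps.
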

\begin{proof}
We have, using Fubini's theorem,
\begin{align*}
\int_0^r \frac{xf(x)}{1-F(x)} \left[\int_x^r h(t)f(t)dt-h(x)(1-F(x))\right]dx&=-\int_0^r xh(x)f(x)+\int_0^r\int_0^r
\frac{xf(x)}{1-F(x)} h(t)f(t)\indicator{t>x} dx dt \;, \\
&=-\int_0^r xh(x)f(x)+\int_0^r
h(t)f(t) \int_0 ^x\frac{yf(y)}{1-F(y)}dy  dt\;,\\
&=-\int_0^r h(x)f(x)\left[x-\int_0 ^x\frac{yf(y)}{1-F(y)}dy\right]\;.
\end{align*}
Now since $\psi_F(x)=x-\frac{1-F(x)}{f(x)}\leq 0$, we have equivalently
$$
\frac{xf(x)}{1-F(x)}\leq 1\;.
$$
Hence, 
$$
\int_0 ^x\frac{yf(y)}{1-F(y)}dy\leq \int_0^x 1 dy =x\;. 
$$
We conclude that for all $h\geq 0$, 
$$
\int_0^r \frac{xf(x)}{1-F(x)} \left[\int_x^r h(t)f(t)dt-h(x)(1-F(x))\right]dx\leq 0\;.
$$
\end{proof}

We have the following theorem.
\begin{theorem}\label{thm}
For the problem of revenue optimization of the buyer, thresholding is a locally optimal (among functions such that $\beta^2(1-F)$ is differentiable and $\beta$ is bounded by 1 on $[0,r]$ when the competition has cdf $G(x)=\min(x,1)$ for $x\geq 0$. 

Furthermore, it is globally optimal among those bidding strategies.
\end{theorem}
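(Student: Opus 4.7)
The plan is to build on the closed-form parameterization of feasible strategies by the slack function $h$ obtained in the previous subsection and to upgrade the local-optimality directional-derivative lemma to a global conclusion via a concavity argument. The structural observation driving the whole proof is that the formula derived above,
\begin{equation*}
\bigl((1-F(x))\beta(x;h)\bigr)^2 = \beta^2(r)(1-F(r))^2 + 2\int_x^r h(t)f(t)\,dt - 2h(x)(1-F(x))\;,
\end{equation*}
exhibits $\bigl((1-F(x))\beta(x;h)\bigr)^2$ as an \emph{affine} and nonnegative functional of $h$.

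Because $t\mapsto\sqrt{t}$ is concave and nondecreasing on $[0,\infty)$, composing it with an affine nonnegative functional produces a concave functional. Hence $h\mapsto(1-F(x))\beta(x;h)$ is concave for each fixed $x\in[0,r]$; dividing by the positive constant $1-F(x)$ preserves concavity, and integrating against the nonnegative measure $xf(x)\,dx$ preserves it once more. Consequently, the buyer revenue $\Pi(h)=\int_0^r xf(x)\beta(x;h)\,dx$ is a concave functional on the convex feasible set $\{h\geq 0 : h(0)=h(r)=0\}$ (supplemented by the linear-in-$h$ constraints that the radicand is nonnegative and that $\beta\leq 1$).

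Global optimality then reduces to a one-liner. By the gradient inequality for concave functionals, $\Pi(h)\leq \Pi(0)+\Pi'(0;h)$ for every feasible $h$. The directional derivative $\Pi'(0;h)$ was already computed above and proven nonpositive by the preceding lemma whenever $h\geq 0$ and $\psi_F\leq 0$ on $[0,r]$. Combining, $\Pi(h)\leq\Pi(0)$ for every feasible $h$, and the maximizer $h\equiv 0$ corresponds precisely to the thresholded strategy $\beta^*(x)=r(1-F(r))/(1-F(x))$.

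The subtlety is not in the core argument but in bookkeeping: one must verify that the closed-form parameterization $\beta(\cdot;h)$ is surjective on the class of competitors the theorem treats (i.e.\ every admissible $\beta$ with $\beta^2(1-F)$ differentiable arises from some feasible $h$) and that the concavity of $\sqrt{\cdot}$ is applied only where the radicand is nonnegative. Surjectivity is immediate because an admissible $\beta$ induces the differentiable slack $h(x)=-\Expb{h_\beta(X)G(\beta(X))\indicator{x\leq X\leq r}}$ which reinserted in the closed form recovers $\beta$. Nonnegativity of the radicand is equivalent to the existence of a real $\beta(x;h)$ and therefore delimits the feasible set; since this is a linear constraint in $h$, the feasible set remains convex and the concavity-plus-gradient-inequality pipeline goes through unchanged.
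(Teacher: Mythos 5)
Your proposal is correct and follows essentially the same route as the paper: both rest on the closed-form parameterization of $\beta(\cdot;h)$ by the slack function $h$, the lemma showing the first-order term is nonpositive when $\psi_F\leq 0$ and $h\geq 0$, and the concavity of the square root to upgrade the local statement to a global one. Your packaging via ``concavity of the functional $\Pi$ plus the gradient inequality'' is just a more abstract phrasing of the paper's pointwise bound $\sqrt{1+t}\leq 1+t/2$, and your added bookkeeping on surjectivity of the parameterization and nonnegativity of the radicand is a welcome (if minor) tightening of details the paper leaves implicit.
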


\begin{proof}
The local optimally comes from the previous analysis. 

The global optimality comes from the fact that using the concavity of $\sqrt{1+x}$, we have $\sqrt{1+x}\leq 1+\frac{x}{2}$ for all $x\geq -1$. Then the upper bound is what we computed above. So an upper bound on the revenue is lower than the thresholded revenue and we have global optimality. 	
\end{proof}

\subsection{The general case, local optimality}\label{subsec:app:generalCaseOptimality}
We now prove local optimality for general differentiable $G$ using differential ideas similar to the ones above.  
Let $\eps>0$ and $\beta_\eps=\beta^*(1+\eps \rho)$. $\beta^*=C/(1-F(x))$ is the shading function corresponding to thresholding the virtual value. Recall that $C=r(1-F(r))$ for continuity at $r$. Below, $\rho$ is a differentiable function. We require $\rho(r)=0$ to have continuity of $\beta_\eps$ at $r$. 
Recall the constraints that for all feasible $\beta$'s, 
$$
\forall x\leq r\;,\;  \int_x^r [\beta(F-1)]' G(\beta)=-h(x)\leq 0 \;, \text{ and } h(0)=0\;.
$$
$h(r)=0$ by construction. We assume that $\beta_\eps$ is feasible for $\eps \in [0,\eps_0)$, which gives us the notion of locality we need. Because $[\beta^*(F-1)]'=0$, we have 
$$
\int_x^r [\beta_\eps(F-1)]' G(\beta_\eps)=\eps \int_x^r (\rho\beta^*(F-1))' G(\beta^*+\eps \rho\beta^*)=-h_\eps
$$
and hence by limiting ourselves to the first order term in the Taylor expansion (the second order term in $\eps$ is asymptotically negligible), 
$$
\int_x^r (\rho\beta^*(F-1))' G(\beta^*)=-h(x)\leq 0\;, \forall x\leq r\;.
$$
Assuming that $h$ is differentiable, we can differentiate the previous equality to get 
$$
(\rho(x)\beta^*(x)(F-1)(x))' G(\beta^*(x))=h'(x)\;,
$$
and using the fact that $\rho(r)=0$ because we want continuity at $r$, we get 
$$
\beta^*(x)\rho(x)(1-F(x))=\int_x^r \frac{h'(x)}{G(\beta^*(x))}dx=\int_x^r \frac{h(x)[\beta^{*}]'(x)g(\beta^*(x))}{G^2(\beta^*(x))}dx -\frac{h(x)}{G(\beta^*(x))}\;.
$$
The last equality comes from integration by parts. 

\subsubsection{Back to the optimization problem}
Recall that we see to maximize over admissible $\rho$'s (i.e. those for which the inequality constraints are verified), 
$$
\max_\rho \int_0^x x f(x) G(\beta^*(x)(1+\eps \rho(x))) dx\simeq \max_\rho \int_0^x x f(x) G(\beta^*(x))+ 
\eps \int_0^r x f(x) g(\beta^*(x)) \rho(x) \beta^*(x) dx\;.
$$
Where the equality is to first order in the Taylor expansion. 
We claim that if $\rho$ is admissible, then the second term is negative. 
\begin{lemma}\label{app:lemma:decreaseInBuyerUtilityAwayFromThresholding}
Let $\beta^*(x)=C/(1-F(x)$. Assume that $G(\beta^*(x))>0$ on [0,r]. Suppose that $\psi_F(x)\leq 0$ on $[0,r)$. Suppose the function $\rho$ is such that for $h\geq 0$, $h(r)=0$
$$
\beta^*(x)\rho(x)(1-F(x))=\int_x^r \frac{h(x)[\beta^{*}]'(x)g(\beta^*(x))}{G^2(\beta^*(x))}dx -\frac{h(x)}{G(\beta^*(x))}\;.
$$
Then 
$$
I=\int_0^r x f(x) g(\beta^*(x)) \rho(x) \beta^*(x) dx\leq 0\;.
$$
\end{lemma}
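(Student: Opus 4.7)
The strategy is to rewrite $I$ as a linear functional in $h$ and then show that the kernel of that functional has a sign. Rather than starting from the integration-by-parts form stated in the lemma, I would go back one step to the underlying ODE $(\rho\beta^*(F-1))'\,G(\beta^*)=h'$. Integrating on $[x,r]$ and using $\rho(r)=0$ gives the cleaner identity
$$
\rho(x)\beta^*(x)(1-F(x))=\int_x^r\frac{h'(t)}{G(\beta^*(t))}\,dt\;.
$$
Plugging this into the definition of $I$, applying Fubini's theorem, and then integrating by parts using $h(r)=0$ and $P(0)=0$, where
$$
P(t):=\int_0^t\frac{x f(x)\,g(\beta^*(x))}{1-F(x)}\,dx\;,
$$
yields
$$
I=\int_0^r\frac{h'(t)}{G(\beta^*(t))}P(t)\,dt=-\int_0^r h(t)\left(\frac{P(t)}{G(\beta^*(t))}\right)'dt\;.
$$

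Since we need $I\leq 0$ for every admissible $h\geq 0$, it suffices to establish that $P/G(\beta^*)$ is non-decreasing on $[0,r]$. Using the explicit identity $[\beta^*]'(t)=\beta^*(t)f(t)/(1-F(t))$ that follows from $\beta^*(x)=C/(1-F(x))$, a short algebraic manipulation shows that $(P/G(\beta^*))'\geq 0$ is equivalent to
$$
R(t):=t\,\frac{G(\beta^*(t))}{\beta^*(t)}-P(t)\geq 0\;.
$$
Clearly $R(0)=0$, so it is enough to prove $R'(t)\geq 0$. I would differentiate $tG(\beta^*(t))/\beta^*(t)$ using the product and quotient rules and substitute the formula for $P'(t)=tf(t)g(\beta^*(t))/(1-F(t))$; after the terms involving $g(\beta^*(t))$ cancel, the surviving expression collapses to
$$
R'(t)=\frac{G(\beta^*(t))}{\beta^*(t)}\left[1-\frac{tf(t)}{1-F(t)}\right].
$$
Because $\psi_F(t)=t-(1-F(t))/f(t)\leq 0$ on $[0,r]$ by hypothesis, the bracketed factor is non-negative; together with $G(\beta^*(t))>0$, this gives $R'\geq 0$, hence $R\geq 0$, hence $I\leq 0$.

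The main obstacle I anticipate is bookkeeping in the derivative of $R$: one must apply the quotient rule to $G(\beta^*(t))/\beta^*(t)$ while also accounting for the explicit $t$ factor and the competing contribution from $P'$, and the proof only works because two singular-looking terms in $g(\beta^*(t))f(t)/(1-F(t))$ cancel exactly. It is instructive, and reassuring, that the assumption $\psi_F\leq 0$ enters precisely at the final inequality: this matches the intuition that thresholding improves only those strategies whose virtual value is non-positive on $[0,r]$, i.e., those for which the seller would rationally place the reserve at or above $r$.
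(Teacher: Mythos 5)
Your proof is correct and is essentially the paper's argument in a different packaging: after substituting for $\rho$ and applying Fubini, your kernel-positivity condition $R(t)=t\,G(\beta^*(t))/\beta^*(t)-P(t)\geq 0$ is exactly the paper's key inequality $\int_0^x \frac{yf(y)}{1-F(y)}g(\beta^*(y))\,dy\leq \frac{1}{C}\,x\,(1-F(x))\,G(\beta^*(x))$, which the paper establishes by integrating $y(1-F(y))[\beta^*]'(y)g(\beta^*(y))$ by parts and invoking $(t(1-F(t)))'=-f(t)\psi_F(t)\geq 0$, while you establish it by checking $R(0)=0$ and $R'(t)=\frac{G(\beta^*(t))}{\beta^*(t)}\bigl[1-\frac{tf(t)}{1-F(t)}\bigr]\geq 0$ --- the same point at which $\psi_F\leq 0$ enters. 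The only other deviation is cosmetic: you work from the pre-integration-by-parts identity $\rho\beta^*(1-F)=\int_x^r h'/G(\beta^*)$ and perform one additional integration by parts in $t$ to collapse $I$ into the single term $-\int_0^r h\,(P/G(\beta^*))'$, whereas the paper keeps the two resulting terms and bounds one by the other; both computations are valid and yield $I\leq 0$.
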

\begin{proof}
The strategy is the same as above. Let us call $\beta^*=\beta$ to make the notation less cumbersome. 
Replacing $\rho$ by its value, we have, using the fact that $\beta(x)=C/(1-F(x))$,
\begin{align*}
I/C&=-\int \frac{xf(x)}{1-F(x)}g(\beta(x)) \frac{h(x)}{G(\beta(x))} dx+\int_0^r dx \frac{xf(x)}{1-F(x)}g(\beta(x))\int_x^r \frac{h(t)\beta'(t)g(\beta(t))}{G^2(\beta(t))}dt\\
&=-\int \frac{xf(x)}{1-F(x)}g(\beta(x)) \frac{h(x)}{G(\beta(x))} dx+
\int_0^r\int_0^r \frac{xf(x)}{1-F(x)}g(\beta(x))\frac{h(t)\beta'(t)g(\beta(t))}{G^2(\beta(t))} \indicator{t\geq x} dx dt\\
&=-\int \frac{xf(x)}{1-F(x)}g(\beta(x)) \frac{h(x)}{G(\beta(x))} dx+
\int_0^r dx \frac{h(x)\beta'(x)g(\beta(x))}{G^2(\beta(x))} \int_0^x \frac{yf(y)}{1-F(y)}g(\beta(y))dy
\end{align*}
Now recall that $\beta'(t)=Cf/(1-F)^2$, $C>0$,  so that 
\begin{align*}
\int_0^x \frac{yf(y)}{1-F(y)}g(\beta(y))dy&=\frac{1}{C} \int_0^x y(1-F(y))\beta'(y)g(\beta(y))dy\;,\\
&=\frac{1}{C}\left[\left.t(1-F(t))G(\beta(t))\right|_0^x-\int_0^xG(\beta(t))[t(1-F(t))]' \right]\;.
\end{align*}
Of course, $(t(1-F(t)))'=-f(x)\psi_F(x)\geq 0$ since $\psi_F(x)\leq 0$. So we have 
$$
\int_0^x \frac{yf(y)}{1-F(y)}g(\beta(y))dy\leq \frac{1}{C} x (1-F(x)) G(\beta(x))\;.
$$
Hence, since $\beta'(x)>0$ and $h(x)>0$
\begin{align*}
\int_0^r dx \frac{h(x)\beta'(x)g(\beta(x))}{G^2(\beta(x))} \int_0^x \frac{yf(y)}{1-F(y)}g(\beta(y))dy
&\leq 
\int_0^r dx \frac{h(x)\beta'(x)g(\beta(x))}{G^2(\beta(x))} \frac{1}{C} x (1-F(x)) G(\beta(x))\\
&=\int_0^r dx \frac{h(x)\frac{Cf(x)}{[1-F(x)]^2}g(\beta(x))}{G(\beta(x))} \frac{1}{C} x (1-F(x))\\
&=\int_0^r dx \frac{xh(x)\frac{f(x)}{[1-F(x)]}g(\beta(x))}{G(\beta(x))}
\end{align*}
We conclude that 
$$
I\leq 0\;.
$$
\end{proof}
We have the following theorem. 
\begin{theorem} \label{thm:localOptimalityGeneralCase}
Suppose that $F$, $1-F$ and G are differentiable. Suppose furthermore that $\psi_F(x)=x-(1-F(x))/f(x)\leq 0$ on $[0,r]$. Then the function $\beta^*=r(1-F(r))/(1-F(x))$ is locally optimal among differentiable shading strategies, provided $G(\beta^*(x))>0$ on [0,r]. 
\end{theorem}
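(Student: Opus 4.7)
The plan is to work perturbatively around $\beta^*$. I would write a candidate competitor as $\beta_\varepsilon(x)=\beta^*(x)(1+\varepsilon\rho(x))$ with $\rho(r)=0$ (needed for continuity at $r$), and expand both the feasibility constraints and the buyer's utility to first order in $\varepsilon$. Because $\beta^*$ thresholds the virtual value to zero, $[\beta^*(F-1)]'\equiv 0$, so the constraint
\[
\int_x^r [\beta_\varepsilon(F-1)]'\,G(\beta_\varepsilon) = -h_\varepsilon(x)\leq 0
\]
collapses at first order to
\[
\int_x^r \big(\rho\,\beta^*(F-1)\big)' G(\beta^*) = -h(x)\leq 0,
\]
where $h\geq 0$ is the slack function with $h(r)=h(0)=0$. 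Differentiating in $x$ and integrating back using $\rho(r)=0$, I can solve for $\rho$ explicitly in terms of $h$:
\[
\beta^*(x)\rho(x)(1-F(x)) = \int_x^r \frac{h(t)[\beta^*]'(t)\,g(\beta^*(t))}{G^2(\beta^*(t))}\,dt - \frac{h(x)}{G(\beta^*(x))}.
\]
The assumption $G(\beta^*(x))>0$ on $[0,r]$ is what makes this inversion well posed.

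Local optimality is then equivalent to showing that, for every admissible $h\geq 0$, the first-order change in the buyer's utility,
\[
I=\int_0^r x\,f(x)\,g(\beta^*(x))\,\rho(x)\,\beta^*(x)\,dx,
\]
is non-positive. Substituting the expression for $\rho$ and applying Fubini's theorem to the resulting double integral converts $I$ into a sum of two single integrals over $[0,r]$: one involving $h(x)(f(x)/(1-F(x)))\cdot x\cdot g(\beta^*(x))/G(\beta^*(x))$ with a negative sign, and one involving the inner integral $J(x)=\int_0^x \frac{yf(y)}{1-F(y)}g(\beta^*(y))\,dy$. Using $[\beta^*]'=Cf/(1-F)^2$ (with $C=r(1-F(r))$) lets me rewrite $J(x)=\frac{1}{C}\int_0^x y(1-F(y))[\beta^*]'(y)g(\beta^*(y))\,dy$ and integrate by parts against $G\circ\beta^*$ to obtain
\[
J(x)=\frac{1}{C}\Big[x(1-F(x))G(\beta^*(x)) - \int_0^x G(\beta^*(t))\,[t(1-F(t))]'\,dt\Big].
\]

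Here the hypothesis $\psi_F\leq 0$ enters crucially: it is equivalent to $(t(1-F(t)))'=-f(t)\psi_F(t)\geq 0$, so the correction term in $J(x)$ is non-negative and can be dropped, yielding the bound $J(x)\leq \frac{1}{C}x(1-F(x))G(\beta^*(x))$. Plugging this inequality back into the second piece of $I$ makes it cancel exactly against the first piece, leaving $I\leq 0$ as desired. This is essentially the content of Lemma~\ref{app:lemma:decreaseInBuyerUtilityAwayFromThresholding}.

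The main obstacle is the sign bookkeeping in the second paragraph: one has to carry positivity of $h$, of $[\beta^*]'$, of $1-F$, and of $G(\beta^*)$ through a double integral swap, an integration by parts, and a substitution of $[\beta^*]'$, and then recognize that the integration-by-parts boundary term produced by the $\psi_F\leq 0$ hypothesis matches the first single integral well enough to give a clean cancellation. Once that alignment is seen, the rest is mechanical. Global optimality would require more than perturbations and, as the paper notes, is only established for $G(x)=\min(x,1)$ via the explicit parametrization in Subsection~\ref{subsec:app:particularCaseOptimality}.
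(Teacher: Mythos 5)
Your proposal is correct and follows essentially the same route as the paper's own argument in Appendix~\ref{subsec:app:generalCaseOptimality}: the multiplicative perturbation $\beta^*(1+\varepsilon\rho)$ with $\rho(r)=0$, the first-order collapse of the constraints using $[\beta^*(F-1)]'\equiv 0$, the inversion expressing $\rho$ via the slack $h$, and the Fubini/integration-by-parts computation in Lemma~\ref{app:lemma:decreaseInBuyerUtilityAwayFromThresholding} where $\psi_F\leq 0$ yields $(t(1-F(t)))'\geq 0$ and hence $I\leq 0$. The only point the paper adds that you omit is the trivial dichotomy in the proof of Theorem~\ref{thm:localOptimalityGeneralCase} (either $\beta^*$ is isolated in the feasible set, or the perturbative computation applies), which is a minor formality.
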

\begin{proof}
Two cases are possible. Either $\beta^*$ is an isolated point in which case it is by definition locally optimal. If that is not the case, the previous computations show that $\beta^*$ is locally optimal as any local deviation in the feasible set of shading functions yields lower utility for the buyer - that is the content of Lemma \ref{app:lemma:decreaseInBuyerUtilityAwayFromThresholding}.
\end{proof}

\section{Proof of the results on Myerson auction with one strategic bidder}
\label{MyersonOneStrategic}
The Myerson auction \cite{Myerson81}  consists in using the virtual value to both define the allocation rule and the payment rule. The item is allocated to the bidder with the highest non-negative virtual value and she pays:
\begin{equation*}
\psi_{B_i}^{-1}(\max(\max_{j\neq i}\psi_{B_j}(X_j),0))
\end{equation*}
As for the lazy second price auction, we can use \textit{the Myerson lemma} and show that the expected utility of the strategic bidder using the bidding strategy $\beta$ in the Myerson auction is
$$
U_i(\shadingFunc_i)=\Expb{[X_i-h_{\beta_i}(X_i)] F_Z(h_{\beta_i}(X_i))}\;.
$$
with $F_Z$ the cumulative distribution function of $Z=\max_{2\leq j \leq K}(0,\vValue_j(X_j))$, $X_i$ is the value of bidder $i$, and $h_{\beta_i} = \psi_{B_i}(\beta_i(X_i))$ is the virtual value function associated with the bid distribution.
Suppose that $\shadingFunc \mapsto \shadingFunc_t=\shadingFunc +  t \rho$, where $t>0$ is small and $\rho$ is a function. We note that $h_{\beta+t\rho}(x)=h_\beta+t h_\rho$. We have the following result.

\begin{lemma}\label{lemma:directDerivativeMyersonAuction}
Suppose we change $\beta$ into $\beta_t=\beta+t \rho$. Both $\beta$ and $\beta_t$  are assumed to be non-decreasing. Call $x_{\beta}$ the reserve value corresponding to $\beta$, assume it has the property that $h_\beta(x_\beta) = 0$ and $h_\beta'(x_\beta)\neq 0$ ($h_\beta'$ is assumed to exist locally). Assume $x_\beta$ is the unique global maximizer of the revenue of the seller. Then, 
\begin{align*}
\left.\frac{\partial}{\partial t} U(\beta_t) \right|_{t=0} 
&=\Expb{h_\rho(X) [(X-h_{\beta}(X))f_Z(h_{\beta}(X))-F_Z(h_{\beta}(X))]\indicator{X>x_{\beta}}}
\\
&+\frac{h_\rho(x_{\shadingFunc})}{h'_{\shadingFunc}(x_{\shadingFunc})}\prod_{i=2}^K F_{V_i}(0) f_{1}(x_{\shadingFunc}) x_{\shadingFunc}\;,\notag
\end{align*}
\end{lemma}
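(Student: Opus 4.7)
The key simplification is that $Z=\max_{j\geq 2}(0,\psi_j(X_j))\geq 0$ almost surely, so $F_Z(h_\beta(X))=0$ on the set $\{h_\beta(X)<0\}$. Under the monotonicity assumptions, $h_\beta(X)\geq 0$ iff $X\geq x_\beta$, so the utility rewrites as
\begin{equation*}
U(\beta)=\int_{x_\beta}^\infty (x-h_\beta(x))\,F_Z(h_\beta(x))\,f_1(x)\,dx\;.
\end{equation*}
This is the crucial step: it converts the expectation with an implicit indicator into an integral whose lower limit depends on $\beta$. When we replace $\beta$ by $\beta_t=\beta+t\rho$, both the integrand and the lower limit $x_{\beta_t}$ move with $t$, so I intend to apply the Leibniz differentiation rule, which splits $\partial_t U(\beta_t)|_{t=0}$ into an interior term and a boundary term evaluated at $X=x_\beta$.

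For the interior term, since $h_{\beta_t}=h_\beta+t h_\rho$ (using the linearity of $h_{\cdot}$ in its argument, from Lemma \ref{definition_psi}), a routine derivative of $(x-h_\beta(x)-t h_\rho(x))F_Z(h_\beta(x)+t h_\rho(x))$ at $t=0$ yields
\begin{equation*}
h_\rho(x)\bigl[(x-h_\beta(x))f_Z(h_\beta(x))-F_Z(h_\beta(x))\bigr]\;.
\end{equation*}
Integrated against $f_1(x)\mathbf{1}_{x>x_\beta}$, this is exactly the first line of the claimed formula. This piece uses only the smoothness of $F_Z$ on $(0,\infty)$, which I will assume as a regularity condition inherited from the smoothness of the competitors' value distributions.

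For the boundary term I need $dx_{\beta_t}/dt|_{t=0}$. This I extract by implicit differentiation of the defining relation $h_{\beta_t}(x_{\beta_t})=0$: differentiating gives $h_\rho(x_\beta)+h'_\beta(x_\beta)\,\partial_t x_{\beta_t}|_{t=0}=0$, whence $\partial_t x_{\beta_t}|_{t=0}=-h_\rho(x_\beta)/h'_\beta(x_\beta)$. The boundary contribution is $-\partial_t x_{\beta_t}|_{t=0}$ times the integrand evaluated at $X=x_\beta$, and since $h_\beta(x_\beta)=0$ the integrand reduces to $x_\beta\cdot F_Z(0)\cdot f_1(x_\beta)$. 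Plugging in the derivative gives
\begin{equation*}
\frac{h_\rho(x_\beta)}{h'_\beta(x_\beta)}\,x_\beta\,F_Z(0)\,f_1(x_\beta)\;,
\end{equation*}
which matches the second line of the claim once we identify $F_Z(0)=\prod_{j=2}^K F_{V_j}(0)$.

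\textbf{Main subtlety.} The distribution of $Z$ has an atom at $0$ of mass $\prod_{j\geq 2}F_{V_j}(0)$, so $F_Z$ is discontinuous from the left at $0$. I must therefore be careful that, in the Leibniz formula, the integrand at the moving boundary $X=x_\beta$ is evaluated using the right limit (this is consistent with $F_Z$ being right-continuous, i.e.\ $F_Z(0)=F_Z(0^+)=\prod_{j\geq 2}F_{V_j}(0)$). Once this is granted, the uniqueness and nondegeneracy assumptions ($h'_\beta(x_\beta)\neq 0$, unique global maximizer of seller revenue) guarantee that $x_{\beta_t}$ is well-defined and smooth in $t$ for small $t$, so the Leibniz rule applies without further issue. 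The steps themselves are short; the only nontrivial ingredient is this correct handling of the atom, which is exactly what supplies the factor $\prod_{j=2}^K F_{V_j}(0)$ in the final expression.
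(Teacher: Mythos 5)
Your proof is correct and follows exactly the route the paper's one-line proof alludes to ("taking directional derivative of the utility of the bidder gives the equation"): rewrite $U(\beta_t)$ as an integral with moving lower limit $x_{\beta_t}$, use the linearity $h_{\beta+t\rho}=h_\beta+t h_\rho$ from Lemma \ref{definition_psi}, apply the Leibniz rule, and obtain the boundary term via implicit differentiation of $h_{\beta_t}(x_{\beta_t})=0$ together with $F_Z(0)=\prod_{j=2}^K F_{V_j}(0)$. Your explicit handling of the atom of $Z$ at $0$ (right-continuity of $F_Z$) is a welcome detail the paper leaves implicit.
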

\proof{Proof.}
Taking directional derivative of the utility of the bidder gives the equation.
\endproof
In the work below, we naturally seek a shading function $\shadingFunc$ such that these directional derivatives are equal to 0. We will therefore be interested in particular in functions $\shadingFunc$ such that $[x-\vValue_B(\shadingFunc(x))]f_Z(\vValue_B(\shadingFunc(x)))=F_Z(\vValue_B(\shadingFunc(x)))$, when $\vValue_B(\shadingFunc(x))>0$. The second term in our equation has intuitively to do with the event where the other bidders are discarded for not beating their reserve price. As we will see below, we can sometimes ignore this term, for instance when an equilibrium strategy exists which amounts to canceling the reserve value ($x_\beta = 0$ in this case). 

We will be keenly interested in shading functions $\shadingFunc$ such that $$
(x-\vValue_B(\shadingFunc(x)))f_Z(\vValue_B(\shadingFunc(x)))=F_Z(\vValue_B(\shadingFunc(x)))\;, \text{ when } \vValue_B(\shadingFunc(x))>0\;. 
$$
Indeed, for those $\shadingFunc$'s, the expectation in our differential will be 0. Hence, computing the differential will be relatively simple and in particular will give us reasonable guesses for $\shadingFunc$ and descent directions, even if it does not always give us directly an optimal shading strategy. Furthermore, when $K$ is large, the second term fades out, as the probability that no other bidder clear their reserve prices becomes very small. 

If we proceed formally, and call, for $x>0$, $h(x)=(\text{id}+F_Z/f_Z)^{-1}(x)$ (temporarily assuming that this - possibly generalized - functional inverse can be made sense of), we see that solving the previous equation amounts to solving
$$
\vValue_B(\shadingFunc(x))=(\text{id}+F_Z/f_Z)^{-1}(x)=h(x)\;.
$$
Lemma \ref{definition_psi} can of course be brought to bear on this problem. We note that we will be primarily interested in solutions of this equation for $x$'s such that $\vValue_B(\shadingFunc(x))>0$. 

\subsection{Explicit computations in the case of Generalized Pareto families}
It is clear that now we need to understand $F_Z$, $F_Z/f_Z$ and related quantities to make progress. By definition, if $X$ is Generalized Pareto (GP) with parameters $(\mu,\sigma,\xi)$, we have, when $\xi<0$ and $t\in [\mu,\mu-\sigma/\xi]$, 
$$
P(X\geq t)=(1+\xi (t-\mu)/\sigma)^{-1/\xi}\;
$$
and otherwise $P(X\geq t)=\Expb(-(t-\mu)/\sigma)$ if $\xi=0$.  In GP families, the virtual value has the form $\vValue(t)=c_\vValue(t-r^*)$, where $r^*$ is the monopoly price and $c_\vValue=1-\xi$.

\begin{lemma}\label{lemma:variousGPComps}
Suppose $Y$ has a Generalized Pareto distribution. Call $F_Y$ the cdf of $Y$ and $f_Y$ its density. 

If $V=\vValue_Y(Y)$, where $\vValue_Y$ is the virtual value of $Y$, we have $F_V(t)=F_Y(\vValue_Y^{-1}(t))$ and 
$$
\frac{F_V(t)}{f_V(t)}=c_\vValue \frac{F_Y(\vValue_Y^{-1}(t))}{f_Y(\vValue_Y^{-1}(t))}\;,
$$
where $c_{\vValue_Y}=\vValue_Y'(t)$. If $r^*_Y$ is the monopoly price associated with $Y$, we more specifically have 
$$
\vValue_Y^{-1}(t) = \frac{t}{c_{\vValue_Y}}+r^*_Y\;, \text{ and }\frac{F_V(t)}{f_V(t)}=c_{\vValue_Y} \frac{F_Y(t/c_{\vValue_Y}+r^*_Y)}{f_Y(t/c_{\vValue_Y}+r^*_Y)}\;.
$$
\end{lemma}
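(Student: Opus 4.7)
The plan is to treat this as a routine change-of-variables calculation, exploiting the one structural fact about the Generalized Pareto (GP) family that is doing all the work: its virtual value is affine. Concretely, the paper has already recorded that for a GP distribution with shape parameter $\xi$, the virtual value has the form $\psi_Y(t) = c_{\psi_Y}(t - r^*_Y)$ with $c_{\psi_Y} = 1 - \xi$. Under the usual regularity assumption $\xi < 1$, this map is strictly increasing and affine, so $\psi_Y$ is a bijection onto its image and its derivative is the constant $c_{\psi_Y}$.

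With this in hand, the first identity is immediate from monotone change of variables. Since $V = \psi_Y(Y)$ and $\psi_Y$ is strictly increasing,
\[
F_V(t) = \mathbb{P}(\psi_Y(Y) \leq t) = \mathbb{P}(Y \leq \psi_Y^{-1}(t)) = F_Y(\psi_Y^{-1}(t)).
\]
Differentiating and using the chain rule together with $\psi_Y'\equiv c_{\psi_Y}$,
\[
f_V(t) = f_Y(\psi_Y^{-1}(t)) \cdot \frac{d}{dt}\psi_Y^{-1}(t) = \frac{f_Y(\psi_Y^{-1}(t))}{\psi_Y'(\psi_Y^{-1}(t))} = \frac{f_Y(\psi_Y^{-1}(t))}{c_{\psi_Y}}.
\]
Taking the ratio then yields the first displayed formula $F_V(t)/f_V(t) = c_{\psi_Y}\, F_Y(\psi_Y^{-1}(t))/f_Y(\psi_Y^{-1}(t))$.

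The specialized form is obtained by writing out $\psi_Y^{-1}$ explicitly: solving $c_{\psi_Y}(s - r^*_Y) = t$ gives $\psi_Y^{-1}(t) = t/c_{\psi_Y} + r^*_Y$, and substitution into the previous identity produces the final displayed formula. There is no substantive obstacle to overcome here; the entire lemma is a direct corollary of (a) the affineness of the virtual value in the GP family and (b) the standard density-transformation rule under a strictly monotone map. The only mild hypothesis worth stating explicitly in a careful write-up is $c_{\psi_Y} > 0$ (equivalently $\xi < 1$), which is the regularity condition ensuring both that $\psi_Y$ is invertible and that the chain-rule manipulation above is legitimate.
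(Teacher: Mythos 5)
Your proof is correct and follows essentially the same route as the paper's: the monotone change of variables $F_V(t)=F_Y(\psi_Y^{-1}(t))$, the chain rule giving $f_V(t)=f_Y(\psi_Y^{-1}(t))/\psi_Y'(\psi_Y^{-1}(t))$, and the affineness $\psi_Y(t)=c_{\psi_Y}(t-r^*_Y)$ of the virtual value in the GP family to make the derivative constant and invert explicitly. Your explicit remark that $c_{\psi_Y}=1-\xi>0$ is needed for invertibility is a minor but welcome addition the paper leaves implicit.
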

In case $F_Y$ has finite support, we naturally restrict $t$ to values such that $x=\vValue_Y^{-1}(t)$ is just that $f_Y(x)>0$. 
\begin{proof}
$F_V$ is just the cumulative distribution function of $\psi_Y(Y)$, where $\vValue_Y$ is the virtual value of $Y$. Hence, since in GP families $\vValue_Y$ is increasing, 
$$
F_V(t)=P(V\leq t)=P(\vValue_Y(Y)\leq t)=F_Y(\vValue_Y^{-1}(t))\;.
$$
In particular, 
$$
f_V(t)=\frac{f_Y(\vValue_Y^{-1}(t))}{\vValue_Y'(\vValue_Y^{-1}(t))}\;.
$$
In Generalized Pareto families, $\vValue_Y$ is linear, so that $\vValue_Y'$ is a constant, because $\vValue_Y(t)=c_{\vValue_Y}(t-r^*_Y)$, where $r^*_Y$ is the monopoly price. The first result follows immediately. Noticing that $\vValue_Y^{-1}(x)=x/c_{\vValue_Y}+r^*_Y$ gives the second result.  
\end{proof}
The previous lemma yields the following useful corollary. 
\begin{corollary}\label{coro:keyManipsMyersonGPopponents}
Suppose $K\geq 2$, $Y_2,\ldots,Y_K$ are independent, identically distributed, with Generalized Pareto distribution. Call $\vValue_Y$ their virtual value function and $Z=\max_{2\leq i \leq K}(0,\vValue_Y(Y_i))$. Then, if $F_Z$ is the cumulative distribution function of $Z$, we have 
$$
\frac{F_Z(t)}{f_Z(t)}=\frac{c_{\vValue_Y}}{K-1} \frac{F_Y(t/c_{\vValue_Y}+r^*_Y)}{f_Y(t/c_{\vValue_Y}+r^*_Y)}\;, \text{ for } t>0 \text{ and such that } f_Y(t/c_{\vValue_Y}+r^*_Y)>0\;.
$$
\end{corollary}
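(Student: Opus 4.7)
The statement is an almost direct consequence of Lemma~\ref{lemma:variousGPComps} together with the order‑statistic structure of the maximum, so my plan is to separate the two ingredients cleanly.

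First, I would handle the ``$\max$ with $0$''. For $t>0$, the event $\{Z\leq t\}$ equals $\{0\leq t\}\cap\bigcap_{i=2}^{K}\{\psi_Y(Y_i)\leq t\}$, which simplifies to $\bigcap_{i=2}^{K}\{V_i\leq t\}$ where $V_i=\psi_Y(Y_i)$. Since the $Y_i$ are i.i.d., so are the $V_i$, and hence
\begin{equation*}
F_Z(t)=F_V(t)^{K-1}\quad\text{for }t>0.
\end{equation*}
This is the only place where the assumption $t>0$ is used: for $t\leq 0$ there is an atom at $0$ in the distribution of $Z$, but for strictly positive $t$ the distribution of $Z$ coincides with that of $\max_{i}V_i$.

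Next, I would differentiate to obtain the density. Since $F_Y$ is continuous and $\psi_Y$ is affine and strictly increasing in a GP family, $F_V$ is differentiable on the relevant range, and
\begin{equation*}
f_Z(t)=(K-1)\,F_V(t)^{K-2}\,f_V(t)\quad\text{for }t>0,
\end{equation*}
wherever $f_V(t)>0$, i.e.\ wherever $f_Y(\psi_Y^{-1}(t))>0$. Taking the ratio, the $F_V^{K-2}$ factors cancel and I get
\begin{equation*}
\frac{F_Z(t)}{f_Z(t)}=\frac{1}{K-1}\cdot\frac{F_V(t)}{f_V(t)}.
\end{equation*}

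Finally, I would substitute the explicit expression from Lemma~\ref{lemma:variousGPComps}, namely $F_V(t)/f_V(t)=c_{\psi_Y}\,F_Y(t/c_{\psi_Y}+r^*_Y)/f_Y(t/c_{\psi_Y}+r^*_Y)$, to arrive at the claimed identity. There is no real obstacle in the argument; the only care required is to restrict to $t>0$ (so that the atom at $0$ does not appear) and to the set where $f_Y(t/c_{\psi_Y}+r^*_Y)>0$ (so that the ratio is well defined), both of which are precisely the hypotheses stated in the corollary.
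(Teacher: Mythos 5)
Your proof is correct and follows exactly the route the paper intends: the paper states this corollary as an immediate consequence of Lemma~\ref{lemma:variousGPComps}, and the intended (unwritten) argument is precisely your decomposition $F_Z(t)=F_V(t)^{K-1}$ for $t>0$, differentiation, cancellation of the $F_V^{K-2}$ factor, and substitution of the lemma. Your explicit handling of the atom at $0$ and of the set where $f_Y(t/c_{\vValue_Y}+r^*_Y)>0$ is a welcome addition of care, but not a different approach.
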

\subsection{An example: uniform non-strategic bidders}\label{app:subsec:myersonOneStratUnif}
In this subsection we assume that bidder 1 is facing $K-1$ other bidders, with values $Y_i$'s that are i.i.d $\Unif[0,1]$. In this case, $c_{\vValue_Y}=2$ and $r^*_Y=1/2$ so $F_Z(t)=\min(1,[(t+1)/2]^{K-1})$ for $t>0$.   
We recall that the $\Unif[0,1]$ distribution is GP(0,1,-1). Bidder 1 is strategic whereas bidders 2 to $K$ are not and bid truthfully. 

\begin{lemma}[Shading against $(K-1)$ uniform bidders]\label{lemma:shadingAgainstUnifBidders}
Suppose that $x$ has a density that is positive on its support. We assume for simplicity that $x$ is bounded by $(K+1)/(K-1)$. Let $\epsilon>0$ be chosen by bidder 1 arbitrarily close to 0. 
Let us call 
$$
h^{(\eps)}_K(x)=
\begin{cases}
\frac{K-1}{K} \frac{\eps}{1+\eps} x & \text{ if } x \in [0,(1+\eps)/(K-1)) \;,\\
\frac{K-1}{K}\left(x-\frac{1}{K-1}\right) & \text{ if } x\geq (1+\eps)/(K-1)\;.
\end{cases}
$$
A near-optimal shading strategy is for bidder 1 to shade her value through
$$
\shadingFunc^{(\eps)}_1(x)=\Expb{h^{(\eps)}_K(t)|t\geq x}\;.
$$
As $\eps$ goes to $0^+$, this strategy approaches the optimum. 

If the support of $x$ is within $(1/(K-1),(K+1)/(K-1))$, then $\eps$ can be taken equal to 0.
\end{lemma}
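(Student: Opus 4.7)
The plan is to reduce the problem to a pointwise optimization over the virtual-value function $h=h_\beta=\vValue_B\circ\beta$, which is realizable by an increasing differentiable $\beta$ through the ODE of Lemma \ref{definition_psi}. By Myerson's lemma applied to the Myerson auction, the utility is
$$U_1(\beta)=\Expb{(X-h_\beta(X))F_Z(h_\beta(X))\indicator{h_\beta(X)\geq 0}}\;.$$
Integrating the identity $(\beta(1-F))'=-h_\beta f$ from $x$ to the right endpoint of the support (where $\beta(1-F)\to 0$ since $X$ is bounded and $\beta$ stays bounded) yields $\beta(x)=\Expb{h(T)\mid T\geq x}$ with $T\sim F$; differentiating gives $\beta'=(f/(1-F))(\beta-h)$, so $\beta$ is non-decreasing whenever $h$ is. Hence maximizing $U_1$ over increasing differentiable $\beta$'s is equivalent to maximizing $\Expb{(X-h(X))F_Z(h(X))\indicator{h(X)\geq 0}}$ over non-decreasing continuous $h$'s.

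Next I carry out the pointwise maximization. Applying Corollary \ref{coro:keyManipsMyersonGPopponents} to $K-1$ i.i.d.\ $\Unif[0,1]$ opponents (so $c_{\vValue_Y}=2$ and $r^*_Y=1/2$) gives $F_Z(t)/f_Z(t)=(t+1)/(K-1)$ for $t\in(0,1]$. For each fixed $x$, the map $h\mapsto (x-h)F_Z(h)\indicator{h\geq 0}$ vanishes on $\{h<0\}$, and for $h>0$ its derivative in $h$ is $-F_Z(h)+(x-h)f_Z(h)$. Setting this to zero and plugging in the ratio above yields the unique interior critical point
$$h^*(x)=\frac{K-1}{K}\Bigl(x-\frac{1}{K-1}\Bigr)\;,$$
which is a maximizer by elementary sign analysis. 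It is non-negative precisely when $x\geq 1/(K-1)$; combined with the assumption $x\leq (K+1)/(K-1)$ (which gives $h^*(x)\leq 1$, within the support of $Z$), the pointwise optimum over admissible $h$'s is $h^*(x)\vee 0$, which is exactly the $\eps\downarrow 0$ limit of $h^{(\eps)}_K$. This function is continuous and non-decreasing, hence realizable by the inverse formula $\beta(x)=\Expb{h(T)\mid T\geq x}$, so it attains the supremum of $U_1$.

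The sole subtlety, and the reason for the strict $\eps$-regularization, is that if $h\equiv 0$ on the entire sub-interval $[0,1/(K-1)]$ then the seller's revenue $r(1-F_B(r))$ is flat there, every point of the interval is a valid monopoly price, and the reserve value is not uniquely $0$ without invoking welfare benevolence. Taking $h^{(\eps)}_K(x)=\tfrac{K-1}{K}\tfrac{\eps}{1+\eps}x$ on $[0,(1+\eps)/(K-1))$ keeps the virtual value of the bid distribution strictly positive on $(0,\infty)$, pinning the reserve value at $0$ and, via Lemma \ref{lemma:directDerivativeMyersonAuction}, killing the boundary term in the directional derivative because $x_\beta=0$. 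Continuity at the knot is routine (both branches evaluate to $\eps/K$), monotonicity is immediate, and dominated convergence gives $U_1(\beta^{(\eps)})\to U_1(\beta^{(0)})$ as $\eps\downarrow 0$, proving near-optimality. When the support of $X$ is contained in $(1/(K-1),(K+1)/(K-1))$, the first branch of $h^{(\eps)}_K$ is inactive on the support, so one may take $\eps=0$ directly. The main technical obstacle is handling this reserve-value regularization cleanly; the pointwise maximization itself is elementary once the $h$-parametrization has been set up.
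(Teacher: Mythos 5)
Your proposal is correct and follows essentially the same route as the paper: reduce the Myerson-auction utility to a pointwise maximization over the virtualized bid $h$, obtain $h^*(x)=\tfrac{K-1}{K}\bigl(x-\tfrac{1}{K-1}\bigr)\vee 0$ (the paper maximizes $(x-c)(c+1)^{K-1}$ directly where you use the first-order condition with $F_Z/f_Z$, an equivalent computation), and then introduce the $\eps$-regularization for exactly the same reason — to make the virtualized bid strictly positive and increasing so that the reserve value is pinned at $0$ and the strategy is realizable via $\beta(x)=\Expb{h(T)\mid T\geq x}$. No substantive differences.
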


\begin{proof}
If we call $h(x)=\psi_B(\shadingFunc(x))$ we can in this case write bidder 1's expected payoff directly using the results of the previous subsection: 
$$
U(\shadingFunc)=\int_{x:h(x)>0} (x-h(x))\min\left(1,\frac{[h(x)+1]^{K-1}}{2^{K-1}}\right) f_1(x) dx\;.
$$
In light of the fact we want to maximize this integral as a function of $h$, with the requirement that $h> 0$, it is natural to study the function $f_x(c)=(x-c)[c+1]^{K-1}$. 

If we call $h_K(x)=\argmax_{c\geq 0} f_x(c)$, we can split the problem into two cases. If $x>1/(K-1)$, $h_K(x)=\frac{K-1}{K}\left(x-\frac{1}{K-1}\right)$. Note that with our assumption that $x\leq (K+1)/(K-1)$, $h_K(x)\leq 1$. For $x<1/(K-1)$, the function $f_x(\cdot)$ is decreasing for $c\geq 0$. Hence, 
$$
h_K(x)=\argmax_{c\geq 0}(x-c)[c+1]^{K-1}=
\begin{cases}
0 & \text{ if } x\leq 1/(K-1)\;,\\
\frac{K-1}{K}\left(x-\frac{1}{K-1}\right) & \text{ if } x>1/(K-1) \;.
\end{cases}
$$
Recall that for Lemma \ref{definition_psi} to apply, we need to integrate an increasing function and $h_K$ is not increasing on $(0,\infty)$. 

However, bidder 1 can use the following $\eps$-approximation strategy: let us call 
$$
h^{(\eps)}_K(x)=
\begin{cases}
\frac{K-1}{K} \frac{\eps}{1+\eps} x & \text{ if } x \in [0,(1+\eps)/(K-1)) \;,\\
\frac{K-1}{K}\left(x-\frac{1}{K-1}\right) & \text{ if } x\geq (1+\eps)/(K-1)\;.
\end{cases}
$$
Notice that $\sup_x |h^{(\eps)}_K(x)-h_K(x)|<\eps/K$. In light of Lemmas \ref{definition_psi} and \ref{lemma:directDerivativeMyersonAuction}, the corresponding  function
$$
\shadingFunc^{(\eps)}_1(x)=\Expb{h^{(\eps)}_K(x)|x\geq x}
$$
is increasing and will then guarantee an expected payoff that is nearly optimal since it will be 
$$
U(\shadingFunc^{(\eps)})=\frac{1}{2^{K-1}}\int (x-h_K^{(\eps)}(x))[h_K^{\eps)}(x)+1]^{K-1} f_1(x) dx\;.
$$
It can be made arbitrarily close to optimal by decreasing $\eps$. Using $\eps>0$ guarantees that the virtualized bid $h^{(\eps)}_K(x)$ is always strictly positive and hence effectively sends the monopoly price for bidder 1 to 0. (Even if $\shadingFunc^{(0)}$ is increasing, a potential problem might occur if the virtualized bid is exactly zero. Using $\shadingFunc^{(\eps)}$ with $\eps=0^+$ solves that problem. We can also verify a posteriori that it also avoids potential problems related to ironing, which justifies post-hoc the formulation of the utility we used in the first place.)

If $x$ is supported on a subset of $[1/(K-1),(K+1)/(K-1))$, taking $\eps=0$ is possible and optimal. 
\end{proof}
The assumption that $x\leq (K+1)/(K-1)$ can easily be dispensed of as the proof makes clear : one simply needs to look for the argmax of another function. Our main example follows and does not require taking care of this minor technical problem. 

\paragraph{Case where bidder 1 has value distribution $\mathcal{U}[0,1]$} We first note that $x\leq 1\leq (K+1)/(K-1)$, so Lemma \ref{lemma:shadingAgainstUnifBidders} applies as-is. We therefore have 
$$
\shadingFunc^{(\eps)}_1(x)=
\begin{cases}
\frac{K-1}{K}[\frac{1}{2}(1+x)-\frac{1}{K-1}] & \text{ if } 1\geq x\geq x_{\eps}=\frac{1+\eps}{K-1}\;,\\
\frac{K-1}{K} \frac{1}{1-x}\left(\frac{\eps}{1+\eps} \frac{1}{2} (x_\eps^2-x^2)+\shadingFunc^{(\eps)}_1(x_\eps) (1-x_\eps)\right) & \text{ if } x<\frac{1+\eps}{K-1}\;.
\end{cases}
$$
Taking $\eps$ to 0 yields
$$
\shadingFunc_1(x)=
\begin{cases}
\frac{K-1}{K}[\frac{1}{2}(1+x)-\frac{1}{K-1}] & \text{ if } x\geq \frac{1}{K-1}\;,\\
\frac{1}{1-x} \frac{(K-2)^2}{2(K-1)K} & \text{ if } x<\frac{1}{K-1}\;.
\end{cases}
$$
See Figure \ref{fig:uniform_bid_profiles} for a plot of $\shadingFunc_1$ and comparison to other possible shading strategies.  

Similar computations can be carried out if $x$ has another GP distribution. For those distributions, the shading beyond $1/(K-1)$ is also affine in the value of bidder 1, $x$. Interestingly, it is easy to verify that affine transformations of GP random variables are GP. However, if the support of $x$ includes part of $(0,1/(K-1))$, $\shadingFunc_1(x)$ will not have a GP distribution in general.

\begin{figure}
\begin{center}
	\includegraphics[width=0.8\textwidth]{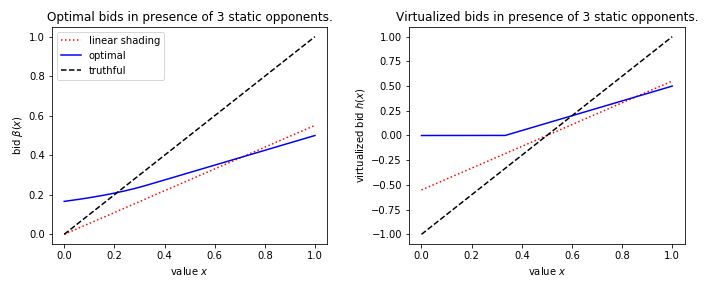}
\end{center}
	\caption{\textbf{Myerson auction: Bids and virtualized bids with one strategic bidder} There are K=4 bidders, only one of them is strategic. On the left hand side, we present a plot of the bids sent to the seller. ``Linear shading" corresponding to a bid $\shadingFunc_\alpha(x)=\alpha x$, where $x$ is the value of bidder 1; here $\alpha$ is chosen numerically to maximize that buyer's payoff - see Lemma \ref{lemma:directDerivativeMyersonAuction}. ``Optimal" corresponds to the strategy described in Lemma \ref{lemma:shadingAgainstUnifBidders}, with $\eps=0^+$. On the right hand side (RHS), we present the virtualized bids, i.e. the value taken by the associated virtual value functions evaluated at the bids sent to the seller. This corresponds on average to what the buyer is paying in the Myerson auction, when those virtualized bids clear 0. We can interpret the RHS figure as showing that for both optimal and linear shading,  a strategic buyer end up winning more often and paying less (conditional on the fact that she won) than if she had been truthful; this explains why her average payoff is higher than with truthful bidding.}
\label{fig:uniform_bid_profiles}	
\end{figure}

\subsection{Further computations in the Generalized Pareto case} 
\begin{lemma}\label{lemma:OneStrategicMyersonInGP}
Suppose a strategic bidder faces $K-1$ opponents sharing the same distribution $F_Y$ in the Generalized Pareto family. Then, assuming that the seller is welfare benevolent, her optimal shading function is such that her virtualized bid $h_{\text{optimal}}(x)$ satisfies 
$$
h_{\text{optimal}}(x)=\max(0,\psi_Y(\beta^I_{Y,K}(\psi_Y^{-1}(x))))\;.
$$
where $\beta^I_{Y,K}$ is the first price bid of a bidder facing competition with cdf $G=F_Y^{K-1}$.
The corresponding shading function $\beta$ can be easily obtained by an application of Lemmas \ref{definition_psi} and \ref{lemma:directDerivativeMyersonAuction}.
	
\end{lemma}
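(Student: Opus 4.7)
The plan is to work in the virtualized-bid variable $h=h_{\beta_i}=\psi_{B_i}\circ\beta_i$ rather than in $\beta_i$ directly, using Lemma~\ref{definition_psi} as a bijection (via a first-order ODE) between admissible monotone shading functions and regular functions $h$. In the Myerson utility $U_i(\beta_i)=\mathbb{E}[(X_i-h(X_i))F_Z(h(X_i))]$ stated in the main text, the GP symmetry of the truthful opponents gives a closed form for $F_Z$: since the $K-1$ opponents share the same virtual value $\psi_Y$, affine with slope $c_\psi$ (Lemma~\ref{lemma:variousGPComps}), we have $F_Z(t)=F_Y^{K-1}(\psi_Y^{-1}(t))$ for $t\ge 0$ and $F_Z(t)=0$ for $t<0$.

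Because $h$ enters the integrand only through its value at $x$, the maximization decouples into a family of one-dimensional problems, which I would solve pointwise. For each fixed $x$, maximize $(x-h)F_Z(h)$: the region $h<0$ contributes zero, and for $h\ge 0$ the affine invertibility of $\psi_Y$ in the GP family lets me substitute $u=\psi_Y^{-1}(h)\ge r^*_Y$ to obtain
$$\max_{u\ge r^*_Y}\ (x-\psi_Y(u))\,F_Y^{K-1}(u).$$
Differentiating and dividing through by $c_\psi$, the FOC collapses to
$$\psi_Y^{-1}(x)-u=\frac{F_Y(u)}{(K-1)f_Y(u)},$$
which is exactly the FOC characterizing the best-response bid of a single bidder with value $\psi_Y^{-1}(x)$ facing competition of cdf $G=F_Y^{K-1}$ in a first-price auction --- by definition, $u=\beta^I_{Y,K}(\psi_Y^{-1}(x))$. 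On the complementary region where this interior candidate would give a non-positive $h$, the boundary $h=0$ dominates (the integrand vanishes there anyway), and the two regimes combine into $h_{\text{optimal}}(x)=\max\bigl(0,\psi_Y(\beta^I_{Y,K}(\psi_Y^{-1}(x)))\bigr)$. The corresponding $\beta_i$ is then read off by solving the first-order ODE $\beta_i-\beta_i'(1-F_1)/f_1=h_{\text{optimal}}$ from Lemma~\ref{definition_psi}.

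The main obstacle is the one already encountered in the uniform case (Lemma~\ref{lemma:shadingAgainstUnifBidders}): showing that the pointwise optimum is realized by a \emph{genuine, monotone} shading function inducing a regular bid distribution, and that the induced reserve value is $0$, so that the boundary term in the directional derivative (Lemma~\ref{lemma:directDerivativeMyersonAuction}) vanishes and a welfare-benevolent seller does not cut off part of the support. I would handle both issues simultaneously with the $\varepsilon$-perturbation trick of Lemma~\ref{lemma:shadingAgainstUnifBidders}: on the sub-interval where the raw maximizer yields $h=0$, replace $h$ by a strictly positive ramp of slope $O(\varepsilon)$ that forces monotonicity of $h$ (hence of $\beta_i$ via Lemma~\ref{definition_psi}) and strict positivity of the virtualized bid everywhere. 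This costs only $O(\varepsilon)$ in the objective; letting $\varepsilon\downarrow 0$ recovers the announced optimum, and an upper-bound argument in the spirit of Theorem~\ref{thm:localAndGlobalOptimalityofThresholding} can then be invoked to upgrade local to global optimality in the same regularity class.
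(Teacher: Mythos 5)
Your proposal follows essentially the same route as the paper's proof: pointwise maximization of $(x-h)F_Z(h)$ after the substitution $u=\psi_Y^{-1}(h)$, identification of the first-order condition with the first-price best-response equation $\psi_Y^{-1}(x)-u=F_Y(u)/\bigl((K-1)f_Y(u)\bigr)$ for competition with cdf $F_Y^{K-1}$, the $\max(0,\cdot)$ truncation on the region where the interior candidate is non-positive, and the $\varepsilon$-ramp treatment of monotonicity that the paper itself invokes in the comment preceding its proof. The one step you gloss over that the paper makes explicit is why the first-order condition pins down the \emph{unique global} pointwise maximizer: the paper verifies that $H=\mathrm{id}+G_{Y,K-1}/g_{Y,K-1}$ is increasing in Generalized Pareto families (using $\xi<0$), so the derivative $\delta(t)$ of the pointwise objective is decreasing and has at most one positive root, which also yields the case split $t^*=0$ versus $t^*>0$ cleanly; it is worth adding this check rather than appealing to a separate global-optimality argument in the spirit of Theorem~\ref{thm:localAndGlobalOptimalityofThresholding}, which is not needed here.
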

\textbf{Comment~:} we can of course find $h_{\text{optimal}}^{\eps}$ as above to approximate $h_{\text{optimal}}$ by an increasing positive function that is arbitrarily close to our target and avoid various technical issues. 
\begin{proof}{\textbf{Proof of Lemma \ref{lemma:OneStrategicMyersonInGP}}}
The computation for independent uniform opponents generalizes easily to opponents with value distributions in other GP families. In particular we need to find $h$ that maximizes
$$
\int_{x:h(x)\geq 0}(x-h(x))\bigg[F_Y\left(\frac{h(x)}{c_{\psi_Y}}+r_Y^*\right)\bigg]^{K-1} f_1(x) dx\;.
$$
We can maximize point by point and hence we are looking for $t^*(x)$ such that
$$
t^*(x)=\argmax_t(x-t)F_Y\left(\frac{t}{c_{\psi_Y}}+r_Y^*\right)^{K-1}\;, t>0\;.
$$ 
Differentiating the above expression gives 
$$
\delta(t)=f_Y(t/c_{\psi_Y}+r)F_Y^{K-2}(t/c_{\psi_Y}+r^*_Y)\left[\frac{x-t}{c_{\psi_Y}}-\frac{1}{K-1}
\frac{F_Y}{f_Y}(t/c_{\psi_Y}+r^*_Y)\right]\;.
$$
The expression in the bracket can be written $\psi_Y^{-1}(x)-H(\psi_Y^{-1}(t))$ where $H=\textrm{id}+\frac{G_{Y,K-1}}{g_{Y,K-1}}$, where $G_{Y,K-1}$ is the cdf of the max of $K-1$ i.i.d random variables and $g_{Y,K-1}$ its derivative. Elementary computations show that this function is increasing in GP families. In fact its derivative can be shown to be $1+(1-\frac{\xi}{\sigma}G_{}(x)/(1-G(x)))/(K-1)$ and $\xi<0$. Hence $H(\psi_Y^{-1}(t))$ is also increasing. Hence $\delta(t)$ is a decreasing function of $t$. It is also trivially continuous in GP families. We conclude that the equation $\delta(t)=0$ has at most 1 positive root. 

If $\psi_Y^{-1}(x)<H(\psi^{-1}_Y(0))$, we see that $\delta(t)<0$ for $t\geq 0$, in which case $t^*=0$. 
If that is not the case, then $\psi^{-1}_Y(t^*)=H^{-1}(\psi_Y^{-1}(x))$. Hence we have shown that 
$t^*=\max(0,\psi_Y(H^{-1}\psi_Y^{-1}(x)))$. Now we notice that the $H^{-1}(x)$ is nothing but the first price bid of a bidder facing competition with cdf $G=F_Y^{K-1}$, a bid function we denote by $\beta^{I}_{Y,K}$. So we conclude that 
$$
h_{\text{optimal}}(x)=\max(0,\psi_Y(\beta^I_{Y,K}(\psi_Y^{-1}(x))))\;.
$$
Once again the fact that $h_{\text{optimal}}$ is non-decreasing (as a composition of non-decreasing functions) avoids issues related to ironing.
\end{proof}
\newpage
\section{Notes on learning with splines}
\label{appendix_splines}
A natural question is  whether the buyer can compute shading strategies  numerically. A first approach is to look back at the gradient of the bidder's utility in the direction of a certain function $\rho$, i.e.,  the directional derivative, that can be computed by elementary calculus
\begin{align*}
\left.\frac{\partial}{\partial t} U_i(\beta_t)\right|_{t=0}&=\Expb{\left\{g(\beta(X_i))[X_i-\beta(X_i)]\right\}\rho(X_i)\indicator{X_i\geq x_\beta}}\\
&+G(\beta(x_{\beta}))\left[\frac{x_{\beta}f(x_{\beta})}{h_{\beta}^{'}(x_\beta)}-(1-F(x_{\beta}))\right]\rho(x_{\beta})\\
&-\frac{\rho'(x_{\beta})x_{\beta}(1-F(x_{\beta}))G(\beta(x_{\beta}))}{h_{\beta}^{'}(x_\beta)}\;.
\end{align*}
and to look at shading function expressed in a specific basis as$$
\beta(x)=\sum_{i=1}^N c_i(\beta) f_i(x)\;,
$$
and try to optimize over $c_i$. It would be also  quite natural to do  an isotonic regression and optimize over non-decreasing functions directly; this approach is tackled later on. 

\paragraph{A natural basis} Splines (see e.g. \cite{htfElementsStatLearning} for a practical introduction) are a natural candidate for the function $f_i$'s. In particular, first order splines are piecewise continuous functions, hence evaluating derivatives is trivial and it is easy to account in the formula above for the finitely many discontinuities of the derivative that will arise. If $\xi_k$'s are given knots, first order splines are the functions $$
f_1=1, f_2(x)=x, f_{k+2}(x)=(x-\xi_k)_+=\max(x-\xi_k,0)\;.
$$
Higher order splines could of course also be used. 
\begin{lemma}
As described above, the optimal shading problem can be numerically approximated using steepest descent by a succession of linear programs, provided the non-decreasing constraint on $\beta$ can be written linearly in $c_i$. This is of course the case for 1st order spline.
\end{lemma}
\begin{proof}
After the function is expanded in a basis, the functional gradient  becomes a standard gradient, and the shading function can be improved with a steepest descent.  If the reserve value is not one of the knots, the gradient above is easy to compute:  each step of the optimization requires to solve a constrained LP to ensure that the solution is increasing. 

For 1st order splines, because the derivative is constant between knots, checking that $\beta'(x)\geq 0$ for all $x$ amounts to checking finitely many linear constraints and hence is amenable to an LP. 
\end{proof}

This approach amounts to an exploration of parts of the space of shading functions. Since the objective is not even continuous, though differentiable in a large part of the parameter space, the optimization problem is hard. In our experiments, we got significant improvement over bidding truthfully by using the above numerical method. However, we encountered the discontinuities of the optimization problem described above: our numerical optimizer got stuck at shading functions around which the reserve value was very unstable, which corresponds to revenue curves for the seller with several distant (approximate) local maxima: a small perturbation in function space does not induce much loss of the revenue on the seller side, but can have a huge impact on the reserve value and hence the buyer revenue.

\subsection{Some numerical experiments with splines.}
\begin{figure}[h!]
\begin{tabular}{ccc}
\includegraphics[width=0.33\textwidth, height=0.19\textheight]{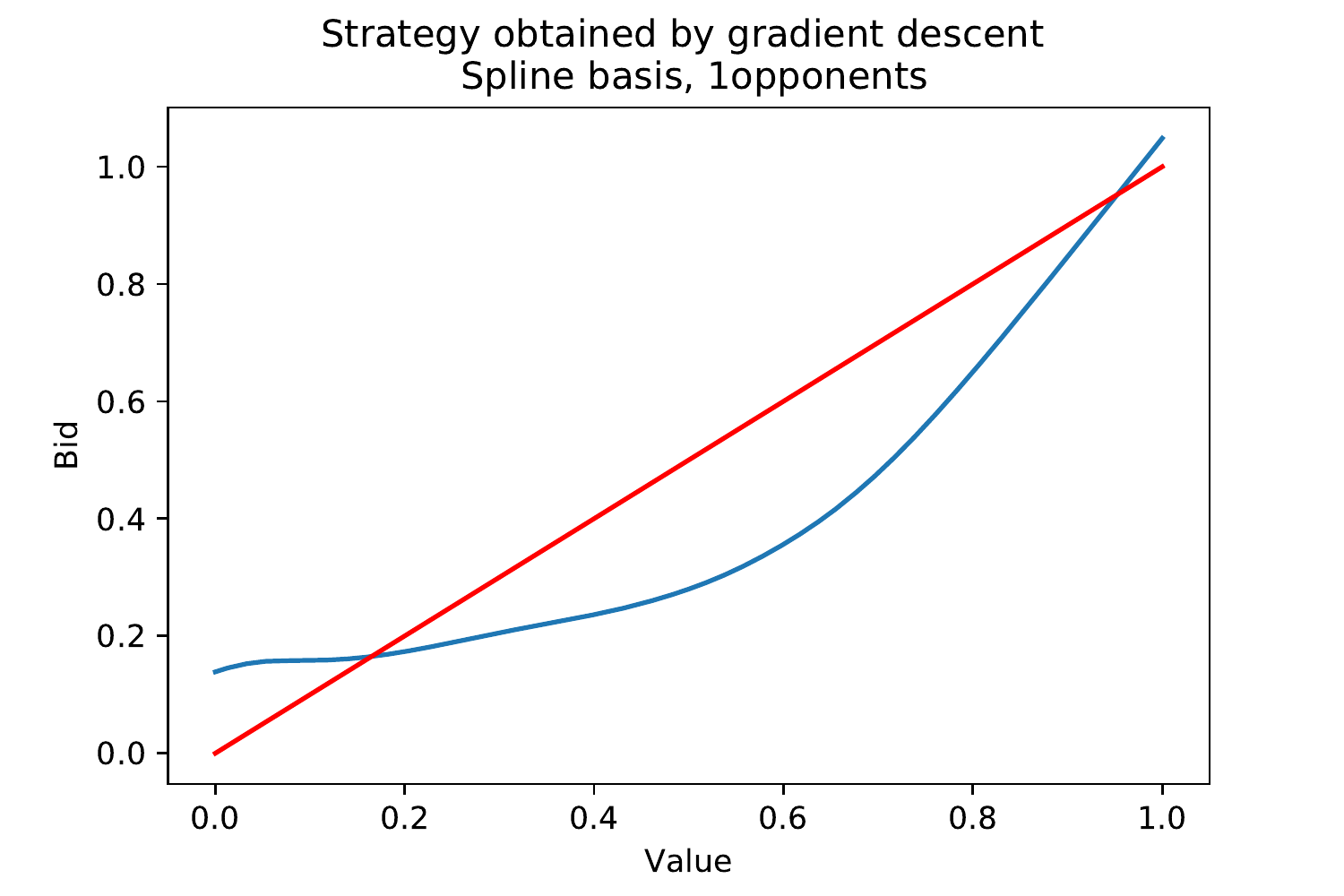} &
\includegraphics[width=0.33\textwidth, height=0.19\textheight]{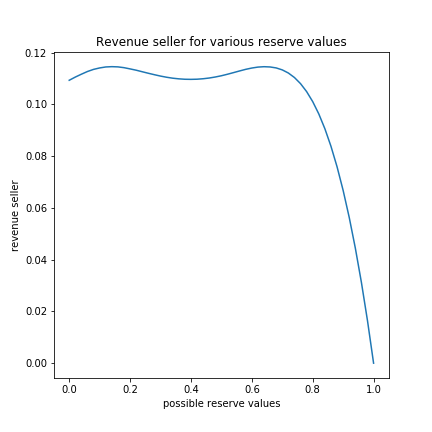}&
\includegraphics[width=0.33\textwidth, height=0.19\textheight]{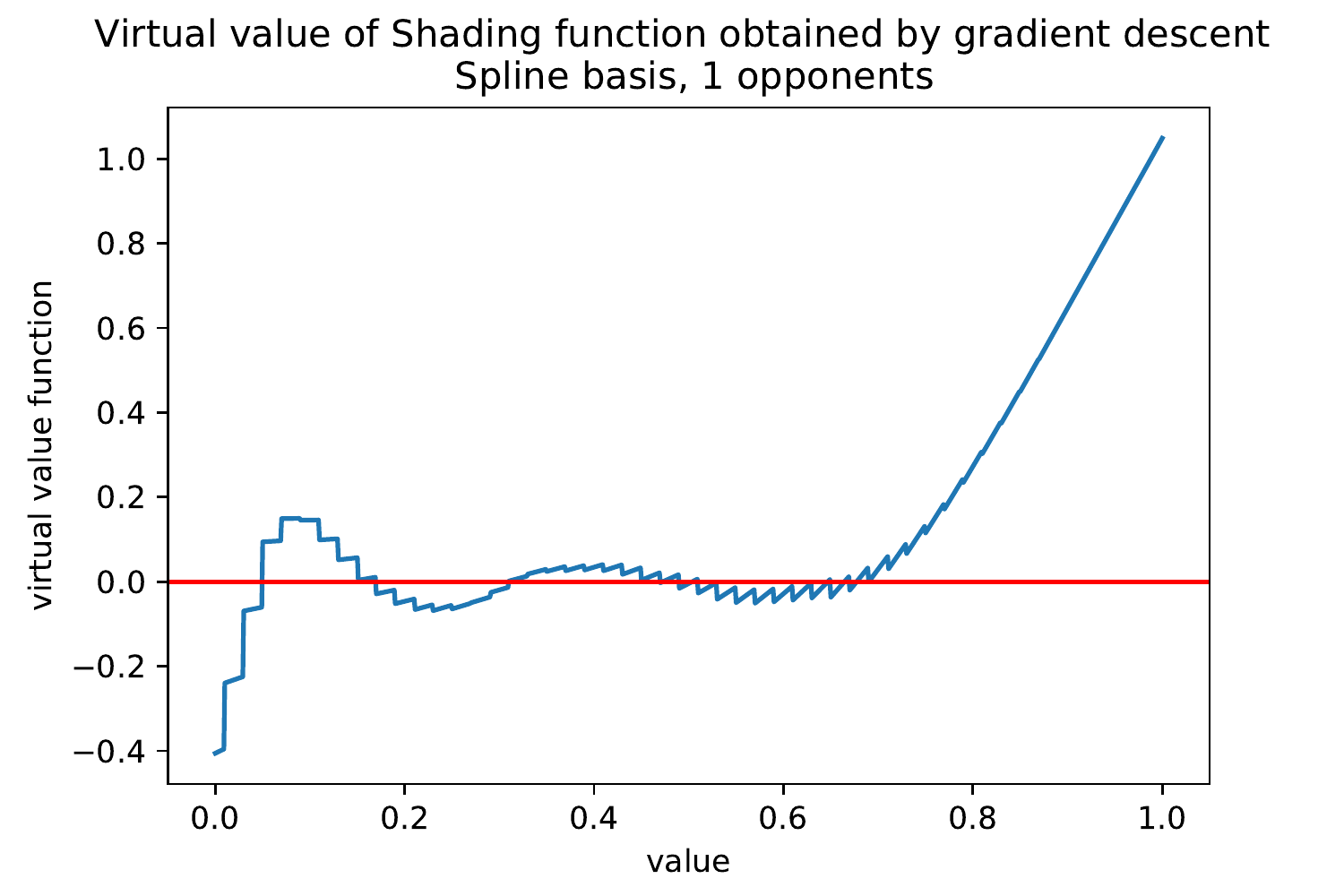}
 \end{tabular}
\caption{\textbf{At the end of gradient descent. Left: Strategy mapping bids to values. Middle: Seller revenue as a function of the reserve value. Left: Virtual value corresponding of the bid distribution as a function of the value.} The reserve value is equal to 0.21 and the reserve price to 0.2. We are in the case of two bidders with uniform distribution. One is strategic and the other one is bidding truthfully.}
\label{fig:end_grad_descent}
\end{figure}

\begin{figure}[h!]
\begin{tabular}{cccc}
\includegraphics[width=0.33\textwidth, height=0.19\textheight]{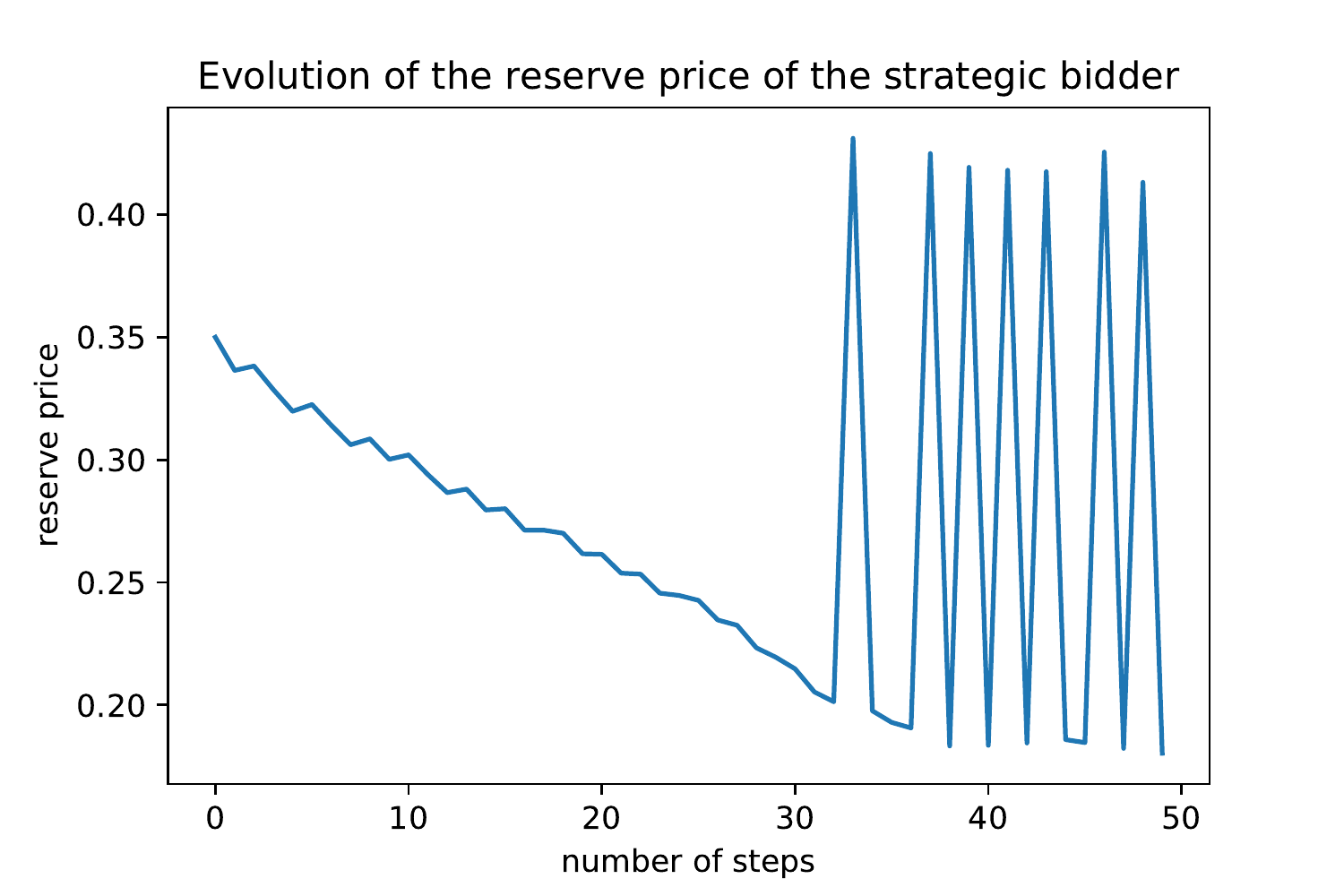} &
\includegraphics[width=0.33\textwidth, height=0.19\textheight]{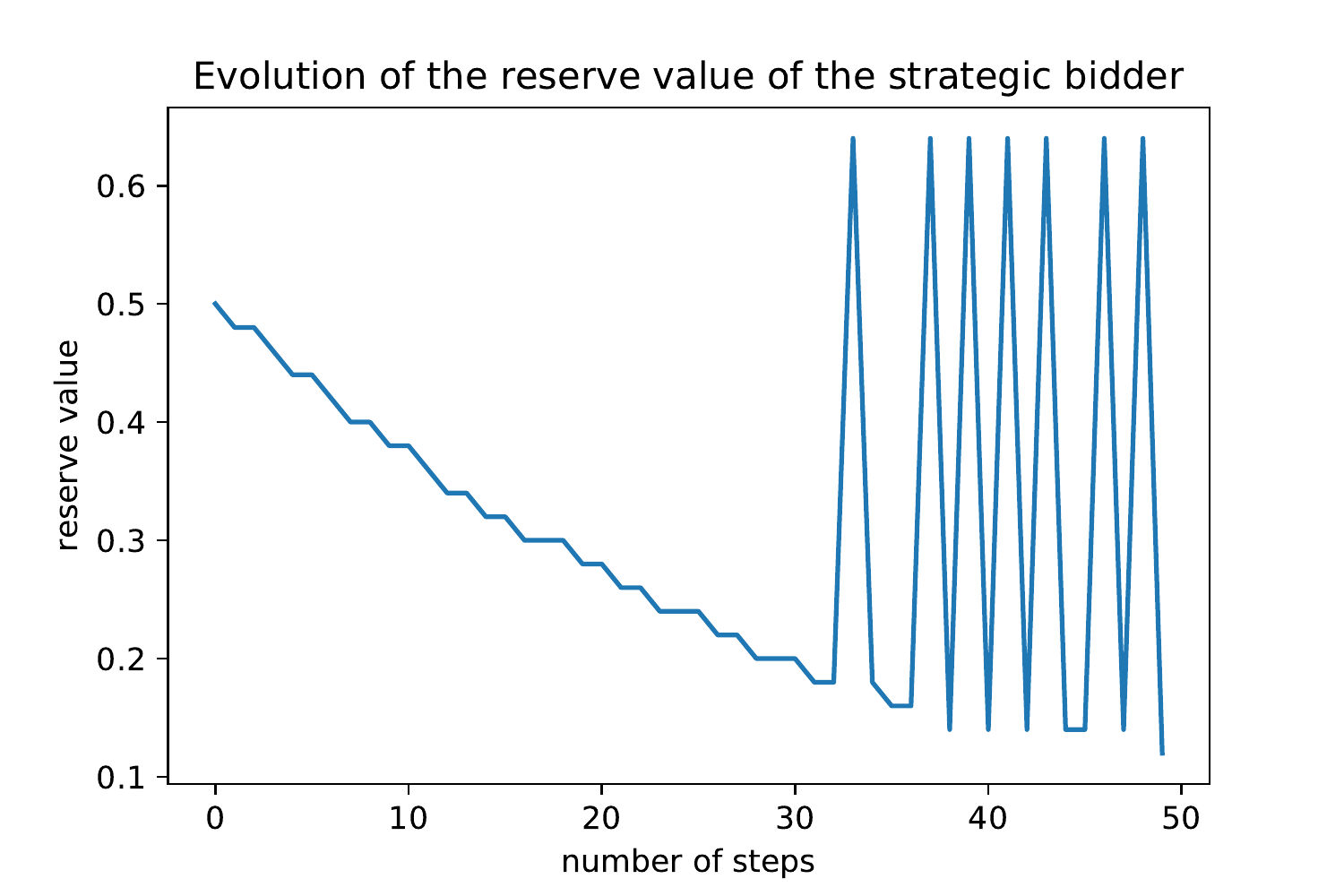}&
\includegraphics[width=0.33\textwidth, height=0.19\textheight]{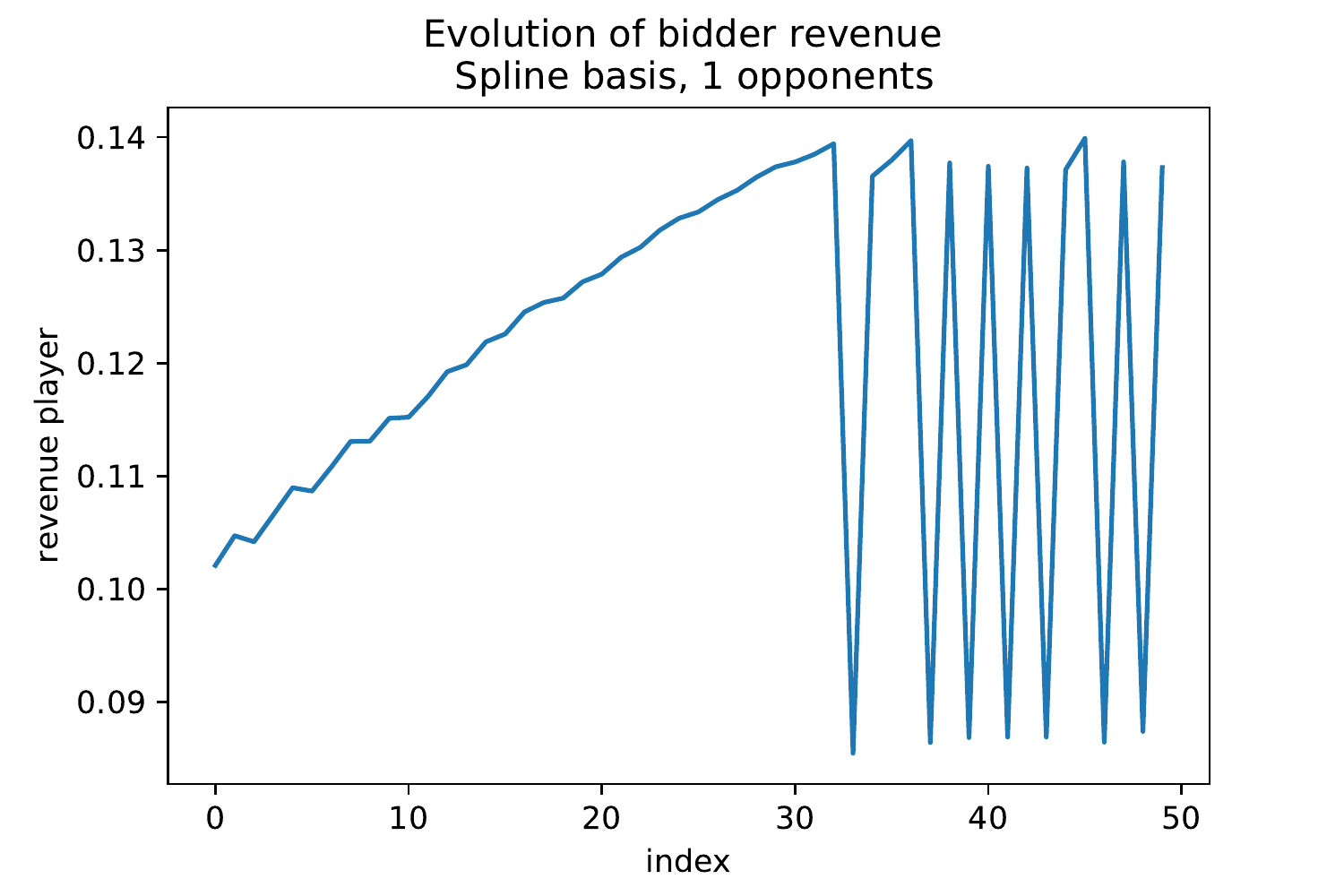}
 \end{tabular}
\caption{\textbf{During gradient descent.} The seller revenue is flat making the estimation of the objective depending on the reserve value hard and unstable.}
\label{fig:during_grad_descent}
\end{figure}
This experiment was ran in the case of two bidders with uniform value distribution. The first one is strategic and the second one is bidding truthfully. We consider the case of a lazy second price auction. The reserve price of the first bidder is computed based on her bid distribution. We recall that in the case of two bidders bidding truthfully with uniform distribution and reserve price equal to 0.5, the utility of one bidder is equal to 0.083. We see during the first 100 steps, the optimization works well and the algorithm finds a strategy with utility around 0.12 which is already a 44\% increase compared to the truthful strategy. Then, optimization is very unstable because seller revenue is flat and the computation of the reserve value very unstable. This is why we introduce the relaxation that we optimize in the next section.

Compared to our Pytorch code that we introduce in the next section optimizing the relaxation, the gradient descent code is also very slow because the reserve value is computed by exhaustive search. Having a fully differentiable approach enables also to adapt it easily to many other different auction mechanisms.

\newpage

\section{Numerical results with the relaxation and comments on the different strategies}
\label{appendix_experiments}
\subsection{Comments on the code to reproduce the experiments}
To rerun the code, we advise the careful reader to use the notebook notebookStrategiesPytorch.ipynbto see how to compute the different strategies in the various settings presented in the paper. The code uses Python 3.6 and Pytorch 0.4.1.
\subsection{Comparison between the theory and the results of the optimization for the Myerson auction}
\label{comparisonMyerson}
Figure \ref{fig:myerson_results} should be compared with Figure \ref{fig:uniform_bid_profiles}. Figure \ref{fig:uniform_bid_profiles} gives the optimal strategy with a strategic bidder with uniform distribution against 3 other bidders that are bidding truthfully with also a uniform distribution. Figure \ref{fig:myerson_results}  is the result of the optimization of
$$
U_i(\shadingFunc_i)=\Expb{[X_i-h_{\beta_i}(X_i)] F_Z(h_{\beta_i}(X_i))}\;.
$$
with $F_Z$ the cumulative distribution function of $Z=\max_{2\leq j \leq K}(0,\vValue_j(X_j))$, $X_i$ is the value of bidder $i$, and $h_{\beta_i} = \psi_{B_i}(\beta_i(X_i))$ is the virtual value function associated with the bid distribution.
\begin{figure}[h!]
\begin{tabular}{cccc}
\includegraphics[width=0.33\textwidth, height=0.19\textheight]{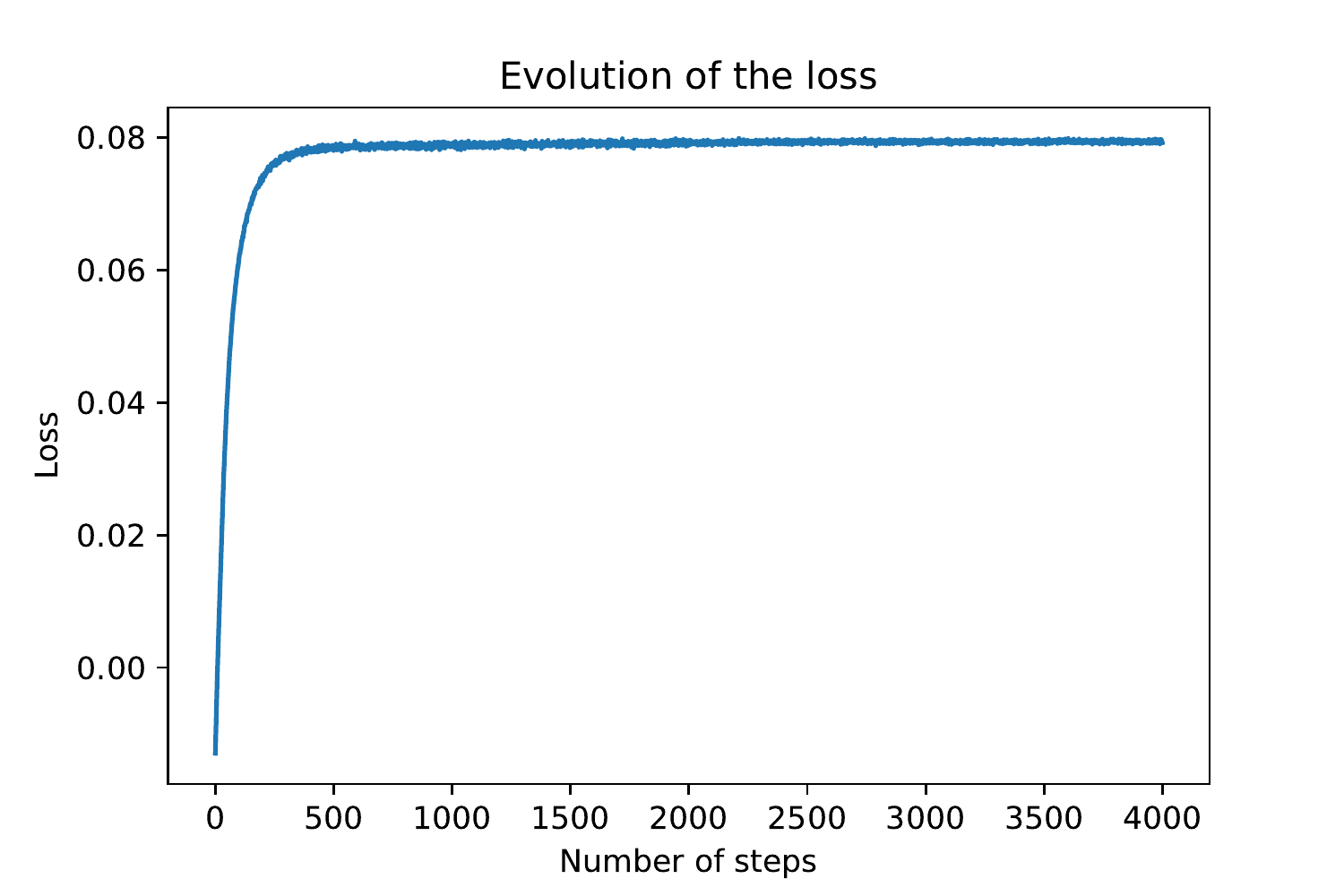} &
\includegraphics[width=0.33\textwidth, height=0.19\textheight]{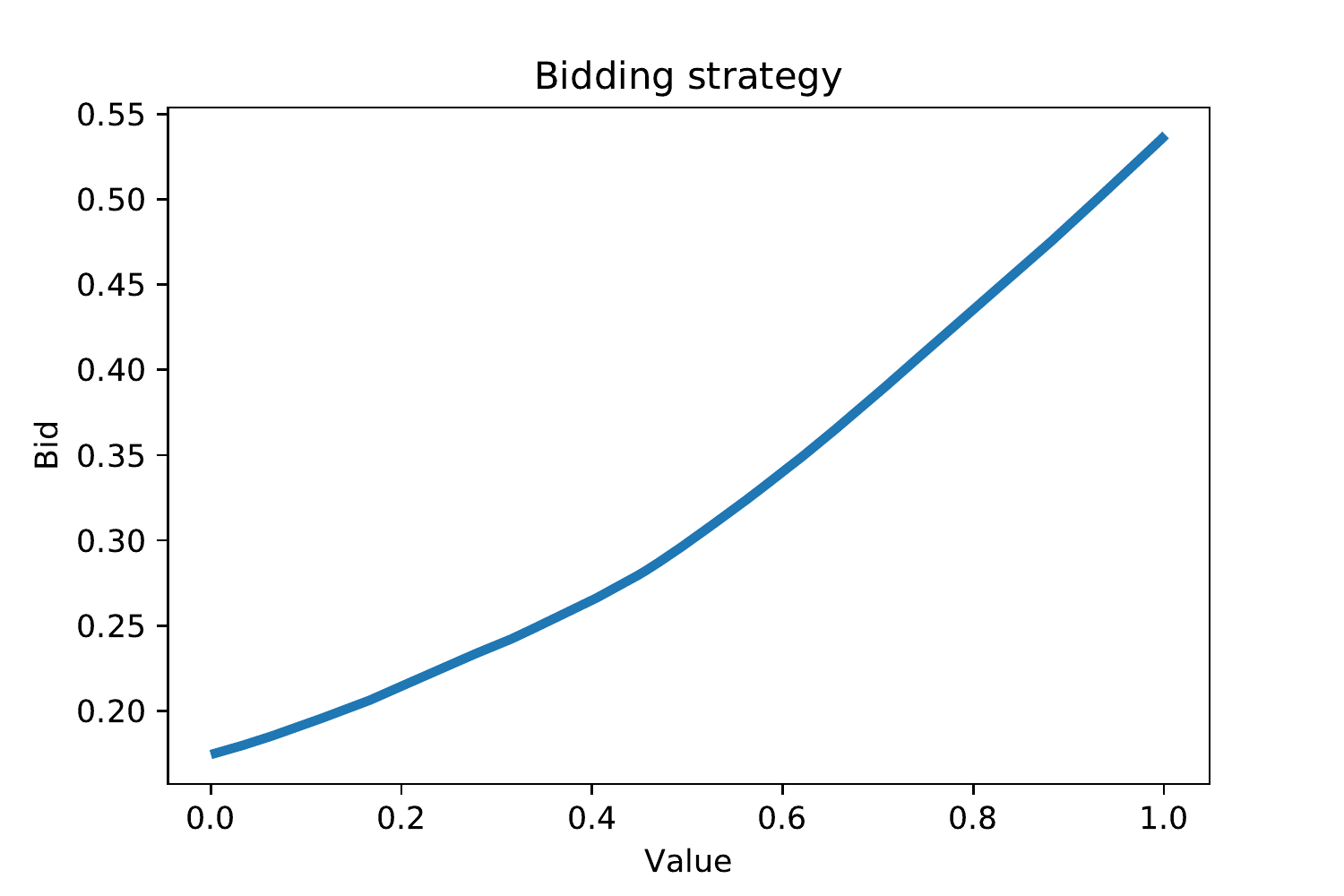}&
\includegraphics[width=0.33\textwidth, height=0.19\textheight]{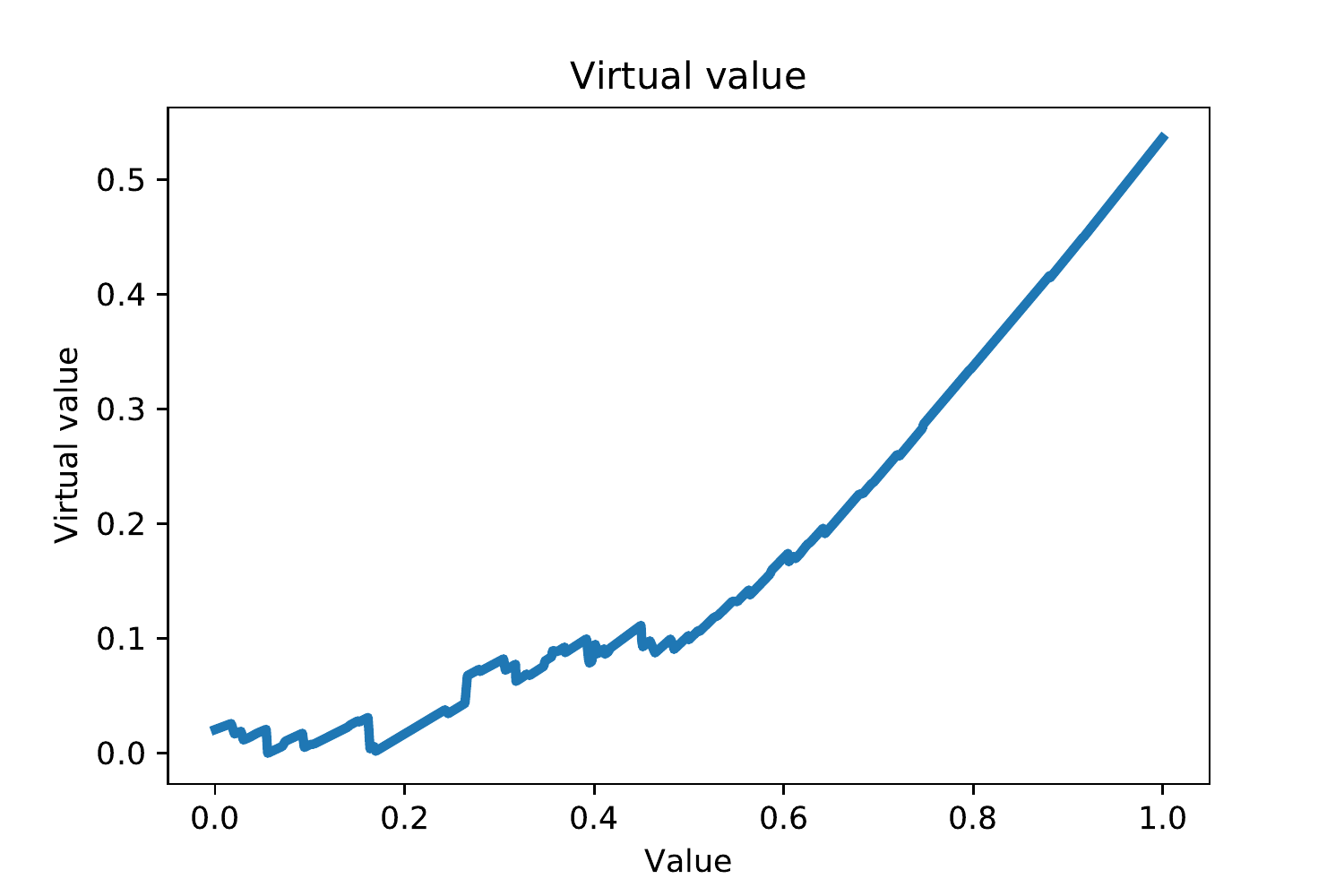}
 \end{tabular}
\caption{\textbf{Left:}Evolution of the objective during the learning. \textbf{Middle:} Bidding strategy at the end of the training. \textbf{Right:} Virtual value at the end of the training.The objective correspond to one Myerson auction with three opponents.}
\label{fig:myerson_results}
\end{figure}
\subsection{Results on eager second price auction with monopoly reserve prices}
\label{eagerappendix}

\begin{figure}[h!]
\begin{tabular}{cccc}
\includegraphics[width=0.33\textwidth, height=0.19\textheight]{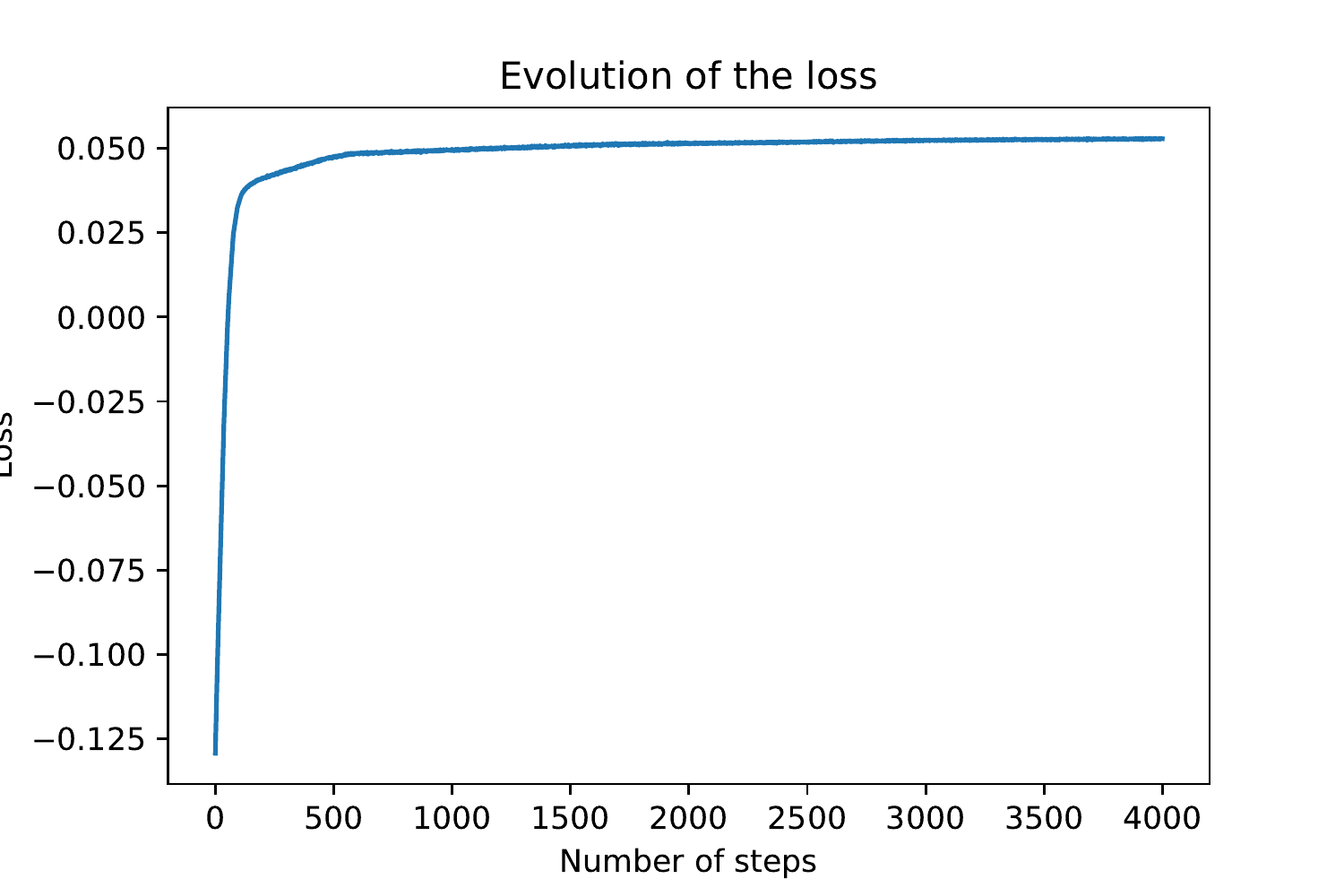} &
\includegraphics[width=0.33\textwidth, height=0.19\textheight]{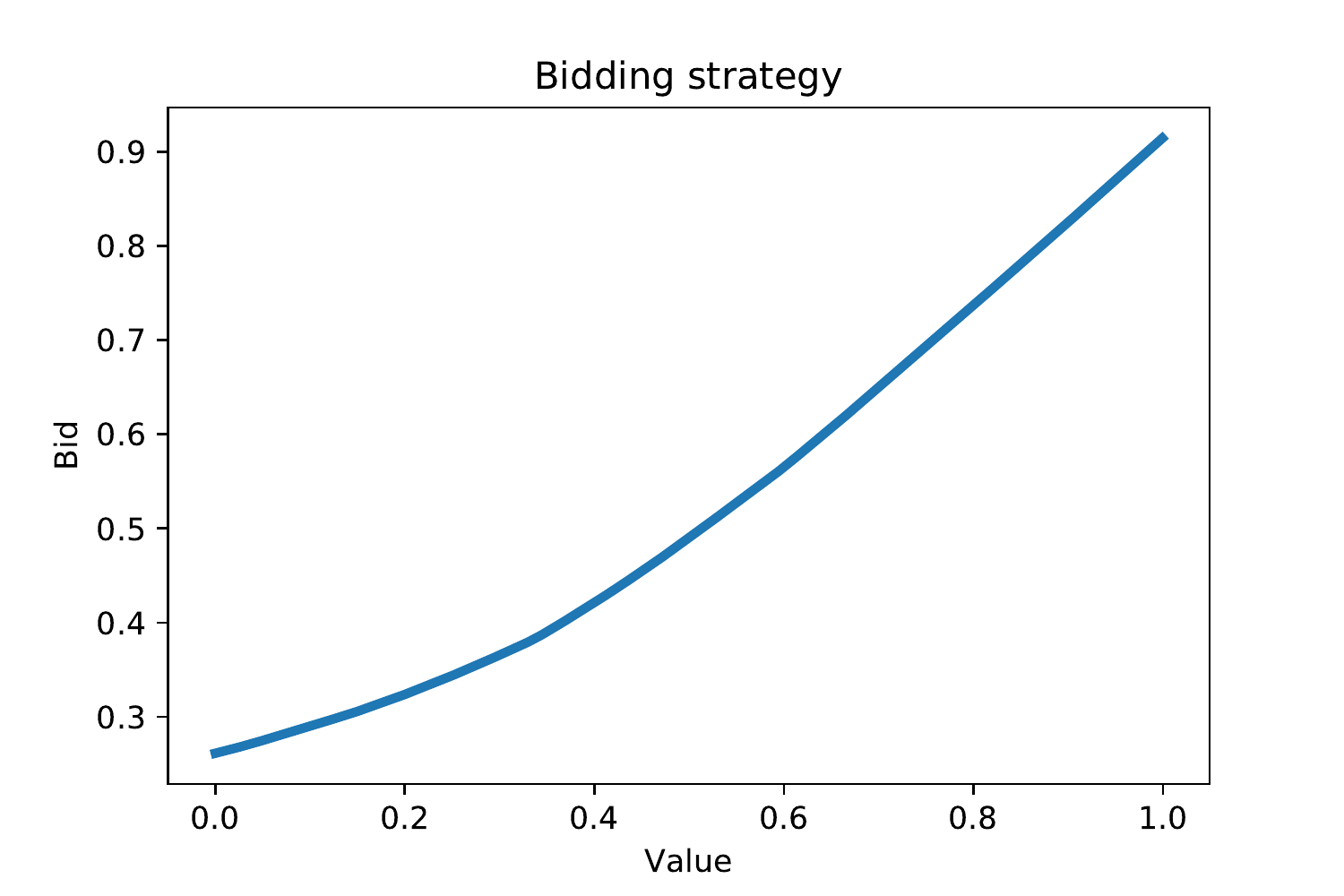}&
\includegraphics[width=0.33\textwidth, height=0.19\textheight]{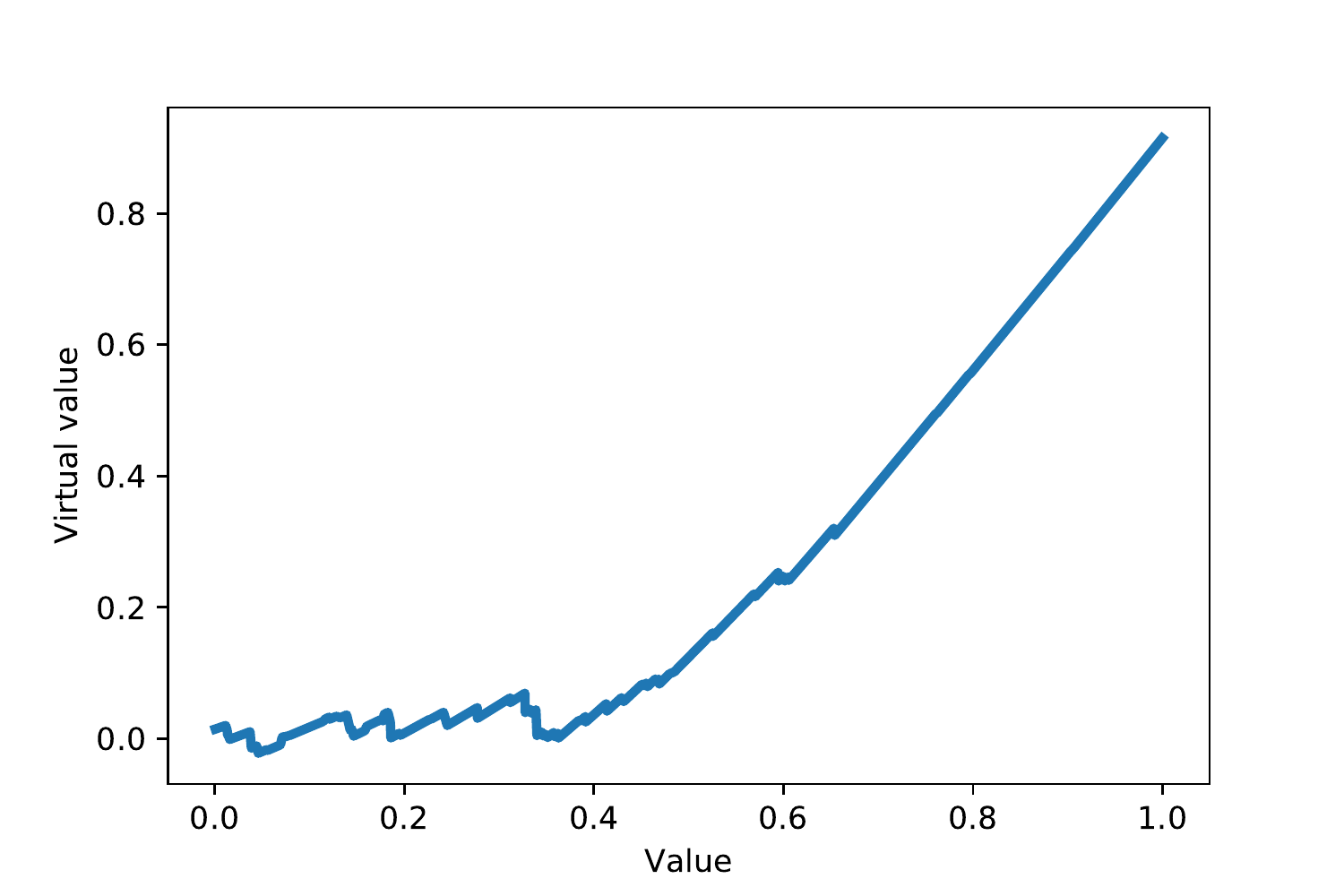}
 \end{tabular}
\caption{\textbf{Left:}Evolution of the objective during the learning. \textbf{Middle:} Bidding strategy at the end of the training. \textbf{Right:} Virtual value at the end of the training. The objective correspond to one eager second price auction with three opponents.}
\label{fig:eager_results}
\end{figure}

According to several papers, the eager second price auction with monopoly price is widely used because it leads to higher revenue for the seller in practice when facing asymetric bidders. The eager second price auction consists in running a second price auction but only among bidders that are above their personalized reserve price. The objective function is very similar to the one of the lazy second price auction except that the winning distribution is different below the reserve price of the other bidders. Indeed, if all bidders are below their reserve price, the strategic that bids above his monopoly price is sure to win.

The objective we optimize in our code is the following. If we note $r$ the common reserve price of the other bidders, with $G_i(x)=\prod_{j=2}^K F_{j}(r)\indicator{x\leq r}+\prod_{j=2}^K F_{j}(x)\indicator{x\geq r}$, the objective can be written as
 \begin{equation*}
U_r(\beta_i) = \mathbb{E}\bigg((X_i-h_{\beta_i}(X_i))G_i(\beta(X_i))\textbf{1}(h_{\beta_i}(X_i) \geq 0)\bigg)\;,
\end{equation*}
The following figure was derived with assuming a uniform distribution for all bidders.
\end{document}